\theoremstyle{plain}
\newtheorem{theorem}{Theorem}[section]
\newtheorem{lemma}[theorem]{Lemma}
\theoremstyle{remark}
\newtheorem{definition}[theorem]{Definition}
\newcommand{\bm}[1]{\boldsymbol{#1}}
\newcommand{\Rb}{\mathbb{R}}
\def\ba#1\ea{\begin{align*}#1\end{align*}} 
\def\banum#1\eanum{\begin{align}#1\end{align}} 
\begin{document}

\begin{frontmatter}
\title{Nonparametric function-on-scalar regression \\ using deep neural networks}
\runtitle{Nonparametric function-on-scalar regression using deep neural networks}

\begin{aug}
\author[A]{\fnms{Kazunori}~\snm{Takeshita}\ead[label=e1]{takeshita@sigmath.es.osaka-u.ac.jp}} \and
\author[A,B]{\fnms{Yoshikazu}~\snm{Terada}\ead[label=e2]{terada.yoshikazu.es@osaka-u.ac.jp}},
\address[A]{Graduate School of Engineering Science,
Osaka University\printead[presep={,\ }]{e1}}

\address[B]{Center for Advanced Integrated Intelligence Research,
RIKEN \printead[presep={,\ }]{e2}}
\end{aug}

\begin{abstract}
We focus on nonlinear Function-on-Scalar regression, 
where the predictors are scalar variables, and the responses are functional data.
Most existing studies approximate the hidden nonlinear relationships using linear combinations of basis functions, 
such as splines. 
However, in classical nonparametric regression, 
it is known that these approaches lack adaptivity, 
particularly when the true function exhibits high spatial inhomogeneity or anisotropic smoothness.
To capture the complex structure behind data adaptively, 
we propose a simple adaptive estimator based on a deep neural network model.
The proposed estimator is straightforward to implement using existing deep learning libraries, 
making it accessible for practical applications.
Moreover, we derive the convergence rates of the proposed estimator for the anisotropic Besov spaces, which consist of functions with varying smoothness across dimensions.
Our theoretical analysis shows that the proposed estimator mitigates the curse of dimensionality 
when the true function has high anisotropic smoothness, as shown in the classical nonparametric regression.
Numerical experiments demonstrate the superior adaptivity of the proposed estimator, 
outperforming existing methods across various challenging settings.
Moreover, the proposed method is applied to analyze ground reaction force data in the field of sports medicine, demonstrating more efficient estimation compared to existing approaches.
\end{abstract}

\begin{keyword}
\kwd{function data analysis}
\kwd{deep learning}
\kwd{nonparametric regression}
\end{keyword}

\end{frontmatter}

\section{Introduction}
\leavevmode

In recent years, advancements in measurement devices have made it common to encounter data, 
such as spectral data, 
that capture the continuous variations for each subject at several discrete time points.
Functional Data Analysis (FDA) is a methodology that treats such data as partial observations of latent curves recorded at discrete time points.
This perspective facilitates analysis even in cases where the observation times or the number of observations vary among subjects.
Numerous extensions of classical multivariate analysis methods into the FDA framework have been proposed. For a comprehensive introduction, we refer the reader to \cite{Ramsay2005functiondataanalysis} and \cite{Wang2016functionbook}.

Functional regression has gained attention as one of the most important areas in the FDA.
Regression models can be classified into three types based on the nature of the response and predictor variables \citep{Ramsay2005functiondataanalysis}: 
(i) the function-on-function (FOF) regression, 
where both response and predictors are functional; 
(ii) the scalar-on-function (SOF) regression, 
with the scalar response and functional predictors; and (iii) the function-on-scalar (FOS) regression, 
where the response is functional, and the predictors are scalar. 
This study focuses on the third type, the FOS regression.

Let $Y$ be the functional response on a time interval $I$ and $X_1,\ldots, X_d$ be the multiple scalar predictors. 
The FOS regression model is formulated as follows:
\begin{equation*}
    Y(t) = f^{\circ} (X_1, \ldots, X_d, t) + \xi (t), \quad t \in I,
\end{equation*}
where $f^{\circ}$ is the true regression function and $\xi$ is the random error function.

The FOS regression has been applied in various contexts within biological sciences. 
For example, it has been utilized in physical activity (PA) research 
\citep{goldsmith2016assessing, kowal2020bayesian, ghosal2023variable}, genome studies \citep{barber2017function, FAN2017167,parodi2018simultaneous}, and Alzheimer's disease research \citep{cai2021Robust}.
As an application of this paper, 
we will address a prediction task using ground reaction force (GRF) data, 
commonly used in sports medicine. 
The details of this application are described in Section \ref{sec : Application}.

Many studies have focused on linear models for predictor variables due to their statistical simplicity and interpretability (e.g., 
\citealp{faraway1997regression, Ramsay2005functiondataanalysis}). 
However, while the assumption of linearity might be restrictive in many real-world datasets, 
the development of nonlinear FOS regression models has been relatively underexplored.
\cite{scheipl2015functional} introduced an extensive framework for functional additive mixed models (FAMM).
The FAMM possesses a high level of flexibility, allowing for interactions or nested structures of predictors, and includes the varying coefficient single-index model (e.g., \citealp{luo2016single, li2017functional}) as a special case.
This approach is highly versatile for constructing basis functions on such a joint space.
However, the FAMM faces two main issues.
Firstly, an explicit specification of the model structure is required. 
Modern datasets often exhibit complex structures, making it difficult to define the model structure based on reasonable assumptions.
Second, relaxing the model assumptions to reduce the risk of misspecification can lead to an exponential increase in the number of parameters relative to the dimensionality, 
potentially reducing estimation efficiency.

To address these limitations, \cite{luo2023nonlinear} developed a more flexible method based on neural networks with a single hidden layer, demonstrating the universal approximation property of their model. 
However, this method presents two main challenges. 
First, since the model is essentially composed of a shallow network, it might struggle to efficiently obtain expressive power. 
In various settings, several studies have shown that shallow networks require significantly more parameters than deep networks to approximate functions (e.g., \citealp{cohen2016expressive, Daniely2017DepthSF, eldan2016power, mhaskar2016deep, safran2017depth}). 
For instance, \cite{safran2017depth} demonstrated that even for a simple function class, such as the indicator on a sphere, a 3-layer network can easily approximate it with a polynomial order of width relative to the input dimension, whereas a 2-layer network requires at least an exponential order of width.
Second, this issue is also common to the FAMM, as estimation based on basis functions lacks adaptivity, as will be discussed later.
The proposed estimation function includes integration, which makes precise modeling impractical. Therefore, the model is discretized in both the time and integration directions and represented using the cubic B-spline basis. 
Additionally, while Theorem 3 in \cite{luo2023nonlinear} derives the convergence rate of the estimator, the proof contains a critical flaw, 
leaving the correct convergence rate unestablished (see Appendix \ref{sec: Appendix D}).

Many existing methods rely on linear estimators with respect to the response variables, 
such as kernel estimators with fixed bandwidths and regression splines with equidistant knot points. 
In classical nonparametric regression, these estimators are non-adaptive and can only attain sub-optimal rates of convergence when the smoothness of the true function is highly spatially inhomogeneous (e.g., \citealp{donoho1998minimax, ZHANG2002256}). 
Although \cite{luo2023nonlinear} introduces a nonlinear estimator, 
it relies on splines with equidistant knot points, which is expected to inherit this limitation.

Deep neural networks (DNNs) offer a promising alternative to overcome these limitations. 
DNNs have achieved significant breakthroughs in solving complex problems that were previously intractable using traditional machine-learning methods. Their adaptivity and flexibility have driven remarkable advancements across diverse fields, including computer vision and speech recognition (e.g., \citealp{nassif2019speech, voulodimos2018deep}). 
Beyond these, DNN-based technologies have significantly influenced natural language processing (e.g., large language models like GPT) \citep{brown2020language}, as well as cutting-edge image synthesis and recognition tasks powered by diffusion models \citep{rombach2022high}.

In recent years, theoretical advancements in understanding the superior performance of deep neural networks (DNNs) have been made across various settings. Numerous studies have investigated the conditions under which DNNs outperform linear estimators (\citealp{imaizumi2019deep, schmidt2020Nonpara, suzuki2018adaptivity, suzuki2021deep, hayakawa2020minimax}). 
For instance, \cite{suzuki2021deep} demonstrated that DNNs achieve superior performance compared to linear estimators, such as kernel ridge regression and spline methods, when the true function is in the Besov space and has high spatial inhomogeneity. 
Similarly, \cite{schmidt2020Nonpara} showed that DNNs outperform wavelet-based methods when the true function can be represented as a composite function in the H\"{o}lder space.
Moreover, several studies have demonstrated that DNNs can avoid the curse of dimensionality in various settings. (e.g., \citealp{suzuki2018adaptivity, schmidt2019deep, schmidt2020Nonpara, imaizumi2019deep, bauer2019deep, chen2022nonparametric, suzuki2021deep}).

Building on these insights, we propose a simple nonlinear FOS regression method utilizing DNNs. 
The proposed method offers two key advantages.
First, our method naturally incorporates the architecture of existing deep neural networks into the FOS regression framework, enabling straightforward implementation.
Second, unlike traditional non-adaptive approaches, the proposed method automatically constructs adaptive estimators through standard stochastic optimization techniques, even in complex settings where the true function exhibits spatial inhomogeneity or anisotropic smoothness. 
To establish the theoretical properties of the proposed estimator, we assume that the true function is in the anisotropic Besov space (Details will be discussed in Section \ref{sec : Theoretical properties}).
Using the theoretical results of \cite{suzuki2021deep}, we demonstrate the convergence rate of the proposed estimator.
Our results demonstrate that similar to classical nonparametric regression, the proposed method effectively avoids the curse of dimensionality when the true function has anisotropic smoothness.
Furthermore, numerical experiments show that the proposed method outperforms existing approaches regarding efficiency and estimation accuracy under such scenarios.

This paper is organized as follows: Section \ref{sec : FOS-DNN model} presents the proposed nonparametric FOS regression model using deep learning. Section \ref{sec : Theoretical properties} introduces the anisotropic Besov space and derives the convergence rate of the proposed estimator. Sections \ref{sec : Simulation} and \ref{sec : Application} illustrate the performance of the proposed estimator compared to existing methods through simulation studies and real-data analysis, respectively.

\textbf{Notation} \par 
Let $\mathbb{Z}_+$ denote the set of non-negative integers. 
Write $\mathbb{Z}_+^d = \{(z_1, \ldots, z_d) \ | \ z_i \in \mathbb{Z}_+\}$, 
$\mathbb{R}_+ = \{x \geq 0 \ | \ x \in \mathbb{R}\}$, 
and $\mathbb{R}_{++} = \{x > 0 \ | \ x \in \mathbb{R}\}$.
Let $A^\top$ denote the transpose of the matrix $A$.
For $\boldsymbol{\alpha} = (\alpha_1, \ldots,\alpha_d)^{\top} \in \mathbb{R}^d\;(d\in \mathbb{N})$,
we define 
$|\boldsymbol{\alpha}| = \sum_{j=1}^D |\alpha_j|^2$.
For $x \in \mathbb{R}$, $\lfloor x \rfloor$ is the largest integer less than or equal to $x$,
 $\lceil x \rceil$ is the smallest integer greater than $x$ and $(x)_+$ returns the value of $x$ if $x$ is non-negative and returns $0$ if $x$ is negative.
For two sequences $(a_n)$ and $(b_n)$,
if there exists a constant $C$ such that $a_n \leq C b_n$ for all $n\in \mathbb{N}$,
we write $a_n \lesssim b_n$.
If $a_n \lesssim b_n$ and $b_n \lesssim a_n$, we write $a_n \approx b_n$.

Let $\mathcal{D}$ be a domain of the functions. For a measurable function $f : \mathcal{D} \rightarrow \mathbb{R}$,
define the $L_p$- and $L_\infty$-norms as 
$
    \|f\|_{L^p(\mathcal{D})} = \left(\int_{\mathcal{D}} |f(\bm{x})|^p\,d\bm{x}\right)^{1/p} \;(0 < p < \infty)
$
and 
$
\|f\|_{L^\infty (\mathcal{D})} = \sup_{\bm{x} \in \mathcal{D}} |f(\bm{x})|.
$
For a probability density function $p_{\boldsymbol{X}}$ on $\mathcal{D}$,
we define 
$
    \|f\|_{L^p(p_{\boldsymbol{X}}(\mathcal{D}))} = (\int_{\mathcal{D}} |f(\bm{x})|^p p_{\boldsymbol{X}}(\bm{x})\,d\bm{x} )^{1/p}
$ $(0 < p < \infty)$, and 
$
\|f\|_{L^{\infty}(p_{\boldsymbol{X}}(\mathcal{D}))} =\sup_{\boldsymbol{x} \in \mathcal{D}} |f(\bm{x})p_{\boldsymbol{X}}(\bm{x})|.
$
\section{FOS-DNN model} \label{sec : FOS-DNN model}
\leavevmode

Firstly, we describe the problem setting in this work.
Write $D = d+1$ and $\Omega = [0,1]^{D}$.
We consider the following nonparametric FOS regression model:
\begin{equation}
    Y_i(t) = f^{\circ}(X_{i1}, \ldots, X_{id}, t) + \xi_i(t) \quad(0 \leq t \leq 1), \label{eq:nonparaFOS}
\end{equation}
where $f^\circ : \Omega \rightarrow \mathbb{R}$ is the unknown true function, $\boldsymbol{X}_i = (X_{i1}, \ldots, X_{id})^\top$ is generated from a probability distribution $P_{\boldsymbol{X}}$ on $[0,1]^d$, 
and $\xi_i$ is a continuous
sub-Gaussian process on $[0,1]$ with mean $0$ and variance proxy $\sigma^2$. 
More precisely, we assume that the process $\xi_i$ is continuous and
that 
$
\mathbb{E}\left[\exp\left(s \int_0^1 \xi_i(t) f(t) \, dt\right)\right] \leq \exp(s^2 \sigma^2 / 2)
$
for any $f \in L^2([0,1])$ with $\|f\|_{L^2([0,1])} = 1$ and for any $s \in \mathbb{R}$.
We assume that the training data $D_n = \{(\boldsymbol{X}_1, Y_1), \ldots, (\boldsymbol{X}_n, Y_n)\}$ are independently and identically distributed. 
The goal of the regression problem is to estimate the unknown true function $f^{\circ}$ based on the observed data $D_n$.

Now, we propose an estimation method for the true function $f^{\circ}$ via deep learning, referred to as the \textit{FOS-DNN} model. 
Specifically, we estimate the true function $f^{\circ} : [0,1]^d \times [0,1] \rightarrow \mathbb{R}$ using a feedforward neural network architecture with predictors $\bm{X}$ and time $t$ as the input (see Figure \ref{fig:proposed method}).
To define the FOS-DNN model mathematically, we specify an activation function $\eta : \mathbb{R} \rightarrow \mathbb{R}$. 
In this study,  we use the ReLU activation function, defined as $\eta (x) = \max \{x, 0\}$. 
For $\boldsymbol{x}=(x_1,\dots,x_p)^\top\in \Rb^p$, the activation function $\eta$ is applied element-wise, 
that is, $\eta(\boldsymbol{x}) = (\eta(x_1),\dots,\eta(x_p))^\top$.
Let $\bm{z} = (x_1, \ldots, x_d, t)^\top$ be the input, and let $L$ and $W$ denote the depth and width of the neural network, respectively. 
Additionally, we denote the width vector $\bm{c} = (c_1, \ldots, c_L) \in \mathbb{N}^{L}, $ where $c_1 = D, c_L = 1 \;\text{and}\; c_l = W \;\text{for}\; 1 < l < L$.
The FOS-DNN model, represented as $\Phi(L, W)$, is defined as any function of the following form:
\begin{equation}
     f : \mathbb{R}^{D} \rightarrow \mathbb{R}, \; \bm{z} \mapsto f(\bm{z}) = (\boldsymbol{\mathcal{W}}^{(L)} \eta(\cdot) + \boldsymbol{b}^{(L)}) \circ \cdots \circ(\boldsymbol{\mathcal{W}}^{(1)} \boldsymbol{z} + \boldsymbol{b}^{(1)}), \label{eq:FOS-DNN nottheo}
\end{equation}
where $\boldsymbol{\mathcal{W}}^{(l)} \in \mathbb{R}^{c_{l+1} \times c_l}$ is a weight matrix, and $\bm{b}^{(l)} \in \mathbb{R}^{c_l}$ is a bias vector. 
The set of network parameters $\Theta_{L,W}$ is defined as
\begin{equation*}
    \Theta_{L,W} = \left\lbrace \left( \text{vec}(\bm{\mathcal{W}}^{(1)})^\top, \bm{b}_1^\top, \ldots, \text{vec}(\bm{\mathcal{W}}^{(L)})^\top,  \bm{b}_L^\top \right)^\top : \boldsymbol{\mathcal{W}}^{(i)} \in \mathbb{R}^{c_{l+1} \times c_l},  \bm{b}_l \in \mathbb{R}^{c_l}, \ 1 \leq l \leq L \right\rbrace.
\end{equation*}
where $\text{vec}(\cdot)$ is the vectorization operator which stacks rows of a matrix $\bm{\mathcal
{W}} \in \mathbb{R}^{p \times q}$ into a column vector $\bm{\omega} \in \mathbb{R}^{pq}$.    

\begin{figure}
    \centering
    \includegraphics[width=0.5\linewidth]{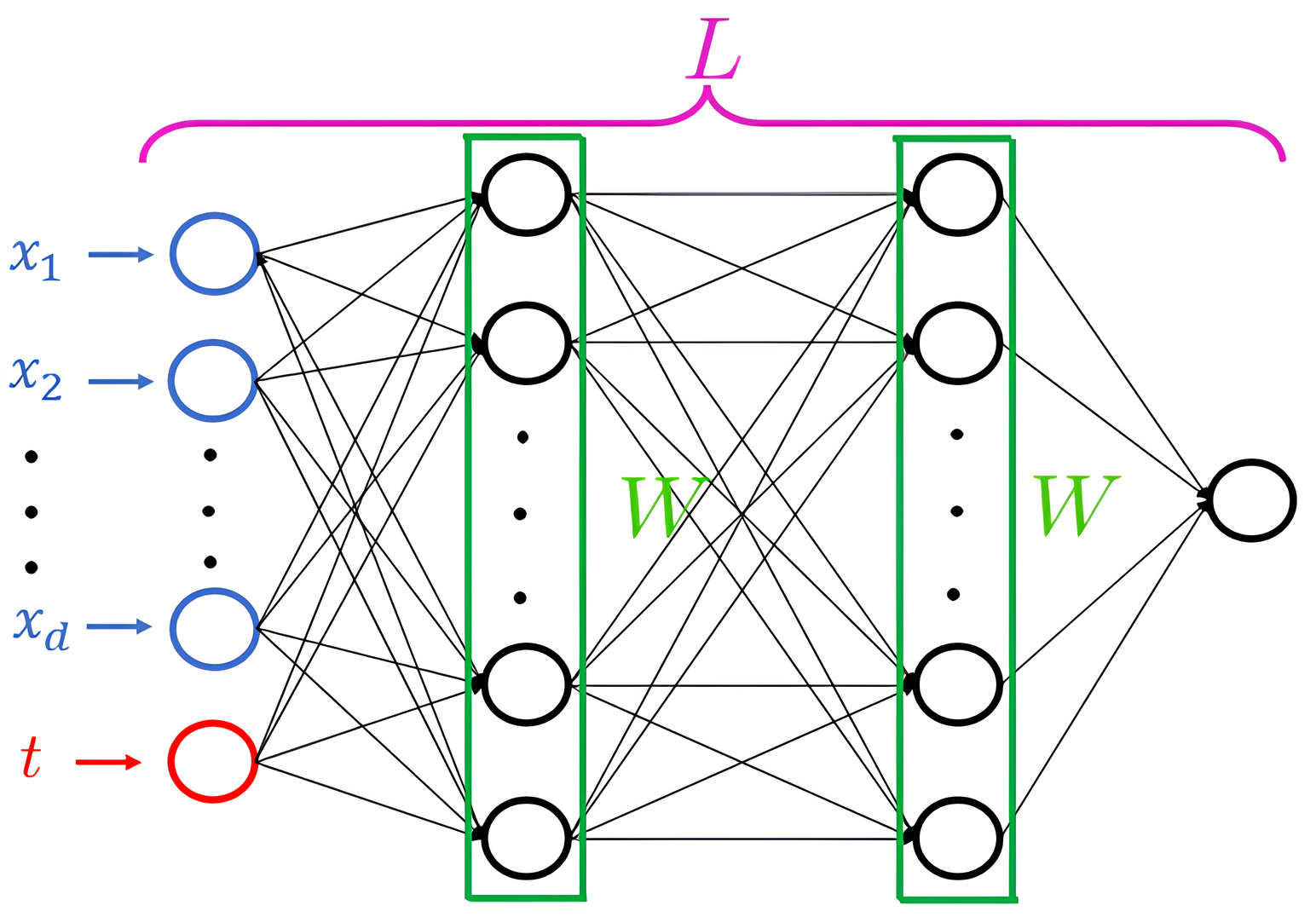}
    \caption{Illustration of \textit{FOS-DNN} model. The model takes predictor variables $\boldsymbol{X}$ and $t$ as inputs and estimates the true function $f^\circ$  using a feedforward deep neural network with specified depth $L$ and width $W$.}
    \label{fig:proposed method}
\end{figure}

Next, we outline the estimation procedure of FOS-DNN model. 
The depth $L$ and width $W$ of the network are fixed prior to estimation. 
Although computationally expensive, they can be selected using methods such as cross-validation.
Given a training data $D_n = \{(\boldsymbol{X}_i, Y_i)\}_{i=1}^n$, 
the network parameters are estimated by solving the following least-squares optimization problem:
\begin{align}
    \hat{\theta}_{L,W} 
    = \underset{\theta \in \Theta_{L,W}}{\operatorname{argmin}}  \frac{1}{n} \sum_{i=1}^n \int_0^1 \{Y_i(t) - f_\theta(\boldsymbol{X}_i, t)\}^2 dt. \label{eq:objectivefunction natural}
\end{align}
where $f_\theta$ is an element of $\Phi(L,W)$ corresponding to the network parameter $\theta$.

However, in practice, the response variable $\{Y_i(t)\}_{0 \leq t \leq 1}$ is not continuously observed. 
In this study, we consider sufficiently dense discrete observations where the observation points may vary among subjects.
This setting is more general than the fixed observation point in many existing studies.
Specifically, we assume that $\{Y_i(t)\}_{0 \leq t \leq 1}$ for $i = 1, \ldots, n$ is observed discretely at $\boldsymbol{t}_i = (t_{i1}, \ldots, t_{iN_i})^\top$ with $0 = t_{i0} \leq t_{i1} < \cdots < t_{iN_i} \leq 1$. Under $N_i$ is sufficiently large for each $i$, the loss function in (\ref{eq:objectivefunction natural}) can be approximated as follows:
\begin{align}
    \hat{\theta}_{L,W} &\simeq \underset{\theta \in \Theta_{L,W}}{\operatorname{argmin}} \   \frac{1}{n} \sum_{i=1}^n \sum_{i1 \leq n_i \leq iN_i}\{Y_i(t_{n_i}) - f_\theta(\boldsymbol{X}_i, t_{n_i})\}^2 (t_{n_i} - t_{n_{i-1}}). \label{eq:approxobj} 
\end{align}
In practice, we use stochastic gradient descent methods to obtain the solution of the optimization problem (\ref{eq:approxobj}). 
Due to the highly non-convex nature of this optimization problem, 
many solutions may become trapped in local minima. 
However, many studies have shown that the local minima obtained by large-size networks exhibit good performance.
For example, \cite{choromanska2015loss} showed that many local minima are located within a specific band whose lower bound corresponds to the loss at the global minimum and that the number of local minima outside this band decreases exponentially as the size of the network increases. 
Additionally, \cite{du2019gradient} proved that gradient descent converges to the global minimum when the network size is sufficiently large.

When the training data size is small, the minimizer obtained from the training data may lead to overfitting.
Regularization is a widely recognized and effective theoretical and practical method to address this problem. Explicit techniques, such as $L_1$- or $L_2$-regularization applied directly to the loss function (\ref{eq:approxobj}), and implicit approaches, such as batch normalization and dropout integrated into the neural network architecture, are commonly used and can also be applied to the FOS-DNN model.

\section{Theoretical properties of the FOS-DNN estimator} \label{sec : Theoretical properties}
\leavevmode

In this section, we derive the specific convergence rates of FOS-DNN estimator. 
In the theoretical part of this study, we assume that all parameters of the FOS-DNN are bounded by a positive constant $B$. Additionally, we impose a constraint that all parameters' total number of nonzero components does not exceed $S$.
The class of FOS-DNN models with the above constraints is defined as
\begin{align*}
     \Phi(L,W,S,B) 
     := \left\lbrace f \; \text{of the form (\ref{eq:FOS-DNN nottheo})} : \sum_{l=1}^L (\|\boldsymbol{\mathcal{W}}^{(l)}\|_0 + \|\boldsymbol{b}^{(l)}\|_0) \leq S,  \max_{l} \|\boldsymbol{\mathcal{W}}^{(l)}\|_\infty \lor \|\boldsymbol{b}^{(l)}\|_\infty \leq B \right\rbrace,
\end{align*}
where $\|\cdot\|_0$ is the total number of non-zero components of the matrix, and $\|\cdot\|_\infty$ is the maximum of the absolute values of the components of the matrix. 
For the sake of simplicity, we denote $\Phi(L, W, S, B)$ briefly by $\Phi$.\par

Next, we construct a FOS-DNN estimator for the true function $f^\circ$. 
The theory of the FDA generally assumes that functional data are either ideally observed continuously or observed discretely.
For discrete observations, previous studies have shown that the balance between the number of observation points and the sample size is closely related to the asymptotic properties for estimating mean functions and the covariances functions, which are fundamental problems in the FDA (e.g., \citealp{zhang2016sparse}).
Discrete observations reflect realistic problem settings and are crucial for a detailed analysis of estimator properties. 
However, they make theoretical analysis difficult due to their complexity.
Therefore, in the theoretical part of this study, for simplicity, we assume that the response variables $\{Y_i\}_{i=1}^n$ are continuously observed on $[0,1]$. Additionally, suppose that a global minimizer within $\Phi$ is obtained exactly, without any approximation errors, as follows:
\begin{align*}
    \hat{f}_n = \underset{\bar{f} \in \bar{\Phi}_F}{\operatorname{argmin}} \frac{1}{n} \sum_{i=1}^n \|Y_i - \bar{f}(\boldsymbol{X}_i, \cdot)\|_{L^2([0,1])}^2
    =  \underset{\bar{f} \in \bar{\Phi}_F}{\operatorname{argmin}}  \frac{1}{n} \sum_{i=1}^n \int_0^1 \{Y_i(t) - \bar{f}(\boldsymbol{X}_i, t)\}^2\,dt,
\end{align*}
where $\bar{\Phi}_F$ is the \textit{clipping} of $\Phi$ defined by $\bar{\Phi}_F = \{\bar{f} = \min \{\max \{f, -F\}, F\} : f \in \Phi\}$ for a constant $F > 0$, which is attained by ReLU unit. The clipping is necessary to ensure the boundedness of the estimator.

The properties of the class of true functions play a crucial role in determining the convergence rates of the estimator. 
We assume that the true function $f^{\circ}$ is in an anisotropic Besov space, which is introduced by \cite{suzuki2021deep}.
This space includes functions with direction-dependent smoothness and generalizes well-known function spaces such as the H\"{o}lder, Sobolev, and Besov spaces. 
The flexibility of the anisotropic Besov space allows for a more realistic and adaptable framework, 
particularly in high-dimensional settings. 
In practical situations, it is common that only a subset of predictor variables has a significant impact on the response variable, 
exhibiting substantial variation, including discontinuities, while most others remain relatively smooth.

For a function $f \in L^p(\Omega)\;(p \in (0, \infty])$, the $r$\textit{-th modulus of smoothness} of $f$ is defined by 
    \begin{equation*}
        \omega_{r, p}(f, \boldsymbol{s}) := \sup_{\boldsymbol{h} \in \mathbb{R}^D : |h_i| \leq s_i} \|\Delta_{\boldsymbol{h}}^r (f)\|_p, \quad \text{for} \ \boldsymbol{s} = (s_1, \ldots, s_d, s_{D}) \in \Rb_{++}^{D},
    \end{equation*}
where
    \begin{equation*}
        \Delta_{\boldsymbol{h}}^r(f)(\boldsymbol{x}) := 
         \begin{cases}
            \displaystyle \sum_{j=0}^r \binom{n}{k} (-1)^{r-j} f(\boldsymbol{x} + j\boldsymbol{h}) & \text{if }\boldsymbol{x} \in \Omega, \boldsymbol{x} + r\boldsymbol{h} \in \Omega,\\
            0       & \text{otherwise}.
        \end{cases} 
    \end{equation*}
For a given $0 < p, q \leq \infty, \ \boldsymbol{\beta} = (\beta_1, \ldots, \beta_d, \beta_D)^\top, \ r = \max_i \lfloor \beta_i \rfloor$, let the seminorm $|\cdot|_{B_{p,q}^{\boldsymbol{\beta}}}$ be 
    \begin{equation*}
            |f|_{B_{p,q}^\beta} := 
            \begin{cases}
                \displaystyle \left( \sum_{k=0}^\infty [2^k \omega_{r,p}(f, (2^{- k/\beta_1}, \ldots, 2^{-k/\beta_d}, 2^{-k/\beta_D}))]^q \right)^{1/q} & (q < \infty) \\
                \sup_{k \geq 0} 2^k \omega_{r,p}(f, (2^{- k/\beta_1}, \ldots, 2^{-k/\beta_d}, 2^{-k/\beta_D}))) & (q=\infty)
            \end{cases}
            .
    \end{equation*}
    The norm of the anisotropic Besov space $B_{p,q}^{\boldsymbol{\beta}} (\Omega)$ is defined by $\|f\|_{B_{p,q}^{\boldsymbol{\beta}}} : = \|f\|_p + |f|_{B_{p,q}^{\boldsymbol{\beta}}}$, and $B_{p,q}^{\boldsymbol{\beta}} := \{f \in L^p(\Omega) \ | \ \|f\|_{B_{p,q}^{\boldsymbol{\beta}}} < \infty\}$. \par
    
    Additionally, for a $\boldsymbol{\beta} = (\beta_1, \ldots, \beta_d, \beta_D)^{\top} \in \mathbb{R}_{++}^D$, we define
\begin{equation*}
    \underline{\beta} := \min_{1\le i \le D} \beta_i,  \quad \overline{\beta} := \max_{1\le i \le D} \beta_i, \quad \text{and} \quad \tilde{\beta} := \left(\sum_{i=1}^D 1/\beta_j \right)^{-1}.
\end{equation*}
The parameter $\boldsymbol{\beta}$ represents the smoothness in each coordinate direction. 
When $\beta_i$ is large, the function is smooth in the $i$-th coordinate direction.
In this study, we denote $\beta_1, \ldots, \beta_d$ as the smoothness parameters corresponding to the predictors $X_1, \ldots, X_d$, 
and $\beta_D$ as the smoothness parameter for time $t$. 
The parameter $p$ controls the \textit{spatial inhomogeneity} of the smoothness, 
where a smaller value of $p$ implies spatially inhomogeneous smoothness, leading to features such as spikes and jumps. 
If the smoothness parameter satisfies $\beta_i > D/p$, the function is continuous in the direction of the $i$-th axis. However, if $\beta_i < D/p$, it is no longer continuous in that direction\footnote{\cite{luo2023nonlinear} assume the continuity of the true function. Therefore, the setting of our study is broader.}. The connections between Besov spaces and other key function spaces, such as H\"{o}lder spaces, are discussed in \cite{triebel2011entropy}.

The performance of $\hat{f}_n$ is evaluated using the mean squared estimation error
\[
E_{D_n} \left[\int_0^1 \|f^{\circ}(\cdot, t) - \hat{f}_n(\cdot, t)\|_{L^2(p_{\boldsymbol{X}}([0,1]^d))}^2 dt\right],
\]
where $E_{D_n}[\cdot]$ denotes the expectation with respect to the training data $D_n$. 
The following theorem provides an upper bound on the mean squared error between the true and estimator for the nonparametric FOS regression with deep learning. The proof is provided in Appendix \ref{sec : Appendix A}. 

\begin{theorem} 
\label{thm1}
    Suppose that $0< p, q\leq \infty$ and that $\boldsymbol{\beta} \in \mathbb{R}_{++}^D$ satisfy $\tilde{\beta} > (1/p - 1/2)_+$.
    Let $m \in \mathbb{N}$ satisfies $0 < \overline{\beta} < \min \{m, m-1+1/p\}$. 
    We define $\delta := (1/p - 1/2)_+, \ \nu:= (\tilde{\beta} - \delta) / (2\delta),\  W_o(D) := 6Dm(m+2) + 2D$. 
    For $N \in \mathbb{N}$, let
    \begin{align*}
        &L_1(D):= 3 + 2 \left\lceil \log_2 \left(\frac{3^{D \lor m}}{\epsilon c_{(D,m)}} \right) + 5 \right\rceil \lceil \log_2 (D \lor m) \rceil, \quad W_1(D):= NW_0, \\
        &S_1(D) := [(L-1)W_0^2 + 1]N, \quad B_1(D) := O(N^{D(1+\nu^{-1})(1/p - \tilde{\beta})_+}),
    \end{align*}
    where $\epsilon = N^{- \tilde{\beta}} (\log (N))^{-1}$, and $c_{(D,m)}$ is a constant depending only on $D$ and $m$.\par 
    Assume that the marginal distribution $P_{\boldsymbol{X}}$ has a density $p_{\boldsymbol{X}}$, 
    and there is a constant $R > 0$ such that $\|p_{\boldsymbol{X}}\|_\infty \leq R$. 
    Suppose that $f^{\circ} \in B_{p,q}^{\boldsymbol{\beta}}(\Omega) \cap L^\infty(\Omega)$ and that $\|f^{\circ}\|_\infty \leq F$ for $F \geq 1$.
    Then, letting $(W,L,S,B) = (L_1(D), W_1(D), S_1(D), (D + 1)B_1(D))$ with $N \approx n^{1/(2 \tilde{\beta} + 1)}$, 
    we have
    \begin{equation*}
            E_{D_n}\left\lbrack \int_0^1 \|f^{\circ}(\cdot, t) - \hat{f}_n(\cdot, t)\|_{L^2(p_{\boldsymbol{X}}([0,1]^d))}^2 dt \right\rbrack \lesssim n^{- 2 \tilde{\beta}/(2 \tilde{\beta} + 1)} (\log n)^3.
        \end{equation*}
\end{theorem}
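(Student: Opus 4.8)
The plan is to follow the standard approximation/estimation (bias/variance) decomposition for least-squares estimators over neural network classes, reducing the problem to the anisotropic-Besov analysis of \cite{suzuki2021deep} and treating the two genuinely new features separately: the response is functional, and the noise $\xi_i$ is a sub-Gaussian \emph{process} rather than a scalar sub-Gaussian variable. The natural empirical seminorm is the hybrid $\|g\|_n^2 := \frac{1}{n}\sum_{i=1}^n \int_0^1 g(\boldsymbol{X}_i,t)^2\,dt$ (empirical in the design $\boldsymbol{X}$, exact in time $t$), whose population counterpart $\|g\|_P^2 := \int_0^1 \|g(\cdot,t)\|_{L^2(p_{\boldsymbol{X}}([0,1]^d))}^2\,dt$ is exactly the quantity the theorem bounds, so the target LHS equals $E_{D_n}[\|\hat f_n - f^\circ\|_P^2]$. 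Because $\hat f_n$ minimizes the empirical risk over $\bar\Phi_F$, writing $Y_i = f^\circ(\boldsymbol{X}_i,\cdot)+\xi_i$ and comparing with any fixed $\check f \in \bar\Phi_F$ yields the basic inequality $\|\hat f_n - f^\circ\|_n^2 \le \|\check f - f^\circ\|_n^2 + \frac{2}{n}\sum_{i=1}^n \int_0^1 \xi_i(t)\,(\hat f_n - \check f)(\boldsymbol{X}_i,t)\,dt$, after the $\bar f$-independent term $\frac{1}{n}\sum_i\|\xi_i\|_{L^2([0,1])}^2$ cancels on both sides.

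For the approximation term I would invoke the anisotropic Besov approximation theorem of \cite{suzuki2021deep}: the architecture $(L_1,W_1,S_1,B_1)$ is tailored so that there exists $\check f \in \Phi$ with $\|f^\circ - \check f\|_{L^\infty(\Omega)} \lesssim \epsilon = N^{-\tilde\beta}(\log N)^{-1}$. Since $\|f^\circ\|_\infty \le F$, clipping does not increase this error, so the clipped $\bar{\check f} \in \bar\Phi_F$ satisfies $\|f^\circ - \bar{\check f}\|_P^2 \lesssim \|f^\circ-\check f\|_{L^\infty(\Omega)}^2 \lesssim N^{-2\tilde\beta}(\log N)^{-2}$, which in turn controls $\|\check f - f^\circ\|_n^2$ after a concentration step.

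The main obstacle is the uniform control of the stochastic cross term, where the functional noise enters. The key observation is that, conditionally on the design, for a fixed difference $g=\bar f - \check f$ with $\bar f \in \bar\Phi_F$ the variable $\int_0^1 \xi_i(t)\,g(\boldsymbol{X}_i,t)\,dt$ is sub-Gaussian with variance proxy $\sigma^2\int_0^1 g(\boldsymbol{X}_i,t)^2\,dt$; this follows by applying the process sub-Gaussianity assumption to the unit-norm function $g(\boldsymbol{X}_i,\cdot)/\|g(\boldsymbol{X}_i,\cdot)\|_{L^2([0,1])}$. Averaging over $i$, the cross term is sub-Gaussian with variance proxy $\frac{\sigma^2}{n}\|g\|_n^2$, which is the crucial \emph{self-bounding} structure (fluctuation proportional to distance) that yields the fast squared-error rate rather than a slow $n^{-1/2}$ rate. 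I would then pass to a uniform bound over $\bar\Phi_F$ via an $L^\infty$ covering-number estimate for the network class, $\log \mathcal{N}(\delta, \bar\Phi_F, \|\cdot\|_\infty) \lesssim S L \log(B\,W\,L/\delta)$, combined with a peeling/localization argument, so that the empirical-process supremum is controlled by a term of order $\frac{\sigma^2\, S L \log(\cdots)}{n}$ plus a small fraction of $\|\hat f_n - f^\circ\|_n^2$ that is absorbed into the left-hand side.

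Finally I would convert the empirical seminorm $\|\cdot\|_n$ back to the population seminorm $\|\cdot\|_P$ through a uniform law of large numbers over $\bar\Phi_F$ (valid because all functions are bounded by $F$ and $\|p_{\boldsymbol{X}}\|_\infty \le R$), and then balance the two contributions in expectation. With $S_1 \approx N$ and $L_1 \approx \log N$, the estimation term is of order $N\,\mathrm{polylog}(N)/n$, so the choice $N \approx n^{1/(2\tilde\beta+1)}$ equates it with the approximation term $N^{-2\tilde\beta}$, both at $n^{-2\tilde\beta/(2\tilde\beta+1)}$; tracking the accumulated logarithmic factors (from $\epsilon$, the covering number, and the chaining $\log n$) produces the stated $(\log n)^3$, matching the rate of \cite{suzuki2021deep}.
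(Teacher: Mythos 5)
Your route is essentially the paper's route: the hybrid seminorm $\|g\|_n^2=\frac{1}{n}\sum_{i=1}^n\int_0^1 g(\boldsymbol{X}_i,t)^2\,dt$, the ERM basic inequality, reduction of the functional cross term to scalar sub-Gaussianity by testing the noise process against the unit-norm function $g(\boldsymbol{X}_i,\cdot)/\|g(\boldsymbol{X}_i,\cdot)\|_{L^2([0,1])}$ (this is exactly the construction of the variables $\eta_j$ in step (II) of the paper's \Cref{lemma : main}), an $\|\cdot\|_\infty$-covering bound for $\Phi$ imported from \cite{suzuki2021deep} (\Cref{lemma : suzuki apendix c}), an empirical-to-population transfer, and the balance $N\approx n^{1/(2\tilde{\beta}+1)}$. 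The only organizational difference is that the paper packages the localization via a Schmidt-Hieber-type oracle inequality (covering centers, Bernstein's inequality on a ghost sample, and the elementary implication $|a-b|\le 2\sqrt{a}\,c+d\Rightarrow a\le(1+\epsilon)(b+d)+\epsilon^{-1}(1+\epsilon)^2c^2$, \Cref{lemma : both inequality}) rather than your peeling argument; these are interchangeable devices.

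There is, however, one step that is wrong as written: your claim that there exists $\check f\in\Phi$ with $\|f^\circ-\check f\|_{L^\infty(\Omega)}\lesssim\epsilon=N^{-\tilde{\beta}}(\log N)^{-1}$. Proposition 2 of \cite{suzuki2021deep} (the paper's \Cref{lemma : suzuki prop2}) controls the approximation error in $L^2(\Omega)$, not in sup norm, and this is not a technicality in the present setting: the theorem is stated for anisotropic Besov classes that contain discontinuous functions (when $\beta_i<D/p$; the paper emphasizes in a footnote that, unlike \cite{luo2023nonlinear}, continuity is not assumed), whereas every ReLU network is continuous, so the sup-norm distance from $f^\circ$ to $\Phi$ can stay bounded away from zero for all $N$. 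The fix is exactly what the paper does: take $\check f$ with $\|f^\circ-\check f\|_{L^2(\Omega)}\lesssim N^{-\tilde{\beta}}$, observe that clipping at level $F\ge\|f^\circ\|_\infty$ can only decrease the pointwise error, and pass to the design-weighted norm via $\|f^\circ-\check f\|_P^2\le R\,\|f^\circ-\check f\|_{L^2(\Omega)}^2$, which is precisely where the assumption $\|p_{\boldsymbol{X}}\|_\infty\le R$ enters (in your sup-norm version that assumption would never be needed, a sign that something is off). With this single replacement, the rest of your argument goes through and reproduces the paper's proof, including the $(\log n)^3$ bookkeeping.
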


The convergence rates in \Cref{thm1} are consistent with those in \cite{suzuki2021deep}.
Thus, \Cref{thm1} can be considered as a straightforward extension for the FOS regression setting.
According to \Cref{thm1}, the convergence rates improve as the smoothness of the true function $f^\circ$, which is represented by $\tilde{\beta}$, increases. 
Remarkably, the convergence rates do not directly depend on the dimensionality $d$ of the predictor variables $\boldsymbol{X}$. 
This result suggests that the proposed method may overcome the curse of dimensionality in certain cases.

When the true function is in the isotropic Besov space with uniform smoothness, 
such that $\beta_1 = \cdots = \beta_d = \beta_D = \underline{\beta}$, 
the exponent of $n$ in the convergence rate is given by
\begin{equation*}
    - \frac{2 \underline{\beta}}{2 \underline{\beta} + d + 1}.
\end{equation*}
The dimensionality $d$ appears in the exponent, resulting in the curse of dimensionality. 

Conversely, we consider the case where the true function is included in the anisotropic Besov space, particularly when only a subset of predictor variables exhibits significant variation with respect to the response variable. 
Specifically, we assume that $\beta_1 = \underline{\beta}$, while $\beta_2 = \cdots = \beta_d = \beta_D = \overline{\beta}$, with $\overline{\beta} = \alpha \underline{\beta}$ for some $\alpha > 1$.
Then, the exponent of $n$ in the convergence rate is given by
\begin{equation*}
    - \frac{2 \underline{\beta}}{2 \underline{\beta} + d / \alpha + 1}.
\end{equation*}
In this case, the influence of dimensionality $d$ is reduced by a factor of $1/\alpha$. This implies that the FOS-DNN method could avoid the curse of dimensionality when the smoothness of the true function is highly anisotropic. 
In high-dimensional settings, 
not all variables affect the output, and thus
it is reasonable for the true function to be not sensitive to perturbations along many input directions.

In the classical nonparametric regression setting, 
\cite{suzuki2021deep} demonstrated that the convergence rate derived in \Cref{thm1} is minimax optimal. 
Notably, when the spatial inhomogeneity is high ($p<2$), 
deep learning outperforms any linear estimators in the sense of the minimax optimality. 
This suggests that deep learning can adaptively detect both smooth and non-smooth regions of the true function from the data, enabling efficient estimation. 
Conversely, linear estimators using basis functions with a fixed spatial scale cannot adjust the resolution of the basis functions based on the data, making it difficult to capture local variations in smoothness effectively and thus preventing them from achieving the minimax optimal rate.

\section{Simulation} \label{sec : Simulation}
\leavevmode

In this section, we evaluate the performance of \textit{FOS-DNN} model through simulations based on various nonlinear FOS models. 
We compare our method with the \textit{FUA} \citep{luo2023nonlinear}, the FAMM \citep{scheipl2015functional}, and 
the linear FOS method \textit{fosr} from the R package \texttt{refund}.
For the FAMM, we focus on two specific forms, 
considering the practical difficulties in specifying the model precisely.
The first is the simplest additive model, referred to as \textit{famm.ad}, 
which estimates the true function in the form $f(\bm{X},t) = \sum_{j=1}^d f_j (X_j, t)$.
The second is the most flexible model, referred to as \textit{famm.nl}, 
which directly estimates the true function without any specific structure.
However, as validated in \cite{luo2023nonlinear}, 
the \textit{famm.nl} is computationally expensive and therefore is applicable only when $d$ is small.

We consider four simulation scenarios. 
The first two scenarios replicate the settings in \cite{luo2023nonlinear}, with $d=3$ and $d=5$, referred to as Scenario 1 and Scenario 1-A, respectively, 
where the true functions exhibit relatively low spatial inhomogeneity and anisotropy. 
Due to space limitations, Scenario 1-A is provided in Appendix \ref{sec: Appendix C}.
In contrast, the remaining two scenarios are designed to evaluate the performance of the proposed method under more challenging conditions, with $d=5$ and $d=10$, referred to as Scenario 2 and Scenario 3, respectively, where the true functions exhibit high spatial inhomogeneity and anisotropy. 
For scenarios with $d=5$ and $d=10$, the \textit{famm.nl} is excluded. 

For each scenario, we repeat the following procedure. 
In each replication, a training dataset and a test dataset are generated. While the size of the training dataset varies across scenarios, 
the sample sizes of the test datasets are fixed at $N_{\text{test}} = 1000$. 
Each sample response curve $\{Y(t)\}_{0 \leq t \leq 1}$ is observed at 100 equally spaced points $0 = t_1 < t_2 < \cdots < t_{100} = 1$. 
We then apply the estimated models to the test dataset to calculate the mean integrated squared prediction error (MISPE): 
\begin{equation*}
    \text{MISPE} = 
     \frac{1}{N_{\text{test}}} \sum_{i=1}^{N_{\text{test}}} \sum_{j=1}^{100}
     \{Y_i^{\text{test}}(t_j) - \hat{f}_i(t_j)\}^2 \Delta_t,
\end{equation*}
where $Y_i^{\text{test}}(t)$ denotes the $i$th sample response curve in the test data, $\hat{f}_i(t)$ represents the corresponding predicted curve and $\Delta_t = 0.01$.

In each scenario, 
we consider several candidates of network parameters for the \textit{FOS-DNN}. 
In Scenario 1, 
we examine combinations of the neural network widths $W=8,16,32$ and depths $L=5,6,7$. 
In Scenarios 1-A and 2, the width is chosen from $W=16,32,64$, and the depth is selected from $L=6,7,8$. 
In Scenario 3, the candidate configurations are identical to those in Scenario 2.
Additionally, we introduce an $L^2$-regularization parameter $\alpha$ to mitigate the overfitting, 
selecting it from $\alpha=10^{-9}, 10^{-7}, 10^{-5}, 10^{-3}, 10^{-1}$. 
In each setting, 
the tuning parameters $(W, L,\alpha)$ of the proposed method are determined using 3-fold cross-validation 
on a single training dataset, and the selected parameters are applied uniformly across all datasets within the scenario. 
In contrast, we select smoothing parameters of other methods via cross-validation on the training dataset for each replication. 

\subsection{Scenario 1} \label{subsec : Simulation 1}

In Scenario 1, we adopt the same setting as described in \cite{luo2023nonlinear}. 
We have three scalar predictors $(X_1, X_2, X_3)$ and analyze three different types of joint distributions for these predictors: (Xtype-1) $(X_1, X_2, X_3) \sim \text{Unif}([-1, 1]^3)$, 
where $\text{Unif}([-1, 1]^3)$ denotes the uniform distribution on $[-1,1]^3$; (Xtype-2) $(X_1, X_2, X_3) \sim \mathcal{N}_3(\boldsymbol{0}, \Sigma_{0.1})$; (Xtype-3) $(X_1, X_2, X_3) \sim \mathcal{N}_3(\boldsymbol{0}, \Sigma_{0.5})$, where $\mathcal{N}_d(\boldsymbol{0}, \Sigma)$ represents a $d$-dimensional multivariate normal distribution with mean zero and covariance matrix $\Sigma$, and $\Sigma_{\rho}$ is the covariance matrix with diagonal elements equal to one and off-diagonal elements equal to $\rho$.

We consider the following three different nonlinear models for the true function $f^\circ$:
\begin{align*}
    \textbf{Model 1}: \ f^\circ(X_1,X_2,X_3, t) &= \{X_1 \sin(4t)\}/(X_1^2 + 1) + e^{- \frac{(X_2 - 2)^2}{2}} + \{1 + t \cos(X_3)\}, \\
    \textbf{Model 2}: \ f^\circ(X_1,X_2,X_3, t) &= \{X_1 X_2 \cos(2 \pi t) + \log(1 + X^2 + X_3^2 + t^2)\} e^{-(X_1^2 + X_2^2 + X_3^2)/10}, \\
    \textbf{Model 3}: \ f^\circ(X_1,X_2,X_3, t) &= (X_1 t^2 + X_2 t - X_3)^2/\{1 + X_1^2 t + X_2^2 \sin^2(t) + X_3^2\}.
\end{align*}
Model 1 is a nonlinear additive model, while the others models have more complex nonlinear structure. 
In total, we consider nine settings based on different combinations of models and joint distributions for the predictors. 
In each setting, 
the predictors are generated, 
followed by the generation of $Y(t)$ according to the model $Y(t) = c f^\circ(X_1, X_2, X_3, t) + \epsilon(t)$. 
Here, $c$ is a constant chosen such that the integrated variance of $S(t) = c f^\circ(X_1, X_2, X_3, t)$ equals one, 
and the error term $\epsilon(t)$ is independently distributed as $\mathcal{N}(0, 0.1^2)$ for each $t \in [0, 1]$. 
Sample response curves for each setting are shown in Figure \ref{fig:Simulation 1} of Appendix \ref{sec: Appendix C}.

Table \ref{tab:1} presents the averages and standard deviations of the MISPEs over 100 replications for all settings with a training data size of $N_{\text{train}} = 200$ (for the results when $N_{\text{train}} = 1000$, Table \ref{tab:2} of Appendix \ref{sec: Appendix C}). 
For the \textit{FOS-DNN}, we set $W=32$, $L=6$, and $\alpha = 10^{-3}$. 
Additionally, for the \textit{FUA}, \textit{famm.nl}, 
and \textit{famm.ad}, the parameter settings in \cite{luo2023nonlinear} are used.
For all three distributions of $(X_1, X_2, X_3)$ in Model 1, 
\textit{FUA} and \textit{famm.ad} achieve lower prediction errors compared to the other methods, 
with the \textit{famm.ad} performing particularly well when the additive model is correctly specified.
The \textit{FOS-DNN} exhibits relatively high prediction errors similar to those of \textit{famm.nl}.
In most settings with Models 2 and 3, 
\textit{FUA} and \textit{famm.nl} achieve the lowest prediction errors. 
The \textit{FOS-DNN} performs slightly worse than these methods.

This suggests that in datasets with low spatial inhomogeneity and low anisotropy, the high adaptivity of the proposed method may lead to slightly lower performance compared to basis function-based methods when handling functions with uniform complexity, particularly in small-sample scenario.
Due to model misspecification,
the \textit{famm.ad} provides significantly higher prediction errors than the other nonlinear methods. 
The linear method \textit{fosr} consistently produces high prediction errors across all settings.

Among the three distributions of $(X_1, X_2, X_3)$, 
the MISPEs are the smallest for Xtype-1 in most nonlinear methods.
As noted by \cite{luo2023nonlinear}, one possible explanation is that for Xtype-1, 
$(X_1,X_2,X_3)$ is uniformly distributed over the bounded region $[0,1]^3$.
This uniform distribution ensures that data points are evenly spread across the entire region, 
making it easier to estimate the overall structure of the true function.
In contrast, 
the other two settings assume that $(X_1, X_2, X_3)$ follows a multivariate normal distribution. 
In these settings, the number of data points decreases rapidly as the distance from the origin $(0,0,0)$ increases.
Thus, the structure near the origin is well-estimated, 
but estimation becomes challenging in regions far from the origin.
The \textit{FUA}, \textit{famm.ad} and \textit{famm.nl}, which rely on basis functions, are capable of capturing the overall structure of the true function even when the data is not densely concentrated, demonstrating particularly good performance for Xtype-1.
On the other hand, while the \textit{FOS-DNN} struggles to capture the overall structure with a limited number of samples, it efficiently concentrates the model's expressive power where it is needed. 
As demonstrated by the results for Xtype-2 and Xtype-3 in Model 3, it could outperform other methods in settings where the estimation of local features is crucial.

\begin{table}[h]
    \centering
    \caption{Averages (and standard deviations) of the MISPEs from 100 replicates for all settings and methods in Scenario 1 ($N_{\text{train}} = 200$)}
     \label{tab:1}
     \resizebox{\textwidth}{!}{
    \begin{tabular}{|c|c|c|c|c|c|c|}
  \hline
   \textbf{X-type} &  \textbf{Model} &  \textbf{FOS-DNN} & \textbf{FUA} &  \textbf{famm.nl} &  \textbf{famm.ad} &  \textbf{fosr} \\
  \hline \hline 
   1 &  1 &  0.014 (0.001) &  \textbf{0.010 (0.000)} &  0.013 (0.001) &  \textbf{0.010 (0.000)} &  0.126 (0.005) \\
   &  2 &  0.027 (0.003) &  \textbf{0.013 (0.001)} &  0.021 (0.003) &  0.677 (0.035) &  1.033 (0.019)\\
   &  3 &  0.015 (0.001) &  \textbf{0.013 (0.003)} &  0.016 (0.003) &  0.900 (0.050) &  1.039 (0.005)  \\ \hline
   2 &  1 &  0.026 (0.007) & \textbf{0.014 (0.002)} &  0.032 (0.005) & \textbf{0.014 (0.008)} &  0.466 (0.018)\\
   &  2 &  0.062 (0.015) &  0.037 (0.015) &  \textbf{0.032 (0.007)} &  0.897 (0.095) &  1.037 (0.023)\\
   &  3 &  \textbf{0.021 (0.003)} &  \textbf{0.021 (0.005)} &  0.026 (0.005) &   0.935 (0.076) &  1.037 (0.027)\\
  \hline
   3 &  1 &  0.023 (0.006) &  0.014 (0.002) &  0.027 (0.006) &  \textbf{0.013 (0.004)} &  0.387 (0.022) \\
  &  2 &  0.055 (0.013) &  0.038 (0.071) &  \textbf{0.034 (0.007)} &  0.545 (0.052)  &  1.040 (0.019) \\
  &  3 &  \textbf{0.023 (0.003)} &  0.032 (0.018) &  0.031 (0.008) &  0.904 (0.074)  &  1.032 (0.026) \\ \hline 
\end{tabular}}
\end{table}

\subsection{Scenario 2} \label{subsec : Simulation 3}
In Scenario 2, we consider more complex models characterized by high spatial inhomogeneity and anisotropy. We generate the predictors $(X_1, \ldots, X_5)$ following the same procedure described as before. We consider the following three models for the true function $f^\circ$: 
\begin{align*}
    &\textbf{Model 1}  : f^\circ(X_1, X_2, X_3, X_4 ,X_5, t) = f_{\text{high,1}}(X_1, t) + f_{\text{low,A}}(X_1,\ldots, X_5, t), \\
    &\textbf{Model 2} : f^\circ(X_1, X_2, X_3, X_4 ,X_5, t) = 
    f_{\text{high,1}}(X_1, t) + f_{\text{high,2}}(X_2, t) + f_{\text{low,B}}(X_1,\ldots, X_5, t),\\
    &\textbf{Model 3} : f^\circ(X_1, X_2, X_3, X_4 ,X_5, t) = 
    f_{\text{high,1}}(X_1, t) + f_{\text{high,2}}(X_2, t) + f_{\text{high,3}}(X_3, t) \\
    &\qquad \qquad \qquad \qquad \qquad \qquad \qquad \qquad \qquad + f_{\text{low,C}}(X_1,\ldots, X_5, t),
\end{align*}
where $f_{\text{high}, i}$ represents a function exhibiting high spatial inhomogeneity in $X_i$, whereas $f_{\text{low,} \cdot}$ represents a function with low spatial inhomogeneity across all directions (see Appendix \ref{sec: Appendix C} for the specific functional structures). 
In Model 1, high spatial inhomogeneity is observed only in 
$X_1$, making it the model with the highest anisotropic smoothness among the three models. 
In contrast, Models 2 and 3 include two and three axes, respectively, with high spatial inhomogeneity, reducing overall anisotropic smoothness (see Figure \ref{fig:Simulation function 3} of Appendix \ref{sec: Appendix C}).  
The predictors are generated based on their joint distributions, and $Y(t)$ is then generated following the same procedure described as before. Sample response curves for each setting are provided in Figure \ref{fig:Simulation 3} of Appendix \ref{sec: Appendix C}.

Table \ref{tab:5} presents the averages and standard deviations of the MISPEs over 50 repetitions for all settings when $N_{\text{train}} = 5000$ (for the results when $N_{\text{train}} = 1000$, see Table \ref{tab:12} of Appendix \ref{sec: Appendix C}).
For the \textit{FOS-DNN}, we select $W = 32$, $L = 6$, and $\alpha = 10^{-5}$. Additionally,
for the \textit{FUA}, \textit{famm.nl}, and \textit{famm.ad}, we increase the number of basis functions by 10 compared to the default settings to address the high spatial inhomogeneity of the data. 
The \textit{FOS-DNN} achieves the smallest prediction errors across all settings, followed by the \textit{FUA}.
For Models 1 and 2, which exhibit higher anisotropic smoothness, the \textit{FOS-DNN} achieves MISPEs that are 2.4 to 4.1 times smaller than those of \textit{FUA}. 
In contrast, for Model 3, which has low anisotropic smoothness, the MISPEs of the \textit{FOS-DNN} are 1.3 to 2.8 times smaller than those of \textit{FUA}, resulting in a smaller performance gap. 
The \textit{FOS-DNN} maintains low prediction errors with the same parameter settings for width $W$ and depth $L$ as in Scenario 1, which features low spatial inhomogeneity and low anisotropic smoothness. In contrast, other methods, even with a sufficient increase in the number of basis functions, demonstrate high prediction errors compared to the \textit{FOS-DNN}. 
These results demonstrate the proposed method's superior adaptive capabilities, emphasizing its enhanced effectiveness in settings with high spatial inhomogeneity and anisotropic smoothness.

\begin{table}[h]
    \centering
    \caption{Averages (and standard deviations) of the MISPEs from 50 replicates for all settings and methods in Scenario 2 ($N_{\text{train}} = 5000$)}
    \label{tab:5}
\begin{tabular}{|c|c|c|c|c|c|c|}
  \hline
   \textbf{X-type}  & \textbf{Model} & \textbf{FOS-DNN} &  \textbf{FUA} &  \textbf{famm.ad}  &  \textbf{fosr} \\
  \hline \hline 
   1  & 1 & \textbf{0.011 (0.000)} &  0.026 (0.004) &  0.418 (0.023) &  1.014 (0.032)  \\
     & 2 & \textbf{0.013 (0.001)} &  0.053 (0.007) &  0.675 (0.031) &  1.014 (0.041)  \\
     & 3 & \textbf{0.069 (0.071)} & 0.194 (0.022) &  0.279 (0.013) &  1.004 (0.040)  \\ \hline
   2  & 1 & \textbf{0.011 (0.000)} & 0.036 (0.003) &  0.261 (0.013) &  1.015 (0.037) \\
     & 2 & \textbf{0.039 (0.087)} &  0.147 (0.023) &  0.642 (0.037) &  1.020 (0.055) \\
     & 3 & \textbf{0.071 (0.008)} &  0.097 (0.011) &  0.120 (0.010) &  1.035 (0.075) \\ \hline
   3    & 1 & \textbf{0.011 (0.000)} & 0.034 (0.002) &  0.269 (0.014) &  1.008 (0.033)  \\ 
       & 2 & \textbf{0.043 (0.072)} &  0.169 (0.042) &  0.678 (0.038) &  0.997 (0.059)  \\ 
       & 3 & \textbf{0.083 (0.008)} & 0.110 (0.007) &  0.180 (0.013) &  1.015 (0.064)  \\ \hline 
\end{tabular} 
\end{table}

\subsection{Scenario 3} \label{subsec : Simulation 4}
In Scenario 3, we examine models characterized by high spatial inhomogeneity and high anisotropic smoothness, which have higher dimensions compared to Scenario 2.
We set $d=10$ and generate the predictors $(X_1, \ldots, X_{10})$ using the same procedure. We consider the following three models for the true function $f^\circ$:
\begin{align*}
    &\textbf{Model 1} : f^\circ(X_1,\ldots, X_{10}, t) =  g_{\text{high,1}}(X_1, t)  + g_{\text{low,A}}(X_1,\ldots, X_{10},t) \\
    &\textbf{Model 2} : f^\circ(X_1,\ldots, X_{10}, t) = g_{\text{high,1}}(X_1, t) + g_{\text{high,2}}(X_2, t) + g_{\text{high,3}}(X_3, t) \\
    &\qquad \qquad \qquad \qquad \qquad \qquad \qquad + g_{\text{low,B}}(X_1,\ldots, X_{10},t)\\
    &\textbf{Model 3} : f^\circ(X_1,\ldots, X_{10}, t) = g_{\text{high,1}}(X_1, t) + g_{\text{high,2}}(X_2, t) + g_{\text{high,3}}(X_3, t)  \\ 
    &\qquad \qquad \qquad \qquad  \qquad \qquad \qquad + g_{\text{high,4}}(X_4, t) +  g_{\text{high,5}}(X_5, t) + g_{\text{low,C}}(X_1,\ldots, X_{10},t)
\end{align*}
where $g_{\text{high}, i}$ represents a function exhibiting high spatial inhomogeneity in $X_i$, whereas $g_{\text{low},\cdot}$ represents a function with low spatial inhomogeneity across all directions (see Appendix \ref{sec: Appendix C} for the specific functional structures). 
In Model 1, high spatial inhomogeneity is observed only in 
$X_1$, making it the model with the highest anisotropic smoothness among the three models. 
Models 2 and 3 incorporate three and five axes, respectively, exhibiting high spatial inhomogeneity (see Figure \ref{fig:Simulation function 4} of the Appendix \ref{sec: Appendix C}). 
The predictors are generated based on their joint distributions, and $Y(t)$ is then generated following the same procedure described as before. Sample response curves for each setting are provided in Figure \ref{fig:Simulation 4} of the Appendix \ref{sec: Appendix C}.

For each setting, we conducted 20 repetitions with a training data size of $N_{\text{train}} = 10000$.
Table \ref{tab:6} presents the means and standard deviations of the MISPEs over 20 repetitions for all settings when $N_{\text{train}}=10000$ (for the results when $N_{\text{train}} = 5000$, see Table \ref{tab:15} of Appendix \ref{sec: Appendix C}). 
For the \textit{FOS-DNN}, we set $W=32$, $L=7$, and $\alpha=10^{-3}$. 
Additionally, for \textit{FUA}, \textit{famm.nl}, and \textit{famm.ad}, the number of basis functions is also increased by 10 compared to the default settings to address the high spatial inhomogeneity of the data. 
The \textit{FOS-DNN} achieves the best predictive performance across all settings, followed by the \textit{FUA}.
For Models 1 and 2, which exhibit high anisotropy, the MISPEs of the \textit{FOS-DNN} are 5.2 to 11.4 times smaller than those of \textit{FUA}, representing a significantly more significant improvement compared to the results in Scenario 3. 
In contrast, for Model 3, which has low anisotropy, the MISPEs of the \textit{FOS-DNN} are only 1.3 to 1.8 times smaller than those of the \textit{FUA}, indicating a small performance gap. 
Notably, the \textit{FOS-DNN} achieves these results with approximately 61\% fewer total parameters than the \textit{FUA} (\textit{FOS-DNN}: 4641, \textit{FUA}: 7500). Other methods exhibit lower predictive accuracy compared to the \textit{FOS-DNN}.
These results suggest that the adaptive capabilities of the proposed method become more beneficial in scenarios with higher spatial inhomogeneity and anisotropic smoothness, particularly in higher-dimensional settings.

\begin{table}[h]
    \centering
    \caption{Averages (and standard deviations) of the MISPEs from 20 replicates for all settings and methods in Scenario 3 ($N_{\text{train}} = 10000$)}
    \label{tab:6}
\begin{tabular}{|c|c|c|c|c|c|c|}
  \hline
   \textbf{X-type}  & \textbf{Model} & \textbf{FOS-DNN} &  \textbf{FUA} &  \textbf{famm.ad}  & \textbf{fosr} \\
  \hline \hline 
   1  & 1 & \textbf{0.017 (0.000)}  & 0.060 (0.004)  & 0.290 (0.016)  &  1.024 (0.035)  \\
     & 2 &  \textbf{0.019 (0.002)} & 0.183 (0.024)  & 0.780 (0.045)  &  1.030 (0.063)   \\
     & 3 & \textbf{0.253 (0.096)} & 0.305 (0.021) & 0.278 (0.011) & 0.998 (0.031)      \\ \hline
   2  & 1 & \textbf{0.016 (0.000)}  &  0.114 (0.013) & 0.157 (0.007) & 1.014 (0.030)   \\
     & 2 & \textbf{0.030 (0.020)}  &  0.178 (0.015) & 0.804 (0.073) & 1.064 (0.100)  \\
     & 3 & \textbf{0.051 (0.003)} & 0.095 (0.006)  & 0.104 (0.011)  & 0.991 (0.051)  \\ \hline
   3    & 1 & \textbf{0.015 (0.000)} & 0.082 (0.016)  & 0.128 (0.008)  & 1.010 (0.016)  \\ 
       & 2 & \textbf{0.023 (0.002)}  & 0.263 (0.026)  & 0.788 (0.076)  & 1.047 (0.101) \\ 
       & 3 & \textbf{0.059 (0.004)} & 0.079 (0.004) &  0.168 (0.014)  & 1.043 (0.064)  \\ \hline 
\end{tabular} 
\end{table}

\section{Application: Ground reaction force data} \label{sec : Application}
\leavevmode

In this section, we apply the proposed method to analyze the ground reaction force (GRF) dataset (\citealp{ZHANG202456}). 
The GRF signals capture how the reaction force exerted by the ground on the human body changes over time and can be analyzed as functional data.
In the field, such analyses are essential to assess the risk of career-threatening injuries and to understand their relationship with individual motor dynamics.
This study aims to evaluate how physical information, including the presence or absence of anterior cruciate ligament (ACL) injuries, can explain the GRF data.

\cite{terada2020} examined similar datasets in the context of classification problems, identifying athletes who have experienced ACL injuries and those at high risk of such injuries. Notably, some athletes identified as high-risk during these studies subsequently sustained ACL injuries or severe ankle sprains after the experiments.
These findings suggest a potential causal relationship between GRF signals and injuries such as ACL. Investigating the structure and nature of this relationship could provide valuable insights for both academic and practical applications. 

The proposed method is particularly well-suited for analyzing this dataset, as it can adaptively capture local variations and incorporate covariate information, essential for modeling the complex interactions between physical characteristics and GRF signals. For example, local variations in GRF signals, such as abrupt changes during the landing phase, are considered critical biomechanical indicators that may be associated with injury risks like ACL tears. As shown in Figure \ref{fig:GRF}, the GRF signals exhibit apparent local variations, highlighting the applicability of the proposed method to this dataset.

This dataset was collected from 222 healthy subjects (age: 14.5 $\pm$ 2.2 years; 96 males and 126 females) with no history of lower limb injuries within six months before the experiment. 
The procedure for this experiment was approved by the local ethics board, and prior to data collection, each subject provided written informed consent.
In addition to GRF data for each participant, the dataset includes 25 covariates, such as age and height, and information on leg injuries, including the presence or absence of ACL injury history. A complete list of predictors with their descriptions and statistical summaries is provided in Table \ref{tab:variables_summary} of Appendix \ref{sec: Appendix C}.

The experimental task involved a single-legged drop landing. Participants were instructed to jump forward from a wooden platform 0.2 meters high and land on a force plate using either leg. Upon landing, they were required to maintain a quiet, single-legged stance for at least 8 seconds. During each landing, participants were asked to cross their arms over their chest and keep them in that position throughout the trial to minimize the impact of arm movements (see \cite{ZHANG202456} for details on GRF data measurement). Trials in which participants lost balance, stepped off the force plate, or failed to maintain the single-legged stance were deemed unsuccessful and excluded from further analysis. For each participant, we collected data from successful trials six times for both the left and right legs. Data points identified as clear measurement errors were removed, resulting in $N = 2465$ valid observations.

To evaluate the performance of the proposed method, we use the GRF data as the response variable and the normalized 25 covariates as the predictor variables. For the GRF data, we consider the vertical component during the first approximately one second after the initial impact. Additionally, each subject's body weight was normalized to eliminate its effect.

We repeat the following procedure 10 times. In each iteration, we randomly split the 2465 observations into a training set of 1972 observations and a test set of 493 observations. The estimated model is then applied to the test set to calculate the MISPE. We compare our method with the \textit{FUA} and \textit{fosr}. The \textit{famm.ad} method was not applied due to many categorical covariates and limited unique combinations.

Table \ref{tab:7} presents the averages and standard deviations of the MISPEs over 10 repetitions. For the proposed method, we set  $W = 32$, $L = 7$, and $\alpha = 10^{-5}$. For the \textit{FUA} and the \textit{fosr}, the number of basis functions is increased by 10 from the default setting to capture local variations (See Table \ref{tab:19} in Appendix \ref{sec: Appendix C} for \textit{FUA} with different numbers of basis functions).  The results indicate that the proposed method and the \textit{FUA} demonstrate comparable predictive accuracy. However, in terms of the total number of parameters, the proposed method requires 5,121 parameters, while the \textit{FUA} requires 16,250 parameters, approximately three times more. This suggests that the proposed method achieves higher performance and greater efficiency regarding the number of parameters.

\begin{figure}
    \centering
    \includegraphics[width=0.5\linewidth]{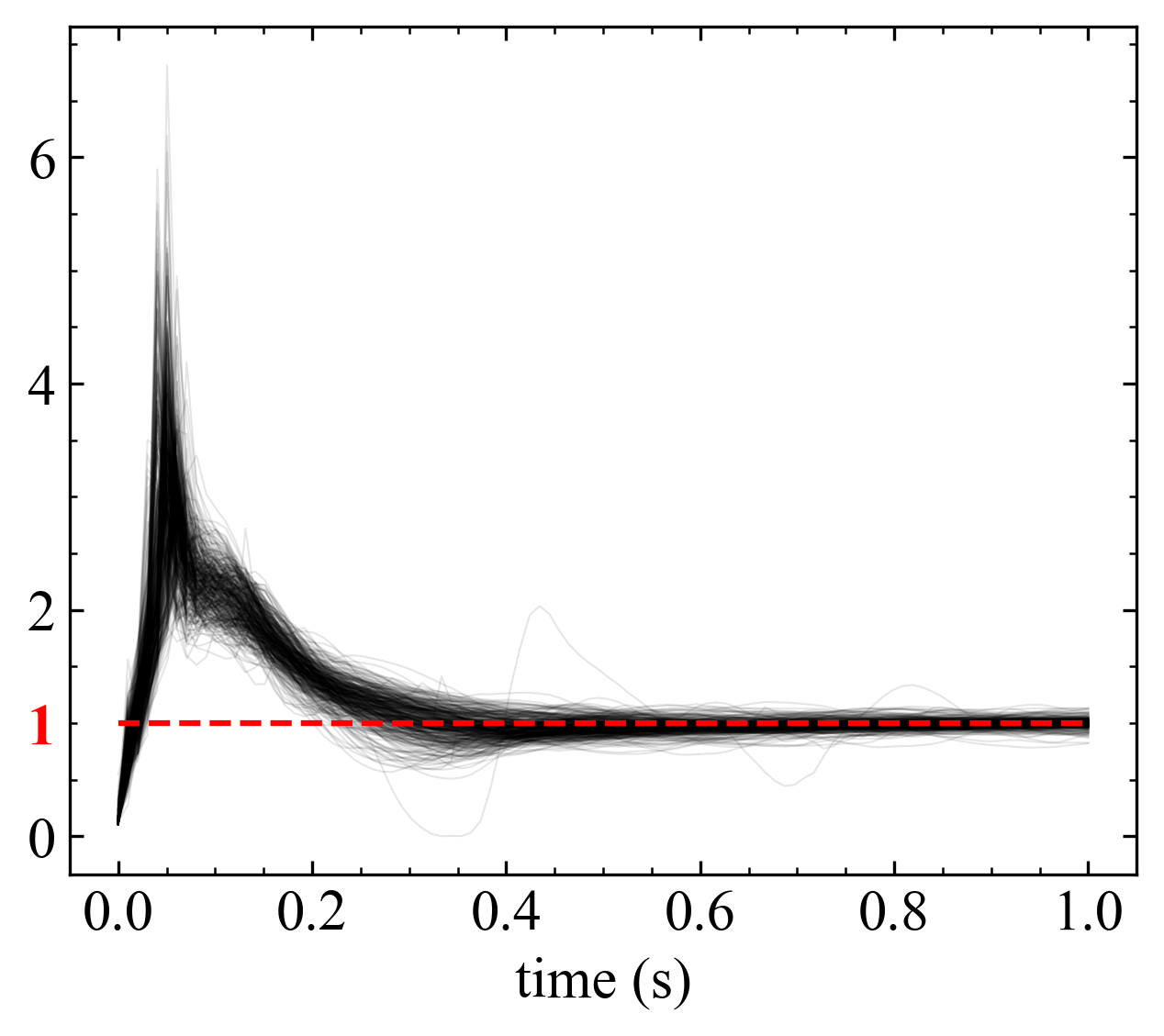}
    \caption{Sample curves of the ground reaction force (GRF) data. The black line represents the data, and the red dashed line represents a horizontal line at 1.}
    \label{fig:GRF}
\end{figure}

\begin{table}[h]
    \centering
    \caption{Averages (and standard deviations) of the MISPEs from 10 replicates, in the GRF data.}
    \label{tab:7}
\begin{tabular}{|c|c|c|c|c|c|c|}
  \hline
   \textbf{FOS-DNN} & \textbf{FUA} &  \textbf{fosr} \\
  \hline \hline 
   \textbf{0.027 (0.001)} &  \textbf{0.027 (0.001)} &  0.047 (0.001) \\ \hline 
\end{tabular} 
\end{table}

\section{Discussion}
\leavevmode

In this study, we propose a novel method for nonlinear Function-on-Scalar (FOS) regression utilizing deep learning. Unlike traditional approaches that rely on basis functions at a fixed spatial scale, the proposed method adaptively learns optimal representational structures directly from the data. This adaptivity enables efficient use of parameters and improves estimation accuracy, particularly for datasets with localized variations or high anisotropy. These features are demonstrated through theoretical analyses and numerical simulations, where the proposed method consistently outperforms previous methods in scenarios with high spatial inhomogeneity and anisotropic smoothness. The adaptive learning capability of the proposed approach is crucial for capturing the complexity of modern, highly structured datasets.

While this study focuses on enhancing predictive performance, practical applications often require interpretability to understand how predictor variables influence the functional response. 
This highlights the importance of developing semiparametric FOS regression models that integrate deep neural networks capable that are adaptively estimating nonlinear components. 
Furthermore, from a theoretical perspective, analyzing the case where functional data are observed discretely in more practical settings and demonstrating that the proposed method achieves minimax optimality would significantly strengthen its theoretical foundation and promote its broader application across various fields.

\section*{Acknowledgments}
\leavevmode

The authors express their sincere gratitude to Dr. Issei Ogasawara for generously providing the GRF data.
The authors also gratefully acknowledge JSPS KAKENHI (Grant Numbers JP23K24737 and JP24K14855), and the MEXT Project for Seismology Toward Research Innovation with Data of Earthquakes (STAR-E, JPJ010217).

\newpage
\begin{appendix}

\section{Proofs}\label{sec : Appendix A} 
\subsection{Proof of \Cref{thm1}}
\leavevmode

In this section, we prove \Cref{thm1}. To begin, we first introduce the definition of the covering number.

\begin{definition} \label{def:covering number}
    Let $\mathcal{C}$ be an arbitrary function space, and let $\hat{d}: \mathcal{C} \times \mathcal{C} \rightarrow \mathbb{R}+$ be a distance function. The covering number $\mathcal{N}(\epsilon, \mathcal{C}, \hat{d})$ is defined as the smallest integer $N$ for which there exists a finite set $\{g_j\}_{j=1}^N \subset \mathcal{C}$ such that
\begin{equation*} 
\sup_{g \in \mathcal{C}} \min_{j \in \{1,\ldots, N\}} \hat{d}(g, g_j) \leq \epsilon. 
\end{equation*}

We call $\mathcal{N}(\epsilon, \mathcal{C}, \hat{d})$ the $\epsilon$-covering number of $\mathcal{C}$.
\end{definition}

The key lemma for the proof of \Cref{thm1} is the following.
\begin{lemma} 
\label{lemma : main}
    For any estimator $\hat{f}_n \in \Phi(L,W,S,B) := \Phi$, we define
    \begin{align*}
    &\Delta_n(\hat{f}_n) \notag \\
    &:= E_{D_n}\left\lbrack \frac{1}{n} \sum_{i=1}^n \int_0^1 (Y_i(t) - \hat{f}_n(\boldsymbol{X}_i, t))^2 dt - \inf_{f \in \Phi}  \frac{1}{n} \sum_{i=1}^n \int_0^1 (Y_i(t) - f(\boldsymbol{X}_i, t))^2 dt\right\rbrack  \\
    &\qquad \text{and} \ R(\hat{f}_n, f^{\circ}) := E_{D_n}\left\lbrack \int_0^1 \|f^{\circ}(\boldsymbol{X}, t) - \hat{f}_n(\boldsymbol{X},t)\|_{L^2(p_{\boldsymbol{X}})}^2 dt \right\rbrack. 
    \end{align*}
    For the sake of simplicity, we denote $\Delta_n(\hat{f}_n)$ briefly by $\Delta_n$.
    Additionally, suppose that for some $F \geq 1$, it holds that ${f^{\circ}} \cup \Phi \subset \{f : \Omega \rightarrow [-F, F]\}$. Moreover, assume there exists a constant $R > 0$ such that the probability density function $p_{\boldsymbol{X}}$ of $\boldsymbol{X}$ satisfies $\|p_{\boldsymbol{X}} \|_\infty \leq R$.
    Then, under the condition that $\mathcal{N}_n := \mathcal{N}(\delta, \Phi, |\cdot|\infty) \geq 3$, for all $\epsilon, \delta \in (0,1]$, the following inequality holds: 
     \begin{align}
    &R(\hat{f}_n, f^{\circ}) \notag \\
    &\leq  (1+\epsilon)^2 \left(R \inf_{f \in \Phi} \|f - f^\circ\|_{L^2(\Omega)}^2  \right. \notag \\
    &\qquad \qquad \qquad + F^4 \frac{(4\sigma^2 +12) \log \mathcal{N}_n + (8\sigma^2 + 70)}{n \epsilon}  + 2 \delta K + 26 \delta F^2 + 6 \delta \sigma \biggr), \label{eq:lemma1 result}
    \end{align}
    where $K := E\left\lbrack \int_0^1 |\xi_1(t)| dt\right\rbrack < \infty$.
\end{lemma}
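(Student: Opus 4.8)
The plan is to establish this oracle inequality through the standard empirical-process template, adapted here to the functional response and the sub-Gaussian \emph{process} noise. First I would record the \textbf{basic inequality}: writing $f^*$ for the minimizer of $\|f - f^\circ\|_{L^2(\Omega)}$ over $\Phi$ and substituting the model $Y_i(t) = f^\circ(\boldsymbol{X}_i, t) + \xi_i(t)$ into the empirical loss, the definition of $\Delta_n$ together with the elementary expansion of the squared loss yields
\begin{equation*}
\|f^\circ - \hat{f}_n\|_n^2 \leq \|f^\circ - f^*\|_n^2 + \Delta_n + \frac{2}{n}\sum_{i=1}^n \int_0^1 \xi_i(t)\,(\hat{f}_n - f^*)(\boldsymbol{X}_i, t)\,dt,
\end{equation*}
where $\|h\|_n^2 := \frac{1}{n}\sum_{i=1}^n \int_0^1 h(\boldsymbol{X}_i, t)^2\,dt$ is the empirical $L^2$ norm. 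The cross term in $\xi_i$ is the object that must be controlled \emph{uniformly} over $\Phi$, since $\hat{f}_n$ is data-dependent and so no direct concentration can be applied to it.

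Next I would pass from the empirical norm to the population risk $R(\hat{f}_n, f^\circ)$ and from the random estimator to a fixed element of a $\delta$-net. Discretizing $\Phi$ by its $\delta$-covering in $|\cdot|_\infty$ replaces $\hat{f}_n$ by a nearest net element, and every such replacement costs at most $O(\delta)$ uniformly, because all functions are clipped to $[-F,F]$ and $\xi_i$ is integrable with $K = E[\int_0^1 |\xi_1(t)|\,dt] < \infty$; this is the origin of the additive terms $2\delta K + 26\delta F^2 + 6\delta\sigma$. The density bound $\|p_{\boldsymbol{X}}\|_\infty \leq R$ then converts the Lebesgue approximation error into the weighted risk, since $E_{D_n}\|f^\circ - f^*\|_n^2 = \|f^\circ - f^*\|_{L^2(p_{\boldsymbol{X}})}^2 \leq R\,\|f^\circ - f^*\|_{L^2(\Omega)}^2$, producing the factor $R\,\inf_{f\in\Phi}\|f - f^\circ\|_{L^2(\Omega)}^2$.

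The analytic core is the \textbf{concentration step} for each fixed net element. For a fixed $g$, the inner integral $\int_0^1 \xi_i(t)\,(g - f^*)(\boldsymbol{X}_i, t)\,dt$ is, conditionally on $\boldsymbol{X}_i$, sub-Gaussian with variance proxy $\sigma^2\,\|(g - f^*)(\boldsymbol{X}_i, \cdot)\|_{L^2([0,1])}^2$: this follows by rescaling the stated process assumption to the non-unit-norm direction $(g-f^*)(\boldsymbol{X}_i,\cdot)$. Summing over $i$ and combining with a Bernstein-type comparison of $\|\cdot\|_n^2$ with its population counterpart gives a tail estimate whose exponent scales with $n$ and the squared population distance; a union bound over the net contributes the $\log \mathcal{N}_n$ factor, while the bounded range $F$ yields the $F^4$ prefactor. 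I would then apply a Young-type split $2ab \leq \epsilon a^2 + \epsilon^{-1} b^2$ to the cross term, absorbing the $\|f^\circ - \hat{f}_n\|^2$ portion into the left-hand side (which is what generates the $(1+\epsilon)^2$ multiplier and the $1/(n\epsilon)$ scaling of the variance term) and collecting the explicit constants.

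The step I expect to be the main obstacle is this simultaneous passage from empirical to population norm and control of the functional noise cross-term uniformly over $\Phi$: because $\xi_i$ is a process rather than a scalar, the conditional sub-Gaussianity must be applied with the correct $L^2([0,1])$ normalization, and the localization must be arranged so that the resulting rate matches $\log\mathcal{N}_n / n$ rather than a looser bound. Keeping the numerous explicit constants $(4\sigma^2 + 12,\ 8\sigma^2 + 70,\ 26,\ 6)$ consistent through the Bernstein estimate, the net-rounding, and the Young split is the most delicate bookkeeping in the argument.
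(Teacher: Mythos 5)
Your proposal follows essentially the same route as the paper's proof (which adapts the argument of Schmidt-Hieber's nonparametric regression analysis): the basic inequality from the definition of $\Delta_n$, a $\delta$-net discretization of $\Phi$ with $O(\delta)$ rounding costs producing the $2\delta K + 26\delta F^2 + 6\delta\sigma$ terms, conditional sub-Gaussianity of the functional cross term under the correct $L^2([0,1])$ normalization combined with a maximal/union bound over the net to get $\log\mathcal{N}_n$, a Bernstein-type (ghost-sample) comparison of the empirical norm with the population risk, the density bound yielding the factor $R$, and the Young-type split that generates the $(1+\epsilon)^2$ multiplier and the $1/(n\epsilon)$ scaling. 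All of these ingredients appear in the same roles in the paper's three-step proof, so your outline is correct and essentially identical in approach.
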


The proof of \Cref{lemma : main} can be found in Appendix \ref{sec : Appendix B}. In this paper, we assume that $\Delta_n = 0$ always holds. 
In the upper bound of $R(\hat{f}_n, f^{\circ})$ in (\ref{eq:lemma1 result}), the quantities to be evaluated are: (a) $\inf_{f \in \Phi} \|f - f^{\circ}\|_{L^2(\Omega)}^2$ and (b) $\log \mathcal{N}_n$.

First, for (a), the following result is established in Proposition 2 of \citet{suzuki2021deep}.

\begin{lemma}{\textnormal{(Proposition 2 of \cite{suzuki2021deep}}}) \label{lemma : suzuki prop2} \par 
    Assume the same condition as in \Cref{thm1}.  we can bound the approximation error as
    \begin{equation*}
        \sup_{f^\ast \in B_{p,q}^{\beta}(\Omega)} \inf_{f \in \Phi(L_1,W_1,S_1,B_1)} \|f^\ast - f\|_{L^2(\Omega)} \lesssim N^{- \tilde{\beta}}. \label{eq:approximation theorem}
    \end{equation*}
\end{lemma}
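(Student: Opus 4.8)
The plan is to recognize that this statement is, verbatim, Proposition 2 of \citet{suzuki2021deep} specialized to dimension $D = d+1$ and domain $\Omega = [0,1]^D$. Consequently the bulk of the work is not a fresh proof but a verification that the hypotheses inherited from \Cref{thm1} — namely $0 < p,q \le \infty$, $\bm{\beta} \in \mathbb{R}_{++}^{D}$ with $\tilde{\beta} > (1/p - 1/2)_+$, and the B-spline order $m$ chosen so that $0 < \overline{\beta} < \min\{m, m-1+1/p\}$ — together with the architecture parameters $(L_1, W_1, S_1, B_1)$ defined there, exactly instantiate the quantitative network-size requirements of that proposition. First I would set up the dictionary between our notation and Suzuki's: the time coordinate $t$ is treated as the $D$-th input axis carrying its own smoothness $\beta_D$, so that the FOS-DNN class $\Phi(L,W,S,B)$ coincides with the sparse bounded-weight ReLU class used there, and the target class $B_{p,q}^{\bm{\beta}}(\Omega)$ is precisely the anisotropic Besov space defined in Section \ref{sec : Theoretical properties}.

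For a self-contained argument I would reconstruct Suzuki's approximation scheme in three steps. (i) \emph{Atomic decomposition}: every $f^\ast \in B_{p,q}^{\bm{\beta}}(\Omega)$ admits an expansion $f^\ast = \sum_{k \ge 0}\sum_{j} c_{k,j} M_{k,j}$ in tensor-product cardinal B-splines $M_{k,j}$ of order $m$, dilated anisotropically at resolution level $k$ by the factors $2^{-k/\beta_i}$ along the $i$-th axis, with the Besov seminorm controlling a weighted $\ell_q(\ell_p)$ norm of the coefficients. (ii) \emph{Sparse truncation}: I would retain only the $N$ atoms with largest reweighted coefficients up to a maximal level $K$ of order $\log_2 N$, and bound the discarded tail in $L^2(\Omega)$ by $N^{-\tilde{\beta}}$, using the embedding guaranteed by $\tilde{\beta} > (1/p - 1/2)_+$; this is where the \emph{intrinsic} smoothness $\tilde{\beta} = (\sum_i 1/\beta_i)^{-1}$, rather than the ambient dimension $D$, enters the rate. (iii) \emph{Network realization}: each retained atom is a compactly supported piecewise polynomial of fixed degree, hence approximable to accuracy $\epsilon = N^{-\tilde{\beta}}(\log N)^{-1}$ by a ReLU subnetwork of depth $O(\log(1/\epsilon))$ and bounded width via the standard product- and polynomial-approximation gadgets; assembling $N$ such subnetworks in parallel yields the claimed width $W_1 = N W_0$, depth $L_1$, connectivity $S_1$, and weight bound $B_1$.

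The main obstacle, and the reason the result is nontrivial, lies in step (iii) combined with the simultaneous control of all four architecture parameters. One must approximate the $D$-fold tensor products defining the $M_{k,j}$ by ReLU networks whose per-atom size stays within the constant $W_0(D) = 6Dm(m+2)+2D$, while tracking how the anisotropic rescaling forces the weight magnitudes to grow like $B_1 = O(N^{D(1+\nu^{-1})(1/p - \tilde{\beta})_+})$ with $\nu = (\tilde{\beta}-\delta)/(2\delta)$, yet keeps the aggregate $L^2$ error at $N^{-\tilde{\beta}}$. Since these delicate constructions are carried out in full in \citet{suzuki2021deep}, I would invoke their Proposition 2 directly once the parameter match is confirmed, rather than reproduce the calculation. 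The only adaptation required for the FOS setting is cosmetic: the extra time coordinate is simply one additional Besov-smooth axis, so no structural change to the approximation argument is needed.
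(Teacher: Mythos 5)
Your proposal is correct and matches the paper's treatment exactly: the paper offers no independent proof of this lemma, but simply cites Proposition~2 of \citet{suzuki2021deep}, relying on the observation that the time coordinate $t$ is just one additional smoothness axis so that the anisotropic Besov class and the sparse ReLU network class $\Phi(L_1,W_1,S_1,B_1)$ instantiate that proposition with $D = d+1$. Your three-step sketch of the underlying B-spline atomic decomposition, sparse truncation, and ReLU realization is a faithful summary of Suzuki's argument, but like the paper you ultimately (and appropriately) invoke the cited result rather than reprove it.
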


Additionally, for (b), its upper bound is also provided in Appendix C of \cite{suzuki2021deep}.
\begin{lemma}{\textnormal{(Appendix C of \cite{suzuki2021deep}}}) \label{lemma : suzuki apendix c}\par 
    Let $\Phi$ be as defined in \Cref{lemma : main}. Then, we obtain
    \begin{equation*}
        \log \mathcal{N}_n(\delta, \Phi, \|\cdot\|_{\infty}) \lesssim \mathcal{N}_n \log \mathcal{N}_n [(\log \mathcal{N})^2 + \log(\delta^{-1})]. \label{eq:covering number}
    \end{equation*}
\end{lemma}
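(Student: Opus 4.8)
The plan is to reduce the claim to a standard parameter-discretization estimate for sparse ReLU networks and then substitute the explicit architecture $(L_1(D), W_1(D), S_1(D), B_1(D))$ prescribed in \Cref{thm1}; here the symbols $\mathcal{N}_n$ and $\mathcal{N}$ appearing on the right-hand side are the width parameter $N$, so that $\mathcal{N}_n \log \mathcal{N}_n$ records the effective number $S_1 \approx N\log N$ of active weights. Since every $f \in \Phi = \Phi(L,W,S,B)$ is determined by a parameter vector $\theta \in \Theta_{L,W}$ with at most $S$ nonzero entries, each bounded in absolute value by $B$, it suffices to cover the admissible parameter set and to control how the sup-norm of $f_\theta$ over the bounded domain $\Omega = [0,1]^D$ responds to perturbations of $\theta$.

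First I would establish the Lipschitz dependence of the network output on its parameters. By an induction over the $L$ layers, using that ReLU is $1$-Lipschitz, that the input lies in $[0,1]^D$, and that all weights and biases are bounded by $B$, one shows that perturbing every coordinate of $\theta$ by at most $\delta'$ changes $\|f_\theta\|_{\infty}$ by at most $\delta' \cdot C(L,W,B)$, where $C(L,W,B)$ is a product across layers satisfying $\log C(L,W,B) \lesssim L\log(WB)$. Choosing the grid spacing $\delta' \approx \delta / C(L,W,B)$ then produces a $\delta$-cover in $\|\cdot\|_{\infty}$. Counting the number of grid points needed to discretize $S$ coordinates in $[-B,B]$ at resolution $\delta'$, together with the combinatorial factor $\log\binom{T}{S}$ for selecting the active coordinates among the $T \lesssim LW^2$ possible parameters, yields the generic estimate
\begin{equation*}
\log \mathcal{N}_n(\delta, \Phi, \|\cdot\|_{\infty}) \lesssim S\bigl[L\log(WB) + \log(\delta^{-1}) + \log(LW^2)\bigr].
\end{equation*}

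Finally I would substitute the architecture of \Cref{thm1}. With $W_0 = W_o(D)$ constant in $N$, one has $W = W_1 \approx N$, $L = L_1 \approx \log N$, $S = S_1 \approx N\log N$, and $B = B_1 = O(N^{c})$ so that $\log B \lesssim \log N$; hence $L\log(WB) \lesssim (\log N)^2$, the term $\log(LW^2)$ is absorbed into $(\log N)^2$, and the generic estimate collapses to
\begin{equation*}
\log \mathcal{N}_n(\delta, \Phi, \|\cdot\|_{\infty}) \lesssim N\log N\,\bigl[(\log N)^2 + \log(\delta^{-1})\bigr],
\end{equation*}
which is the claimed bound. The hard part will be the Lipschitz-in-parameters step: one must track the layerwise product carefully to obtain the sharp scaling $\log C(L,W,B) \lesssim L\log(WB)$ rather than a cruder exponential bound, since it is precisely this factor that generates the $(\log N)^2$ term after substitution. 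The remaining steps — discretization, the combinatorial count of sparsity patterns, and the bookkeeping of the explicit constants from \Cref{thm1} — are routine.
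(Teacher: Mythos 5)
Your proposal is correct. The paper never actually proves this lemma---it is imported verbatim from Appendix C of \cite{suzuki2021deep}---and your argument (Lipschitz dependence of $f_\theta$ on its parameters with $\log C(L,W,B)\lesssim L\log(WB)$, a grid cover of the $S$ active coordinates together with the $\log\binom{LW^2}{S}$ sparsity-pattern factor, then substitution of $W\approx N$, $L\approx \log N$, $S\approx N\log N$, $\log B\lesssim \log N$ from \Cref{thm1}) is precisely the standard discretization proof underlying that citation, including the correct reading of the statement's garbled right-hand side as $N\log N\,[(\log N)^2+\log(\delta^{-1})]$.
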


\textit{Proof of Thorem 1}. By substituting the results of \Cref{lemma : suzuki prop2} and \Cref{lemma : suzuki apendix c} into (\ref{eq:lemma1 result}) in \Cref{lemma : main} and setting $\delta = 1/n$, we have
\begin{align}
    &E_{D_n} \left\lbrack\int_0^1 (f^o(\boldsymbol{X}, t) - \hat{f}_n(\boldsymbol{X}, t))^2 dt \right \rbrack \notag \\
    &\lesssim    R N^{-2 \tilde{\beta}} + \frac{N \log N \log n}{n \epsilon} + \frac{N (\log N)^3}{n \epsilon} +  \frac{1}{n \epsilon} + \frac{1}{n}. \label{eq:estimation upper bound}
\end{align}

Here, the right hand side is minimized by setting $N \approx n^{\frac{1}{2 \tilde{\beta} + 1}}$, the convergence rate is bounded by
\begin{equation*}
    n^{- \frac{2 \tilde{\beta}}{2 \tilde{\beta} + 1}} (\log n)^3. \label{eq:estimation error rate}
\end{equation*}
This completes the proof.

\subsection{Proof of \Cref{lemma : main}} \label{sec : Appendix B}
\leavevmode

We prove \Cref{lemma : main} in a manner similar to \cite{schmidt2020Nonpara}.
For simplicity, we denote $E_{D_n}[\cdot] = E[\cdot]$ and $\int_0^1 = \int$. 
We define $\|g\|_n^2 = \frac{1}{n}\sum_{i=1}^n \int g(\boldsymbol{X}_i, t)^2 dt$.
For any estimator $\tilde{f}$, we let $\hat{R}_n(\tilde{f}, f^{\circ}) = E[\|\tilde{f} - f^{\circ}\|_n^2]$. 
First, in the case where $\log \mathcal{N}_n \geq n$, it is straightforward to see that (\ref{eq:lemma1 result}) holds by using $R(\hat{f}_n, f^{\circ}) \leq 4F^2$. 
Therefore, we proceed under the assumption that $\log \mathcal{N}_n \leq n$.
The proof is carried out in the following three steps: (I) - (III).

\begin{itemize}
    \item[(I)] We evaluate $R(\hat{f}_n, f^{\circ})$ using the empirical risk $\hat{R}_n(\hat{f}_n, f^\circ)$: 
    \begin{align*}
        &R(\hat{f}_n, f^{\circ}) \leq  (1 + \epsilon)\left(\hat{R}_n(\hat{f}, f^{\circ})+ (1+\epsilon) F^4 \frac{12 \log \mathcal{N}_n+ 70}{n \epsilon} + 26\delta F^2 \right).
    \end{align*}
    \item[(II)] For any estimator $\tilde{f} \in \Phi$, we obtain the inequality
    \begin{equation*}
        \left|E \left\lbrack \frac{2}{n} \sum_{i=1}^n \int \xi_i(t) \tilde{f}(\boldsymbol{X}_i, t)\right\rbrack \right| \leq  2 \sqrt{\frac{\hat{R}_n(\tilde{f}, f^{\circ}) (2\sigma^2 \log \mathcal{N}_n + 4\sigma^2)}{n}}  + 2 \delta K + 6\delta \sigma.
    \end{equation*}
    \item[(III)]  The following inequality holds as an upper bound for $\hat{R}_n(\hat{f}, f^{\circ})$: 
    \begin{align*}
        &\hat{R}_n(\hat{f}_n, f^{\circ}) \\
        &\leq (1+\epsilon) \left( \inf_{f \in \Phi} E \left\lbrack \int (f(\boldsymbol{X},t) - f^{\circ}(\boldsymbol{X},t))^2 dt\right\rbrack + \right. \\
        &\hspace{5cm} \left. 2\delta K + 6\delta \sigma + F^2 \frac{4\sigma^2 \log \mathcal{N}_n + 8\sigma^2}{n \epsilon} + \Delta_n \right) \\
        &\leq (1+\epsilon) \left(R \inf_{f \in \Phi} \|f - f^{\circ}\|_{L^2(\Omega)}^2 +  2\delta K + 6\delta \sigma + F^2 \frac{4\sigma^2 \log \mathcal{N}_n + 8\sigma^2}{n\epsilon} + \Delta_n  \right).
    \end{align*}
\end{itemize}
The upper bound in (\ref{eq:lemma1 result}) can be obtained by combining (I) and (III).

\textit{Proof of Lemma 1}.
We  prove (I) through (III) step by step below.

\textit{(I)}: Take a $\delta$-covering of $\Phi$. Let the covering center of each ball be denoted by $f_j$. By the properties of covering, for the estimator $\hat{f}_n$, there exists an index $j^{\ast}$ such that $|\hat{f}_n - f_{j^{\ast}}|_{\infty} \leq \delta$. Let $\{\boldsymbol{X}_i^{\prime} \}_{i=1}^n$ be a sequence of random variables generated independently from the same distribution as $\boldsymbol{X}$ and $\{\boldsymbol{X}_i\}_{i=1}^n$. Using that $|f_j|_{\infty}, |f^{\circ}|_{\infty}, \delta \leq F$, we have

\begin{align}
    &|R(\hat{f}_n, f^{\circ}) - \hat{R}_n(\hat{f}_n, f^{\circ})| \notag\\
    &\leq E  \left\lbrack \left| \frac{1}{n} \sum_{i=1}^n \int \left\lbrace (f_{j^\ast}(\boldsymbol{X}_i^\prime, t) -f^{\circ}(\boldsymbol{X}_i^\prime, t))^2 - (f_{j^\ast}(\boldsymbol{X}_i, t) - f^{\circ}(\boldsymbol{X}_i, t))^2\right\rbrace dt \right| \right\rbrack  \notag\\
    &\quad + E\left\lbrack  \frac{2}{n} \sum_{i=1}^n \int \left\lbrace \underset{\leq \delta}{\underbrace{|\hat{f}_n(\boldsymbol{X}_i^\prime, t) - f_{j^\ast}(\boldsymbol{X}_i^\prime, t)|}} \cdot \underset{\leq 2F}{\underbrace{|(f_{j^\ast}(\boldsymbol{X}_i^\prime, t) - f^{\circ}(\boldsymbol{X}_i^\prime, t)|}}  \right.\right. \notag \\
    & \quad \quad \quad \quad \quad \left.\left. +  \underset{\leq \delta}{\underbrace{|\hat{f}_n(\boldsymbol{X}_i, t) - f_{j^\ast}(\boldsymbol{X}_i, t)|}} \cdot \underset{\leq 2F}{\underbrace{|f_{j^\ast}(\boldsymbol{X}_i, t) - f^{\circ}(\boldsymbol{X}_i, t)|}} \right\rbrace dt \right\rbrack  \notag\\
    &\quad \quad +  E \left\lbrack \frac{1}{n} \sum_{i=1}^n \int \underset{\leq \delta^2}{\underbrace{\left| (\hat{f}_n(\boldsymbol{X}_i^\prime, t) - f_{j^\ast}(\boldsymbol{X}_i^\prime, t))^2 - (\hat{f}_n(\boldsymbol{X}_i, t) - f_{j^\ast}(\boldsymbol{X}_i, t))^2  \right|}} dt \right\rbrack \notag \\
    &\leq E\left\lbrack \left| \frac{1}{n} \sum_{i=1}^n g_{j^{\ast}}(\boldsymbol{X}_i, \boldsymbol{X}_i^{\prime})\right|\right\rbrack + 9\delta F, 
\end{align}
where $g_{j}(\boldsymbol{X}_i, \boldsymbol{X}_i^{\prime}) := \int \{f_j(\boldsymbol{X}_i^{\prime}, t) - f^{\circ}(\boldsymbol{X}_i^{\prime}, t)^2 - (f_j(\boldsymbol{X}_i, t) - f^{\circ}(\boldsymbol{X}_i, t))^2\} dt$.
Define $r_j :=  \sqrt{n^{-1} \log \mathcal{N}_n} \lor (E[\int (f_j(\boldsymbol{X},t) - f^{\circ}(\boldsymbol{X},t))^2 dt])^{1/2}$. 
Using that $(E[|a+b|^2])^{1/2} \leq (E[a^2])^{1/2} + (E[b^2])^{1/2}$, we obtain
\begin{align*}
    r_{j^{\ast}} &= \sqrt{n^{-1} \log \mathcal{N}_n} \lor \left(E\left\lbrack \int (f_{j^{\ast}}(\boldsymbol{X},t) - f^{\circ}(\boldsymbol{X},t))^2 dt | \{\boldsymbol{X}_i, Y_i \}_{i=1}^n \right\rbrack \right)^{1/2}\\
    &= \sqrt{n^{-1} \log \mathcal{N}_n} \\
    & \quad + \left(E\left\lbrack \int (f_{j^{\ast}}(\boldsymbol{X},t) -\hat{f}_n(\boldsymbol{X},t) + \hat{f}_n(\boldsymbol{X},t) - f^{\circ}(\boldsymbol{X},t))^2 dt | \{\boldsymbol{X}_i, Y_i \}_{i=1}^n \right\rbrack \right)^{1/2}\\
    &\leq \sqrt{n^{-1} \log \mathcal{N}_n} + \left(E\left\lbrack \int \underset{\leq \delta^2}{\underbrace{(f_{j^{\ast}}(\boldsymbol{X},t) - \hat{f}_n(\boldsymbol{X},t))^2}} dt | \{\boldsymbol{X}_i, Y_i \}_{i=1}^n \right\rbrack  \right)^{1/2} \\
    & \quad \quad \quad \quad + \left(E\left\lbrack \int (\hat{f}_n(\boldsymbol{X},t) - f^{\circ}(\boldsymbol{X},t))^2 dt | \{\boldsymbol{X}_i, Y_i \}_{i=1}^n \right\rbrack  \right)^{1/2} \\
    & \leq \sqrt{n^{-1} \log \mathcal{N}_n} +  \left(E\left\lbrack \int (\hat{f}_n(\boldsymbol{X},t) - f^{\circ}(\boldsymbol{X},t))^2 dt | \{\boldsymbol{X}_i, Y_i \}_{i=1}^n \right\rbrack  \right)^{1/2} + \delta.
\end{align*}
Define random variables $U, T$ as 
\begin{align*}
    &U := \left(E\left\lbrack \int (\hat{f}_n(\boldsymbol{X},t) - f^{\circ}(\boldsymbol{X},t))^2 dt | (\boldsymbol{X}_i, Y_i)_i\right\rbrack \right)^{1/2} \\
     &\text{and} \ T := \max_j \left|\sum_{i=1}^n g_j(\boldsymbol{X}_i, \boldsymbol{X}_i^{\prime}) /(r_j F^2) \right|.
\end{align*}
Using the fact that
\begin{align*}
    E[U^2] &= E\left\lbrack E \left\lbrack \int (\hat{f}_n(\boldsymbol{X},t) - f^{\circ}(\boldsymbol{X},t))^2 dt |\{\boldsymbol{X}_i, Y_i \}_{i=1}^n \right\rbrack \right\rbrack \\
    &= E\left\lbrack \int (\hat{f}_n(\boldsymbol{X},t) - f^{\circ}(\boldsymbol{X},t))^2 dt\right\rbrack = R(\hat{f}_n, f^{\circ})
\end{align*}
and that $E[UV] \leq (E[U^2])^{1/2} (E[V^2])^{1/2}$ hold, we obtain
\begin{align}
    &|R(\hat{f}_n, f^{\circ}) - \hat{R}_n(\hat{f}_n, f^{\circ})| \notag\\
    &\leq E \left\lbrack \left|\frac{1}{n} \sum_{i=1}^n g_{j^{\ast}}(\boldsymbol{X}_i, \boldsymbol{X}_i^{\prime}) \right| \right\rbrack + 9\delta F \notag\\
    &= \frac{F^2}{n} E \left\lbrack r_{j^{\ast}} \cdot \left|\sum_{i=1}^n g_{j^{\ast}}(\boldsymbol{X}_i, \boldsymbol{X}_i^{\prime}) / (r_{j^{\ast}} F^2) \right| \right\rbrack + 9\delta F \notag\\
    &\leq \frac{F^2}{n} E \left\lbrack r_{j^{\ast}} T \right\rbrack + 9\delta F \notag\\
    &\leq \frac{F^2}{n} E[UT] + \frac{F^2}{n} \left(\sqrt{n^{-1} \log \mathcal{N}_n} + \delta \right)E[T] + 9\delta F \notag\\
    &\leq \frac{F^2}{n} (E[U^2])^{1/2} (E[T^2])^{1/2} +  \frac{F^2}{n} \left(\sqrt{n^{-1} \log \mathcal{N}_n} + \delta \right)E[T] + 9\delta F \notag\\
    &= \frac{F^2}{n} R(\hat{f}_n, f^{\circ})^{1/2}(E[T^2])^{1/2} +  \frac{F^2}{n} \left(\sqrt{n^{-1} \log \mathcal{N}_n} + \delta \right)E[T] + 9\delta F. \label{eq:difference estimation error}
\end{align}
Next, we derive the upper bounds for $E[T]$ and $E[T^2]$.
We verify the following for $g_j(\boldsymbol{X}_i, \boldsymbol{X}_i^{\prime})$.
\begin{align*}
    &\cdot E[g_j(\boldsymbol{X}_i, \boldsymbol{X}_i^\prime)] \\
    &\quad = E\left\lbrack \int (f_{j^\ast} (\boldsymbol{X}_i^\prime, t) - f^{\circ}(\boldsymbol{X}_i^\prime, t))^2  - (f_{j^\ast}(\boldsymbol{X}_i, t) - f^{\circ}(\boldsymbol{X}_i, t))^2 dt \right\rbrack \\
    &\quad= E \left\lbrack \int (f_{j^\ast} (\boldsymbol{X}_i^\prime, t) - f^{\circ}(\boldsymbol{X}_i^\prime, t))^2 dt \right\rbrack - E \left\lbrack \int (f_{j^\ast} (\boldsymbol{X}_i, t) - f^{\circ}(\boldsymbol{X}_i, t))^2 dt \right\rbrack = 0 \\
    &\cdot |g_j(\boldsymbol{X}_i, \boldsymbol{X}_i^\prime)| \\
    &\quad= \left|\int (f_{j^\ast} (\boldsymbol{X}_i^\prime, t) - f^{\circ}(\boldsymbol{X}_i^\prime, t))^2  - (f_{j^\ast}(\boldsymbol{X}_i, t) - f^{\circ}(\boldsymbol{X}_i, t))^2 dt  \right| \\
    &\quad\leq \int \underset{\leq (2F)^2}{\underbrace{|(f_{j^\ast} (\boldsymbol{X}_i^\prime, t) - f^{\circ}(\boldsymbol{X}_i^\prime, t))^2  - (f_{j^\ast}(\boldsymbol{X}_i, t) - f^{\circ}(\boldsymbol{X}_i, t))^2} }| dt \leq 4F^2 \\
    &\cdot \text{Var}(g_j(\boldsymbol{X}_i, \boldsymbol{X}_i^\prime)) \\
    &\quad= 2 \text{Var} \left(\int (f_j(\boldsymbol{X}_i^\prime, t) - f^{\circ}(\boldsymbol{X}_i^\prime, t))^2 dt \right) \\
    &\quad\leq 2 E\left\lbrack \left( \int (f_j(\boldsymbol{X}_i, t) - f^{\circ}(\boldsymbol{X}_i, t))^2 dt \right)^2 \right\rbrack \\
    &\quad= 2 E\left\lbrack \left(\int \underset{\leq 4F^2}{\underbrace{(f_j(\boldsymbol{X}_i, t) - f^{\circ}(\boldsymbol{X}_i, t))^2}} dt \right) \cdot  \left(\int (f_j(\boldsymbol{X}_i, t) - f^{\circ}(\boldsymbol{X}_i, t))^2 dt \right) \right\rbrack \\
    &\quad\leq 2 \cdot 4F^2  E\left\lbrack \int (f_j(\boldsymbol{X}_i, t) - f^{\circ}(\boldsymbol{X}_i, t))^2 dt \right\rbrack \leq 8F^2 r_j^2 
\end{align*}
Then, we observe that
\begin{align*}
    &E[g_j(\boldsymbol{X}_i, \boldsymbol{X}_i^{\prime})/(r_j F^2)] = 0, \quad |g_j(\boldsymbol{X}_i, \boldsymbol{X}_i^{\prime})/(r_j F^2)| \leq \frac{4F^2}{f_j F^2} = \frac{4}{r_j}, \\
    &\text{and} \ \text{Var}(g_j(\boldsymbol{X}_i, \boldsymbol{X}_i^{\prime})/(r_j F^2)) \leq \frac{8F^2r_j^2}{r_j^2 F^4} = \frac{8}{F^2}.
\end{align*}
We apply the following Bernstein's inequality to the random variables $\{g_j(\boldsymbol{X}_i, \boldsymbol{X}_i^{\prime})/(r_j F^2)\}_{i=1}^n$.
Bernstein's inequality states that for a sequence of independent random variables $U_1, \ldots, U_n$ with zero mean and finite variance, if there exists a positive constant $M$ such that $|U_i| \leq M$ for all $i = 1, \ldots, n$, then the following inequality holds:
\begin{equation*}
    P\left(\left|\sum_{i=1}^n U_i \right|\geq t \right) \leq 2 \exp \left(- \frac{t^2}{2Mt/3 + 2 \sum_{i=1}^n \text{Var}(U_i)} \right).
\end{equation*}

Using Bernstein's inequality, we obtain
\begin{align*}
    &P(T \geq t)  \\
    &= P\left(\max_j \left|\sum_{i=1}^n g_j(\boldsymbol{X}_i, \boldsymbol{X}_i^\prime) / (r_j F) \right| \geq t \right) \leq \sum_{j=1}^{\mathcal{N}_n}  P\left(\left|\sum_{i=1}^n g_j(\boldsymbol{X}_i, \boldsymbol{X}_i^\prime) / (r_j F) \right| \geq t \right) \land 1\\
    &\leq 2\sum_{j=1}^{\mathcal{N}_n}  \exp \left(- \frac{t^2}{\frac{2}{3} \frac{4}{r_j} t + \frac{8n}{F^2}}\right) \land 1 \leq  2 \sum_{j=1}^{\mathcal{N}_n} \exp \left(- \frac{t^2}{\frac{8}{3 \sqrt{n^{-1}\log \mathcal{N}_n}}t + \frac{8n}{F^2}} \right) \land 1\\
    &= 2 \mathcal{N}_n  \exp \left(- \frac{t^2}{\frac{8}{3 \sqrt{n^{-1}\log \mathcal{N}_n}}t + \frac{8n}{F^2}} \right) \land 1.
\end{align*}
Furthermore, when $t \geq 6 \sqrt{n \log \mathcal{N}_n}$, using the fact that $F \geq 1$, we have
\begin{equation*}
    P(T \geq t) \leq 2 \mathcal{N}_n  \exp \left(- \frac{3\sqrt{\log \mathcal{N}_n}}{16\sqrt{n}}t \right) \land 1.
\end{equation*}
Therefore, we obtain
\begin{align}
    E[T] &= \int_0^\infty P(T \geq t) dt \notag \\
    &= \int_0^{6\sqrt{n \log \mathcal{N}_n}} P(T \geq t) dt + \int_{6\sqrt{n \log \mathcal{N}_n}}^\infty P(T \geq t) dt \notag\\
    &\leq 6\sqrt{n \log \mathcal{N}_n} + \int_{6\sqrt{n \log \mathcal{N}_n}}^\infty 2 \mathcal{N}_n \exp \left(- \frac{3 \sqrt{\log \mathcal{N}_n}}{16\sqrt{n}}t \right) dt \notag\\
    &= 6\sqrt{n \log \mathcal{N}_n}  + 2 \mathcal{N}_n \frac{16\sqrt{n}}{3 \sqrt{\log \mathcal{N}_n}} \exp\left(- \frac{18\sqrt{n} \log \mathcal{N}_n}{16\sqrt{n}} \right) \notag\\
    &\leq  6\sqrt{n \log \mathcal{N}_n} + \frac{32}{3} \sqrt{\frac{n}{\log \mathcal{N}_n}}. \label{eq:E[T]}
\end{align}
Similarly, 
\begin{align}
    E[T^2] &= \int_0^\infty P(T^2 \geq u) du = \int_0^\infty P(T \geq \sqrt{u}) du \notag \\
    &= \int_0^{36n\log \mathcal{N}_n} P(T \geq \sqrt{u}) du + \int_{36n\log \mathcal{N}_n}^\infty P(T \geq \sqrt{u}) du \notag\\
    &\leq 36n \log \mathcal{N}_n + \int_{36n\log \mathcal{N}_n}^\infty 2 \mathcal{N}_n \exp \left(- \frac{3 \sqrt{u} \sqrt{\log \mathcal{N}_n}}{16\sqrt{n}} \right) du \notag\\
    &= 36n \log \mathcal{N}_n + 2 \mathcal{N}_n \underset{\leq \frac{1}{2}}{\underbrace{\frac{\frac{18}{16} \log \mathcal{N}_n +1}{9 \log \mathcal{N}_n}}} 2^8 n \exp\left(- \frac{18}{16} \log \mathcal{N}_n \right) \notag\\
    &\leq 36n \log \mathcal{N}_n + 2^8 n \label{eq:E[T^2]}
\end{align}
where the fourth inequality uses $\int_{b^2}^\infty e^{- \sqrt{u} a} du = 2 \int_b^\infty s e^{-sa} ds = 2(ba+1)e^{-ba}/a^2$. 

Substituting  (\ref{eq:E[T]}) and (\ref{eq:E[T^2]}) into (\ref{eq:difference estimation error}) and using $\log \mathcal{N}_n \leq n$, we obtain 
\begin{align}
    &|R(\hat{f}_n, f^{\circ}) - \hat{R}_n(\hat{f}_n, f^{\circ})| \notag\\
    &\leq \frac{F^2}{n} R(\hat{f}_n, f^{\circ})^{1/2} \left(36n \log \mathcal{N}_n + 2^8 n \right)^{1/2} \notag \\
    &\quad \quad \quad + \frac{F^2}{n}\left(\sqrt{\frac{\log \mathcal{N}_n}{n}} + \delta\right) \left(6\sqrt{n \log \mathcal{N}_n} + \frac{32}{3} \sqrt{\frac{n}{\log \mathcal{N}_n}} \right) + 9 \delta F \notag\\
    &\leq \frac{F^2}{n} R(\hat{f}_n, f^{\circ})^{1/2} \left(36n \log \mathcal{N}_n + 2^8 n \right)^{1/2} + \frac{F^2}{n}(6 \log \mathcal{N}_n +11) + 26\delta F^2. \label{eq:difference estimation error 2}
\end{align}
For (\ref{eq:difference estimation error 2}), we define $a,b,c$ and $d$ in \Cref{lemma : both inequality} as follows:
\begin{align*}
    & a = R(\hat{f}_n, f^{\circ}),\quad  b= \hat{R}_n(\hat{f}_n,f^{\circ}),\\
    & c= F^2\frac{(9n \log \mathcal{N}_n + 64n)^{\frac{1}{2}}}{n} \quad \text{and} \ d= F^2\frac{(6\log \mathcal{N}_n +11)}{n} + 26\delta F^2.
\end{align*}
In this case, it is satisfied that $|a - b| \leq 2\sqrt{a} c +d$.
By \Cref{lemma : both inequality}, for any $\epsilon, \delta \in (0,1]$,
\begin{align*}
    &R(\hat{f}_n, f^{\circ}) \\
    &\leq (1 + \epsilon)\left(\hat{R}_n(\hat{f}_n, f^{\circ})+ F^2 \frac{6 \log \mathcal{N}_n+11}{n} + 26\delta F^2 \right) + \frac{(1+\epsilon)^2}{\epsilon} F^4 \frac{9n \log \mathcal{N}_n + 64n}{n^2}.
\end{align*}
For the right hand side, using $\epsilon/(1+\epsilon) \leq 1/2$, we get
\begin{align*}
    &(1 + \epsilon)\left(\hat{R}_n(\hat{f}_n, f^{\circ})+ F^2 \frac{6 \log \mathcal{N}_n+11}{n} + 26\delta F^2 \right) + \frac{(1+\epsilon)^2}{\epsilon} F^4 \frac{9n \log \mathcal{N}_n + 64n}{n^2}  \\
    &= (1 + \epsilon)\left(\hat{R}_n(\hat{f}_n, f^{\circ})+ \frac{\epsilon}{1+\epsilon} \cdot \frac{1+\epsilon}{\epsilon} F^2 \frac{6 \log \mathcal{N}_n+11}{n} + 26\delta F^2 \right)  \\
    &\quad \quad \quad + \frac{(1+\epsilon)^2}{\epsilon} F^4 \frac{9n \log \mathcal{N}_n + 64n}{n^2}  \\
    &\leq  (1 + \epsilon)\left(\hat{R}_n(\hat{f}_n, f^{\circ})+ (1+\epsilon) F^4 \frac{12 \log \mathcal{N}_n+ 70}{n \epsilon} + 26\delta F^2 \right). 
\end{align*}
(I) is established.\\
\textit{(II)} : Define an arbitrary estimator $\tilde{f}$ in $\Phi$, and take $j^{\prime}$ such that $|\tilde{f} - f_{j^{\prime}}|_{\infty} \leq \delta$. Then, we have
\begin{align}
    &\left|E\left\lbrack \sum_{i=1}^n \int \xi_i(t)(\tilde{f}(\boldsymbol{X}_i, t) - f_{j^{\prime}}(\boldsymbol{X}_i,t)) dt\right\rbrack \right| \notag \\
    &\leq E \left\lbrack \sum_{i=1}^n \int |\xi_i(t)| \cdot |\tilde{f}(\boldsymbol{X}_i,t) - f_{j^{\prime}}(\boldsymbol{X}_i, t)| dt \right\rbrack \leq \delta E \left\lbrack \sum_{i=1}^n \int |\xi_i(t)| dt \right\rbrack \notag\\
    &=\delta \sum_{i=1}^n E \left\lbrack \int |\xi_i(t)| dt \right\rbrack  = n\delta E\left\lbrack \int |\xi_1(t)| dt\right\rbrack = n\delta K, \label{eq:(II)-1}
\end{align}
where $K=E[\int |\xi_1(t)| dt] < \infty$.

By using Fubini's theorem, we have $E[\int \xi_i(t) f^{\circ}(\boldsymbol{X}_i, t) dt] = \int E[\xi_i(t) f^\circ (\boldsymbol{X}_i, t)] dt = \int E[E[\xi_i(t) f^\circ (\boldsymbol{X}_i, t) | \boldsymbol{X}_i]]dt = 0$. Using  (\ref{eq:(II)-1}), we find
\begin{align}
    &\left|E \left\lbrack \frac{2}{n} \sum_{i=1}^n \int_0^1 \xi_i(t) \tilde{f}(\boldsymbol{X}_i, t) dt \right\rbrack \right| \notag\\
    &= \left| E \left\lbrack \frac{2}{n} \sum_{i=1}^n \int_0^1 \xi_i(t) (\tilde{f}(\boldsymbol{X}_i, t) - f^{\circ}(\boldsymbol{X}_i, t)) dt \right\rbrack \right| \notag\\
    &= \left| E \left\lbrack \frac{2}{n} \sum_{i=1}^n \int_0^1 \xi_i(t) (\tilde{f}(\boldsymbol{X}_i, t) - f_{j^\prime}(\boldsymbol{X}_i, t) + f_{j^\prime}(\boldsymbol{X}_i, t) - f^{\circ}(\boldsymbol{X}_i, t)) dt \right\rbrack \right| \notag\\
    &\leq \frac{2}{n} \left|E\left\lbrack \sum_{i=1}^n \int_0^1 \xi_i(t) (\tilde{f}(\boldsymbol{X}_i, t) - f_{j^\prime}(\boldsymbol{X}_i, t)) dt\right\rbrack \right| \notag \\
    &\quad \quad \quad \quad + \left| E \left\lbrack \frac{2}{n} \sum_{i=1}^n \int_0^1 \xi_i(t) (f_{j^\prime}(\boldsymbol{X}_i, t) - f^{\circ}(\boldsymbol{X}_i, t)) dt \right\rbrack \right| \notag\\
    &\leq 2 \delta K + \frac{2}{\sqrt{n}} E \left\lbrack (\|\tilde{f} - f^{\circ}\|_n + \delta) |\eta_{j^\prime}|\right\rbrack,  \label{eq:(II)-2}
\end{align}
where we define
\begin{equation*}
    \eta_j := \frac{\sum_{i=1}^n \int_0^1 \xi_i(t) (f_j(\boldsymbol{X}_i, t) - f^{\circ}(\boldsymbol{X}_i,t)) dt}{\sqrt{n} \|f_j - f^{\circ}\|_n}. 
\end{equation*}
The last inequality in (\ref{eq:(II)-2}) is obtained using the fact that $\|f_{j^{\prime}} - f^{\circ}\|_n \leq \|f_{j^{\prime}} - \tilde{f}\|_n + \|\tilde{f} - f^{\circ}\|_n \leq \|\tilde{f} - f^{\circ}\|_n + \delta$.

Since $\eta_{j^\prime} \left.\right| \{\boldsymbol{X}_i\}_{i=1}^n$ is a Sub-Gaussian with mean $0$ and variance proxy $\sigma^2$, by \Cref{lemma : schmidt LEMMA C.1}, we obtain $E[\eta_{j^{\prime}}^2] \leq E[\max_j \eta_j^2] \leq 2\sigma^2 \log \mathcal{N}_n + 4\sigma^2$. Using Cauchy-Schwarz,
\begin{align*}
    &E\left\lbrack(\|\tilde{f} - f^{\circ}\|_n + \delta) |\eta_{j^{\prime}}| \bigl.\bigr| \{\boldsymbol{X}_i\}_{i=1}^n \right\rbrack \\
    &\leq \left(E\left\lbrack \|\tilde{f} - f^{\circ}\|_n^2 \bigl.\bigr| \{\boldsymbol{X}_i\}_{i=1}^n\right\rbrack\right)^{1/2} \left(E\left\lbrack |\eta_{j^{\prime}}|^2 \bigl.\bigr| \{\boldsymbol{X}_i\}_{i=1}^n \right\rbrack \right)^{1/2} + \delta \left( E \left\lbrack |\eta_{j^{\prime}}|^2 \bigl.\bigr|\{\boldsymbol{X}_i\}_{i=1}^n \right\rbrack \right) \\
    &\leq \left\lbrace \left(E\left\lbrack \|\tilde{f} - f^{\circ}\|_n^2 \bigl.\bigr| \{\boldsymbol{X}_i\}_{i=1}^n \right\rbrack\right)^{1/2} + \delta \right\rbrace \sqrt{2\sigma^2 \log \mathcal{N}_n + 4\sigma^2} \\
    & = \left(\|\tilde{f} - f^{\circ}\|_n + \delta \right) \sqrt{2\sigma^2 \log \mathcal{N}_n + 4\sigma^2}.
\end{align*}
Taking the expectation with respect to $\{\boldsymbol{X}_i\}_{i=1}^n$,
\begin{align}
    &E \left\lbrack (\|\tilde{f} - f^{\circ}\|_n + \delta) |\eta_{j^{\prime}}|\right\rbrack \notag\\
    &= E \left\lbrack E \left\lbrack (\|\tilde{f} - f^{\circ}\|_n + \delta) |\eta_{j^{\prime}}| \bigl.\bigr| \{\boldsymbol{X}_i\}_{i=1}^n \right\rbrack \right\rbrack \notag \\
    &\leq E \left\lbrack \left( \|\tilde{f} - f^{\circ}\|_n + \delta \right) \sqrt{2\sigma^2 \log \mathcal{N}_n + 4\sigma^2}\right\rbrack \notag\\
    &\leq \left(E \left\lbrack \|\tilde{f} - f^{\circ}\|_n^2\right\rbrack\right)^{1/2} \sqrt{2\sigma^2 \log \mathcal{N}_n + 4\sigma^2} + \delta \sqrt{2\sigma^2 \log \mathcal{N}_n + 4\sigma^2} \notag \\
    &= \left( \hat{R}_n^{1/2} (\tilde{f}, f^{\circ}) + \delta\right) \sqrt{2\sigma^2 \log \mathcal{N}_n + 4\sigma^2}. \label{eq:(II)-3}
\end{align}
Substituting the result of (\ref{eq:(II)-3}) into (\ref{eq:(II)-2}) and using $\log \mathcal{N}_n \leq n$, we have
\begin{align*}
    &\left|E\left\lbrack \frac{2}{n} \sum_{i=1}^n \int \xi_i(t) \tilde{f}(\boldsymbol{X}_i, t) dt\right\rbrack \right| \\
    & \leq 2 \delta K + \frac{2}{\sqrt{n}} E \left\lbrack (\|\tilde{f} - f^{\circ}\|_n + \delta)|\eta_{j^\prime}| \right\rbrack \\
    &\leq 2 \delta K + \frac{2}{\sqrt{n}} (\hat{R}_n^{1/2}(\tilde{f}, f^{\circ}) + \delta) \sqrt{2\sigma^2 \log \mathcal{N}_n + 4\sigma^2} \\
    &= 2 \delta K + 2 \sqrt{\frac{\hat{R}_n(\tilde{f}, f^{\circ}) (2\sigma^2 \log \mathcal{N}_n + 4\sigma^2)}{n}} + 2 \delta \underset{\leq 3\sigma}{\underbrace{\sqrt{\frac{2\sigma^2 \log \mathcal{N}_n + 4\sigma^2}{n}}}} \\
   &\leq 2 \sqrt{\frac{\hat{R}_n(\tilde{f}, f^{\circ}) (2\sigma^2 \log \mathcal{N}_n + 4\sigma^2)}{n}}  + 2 \delta K + 6\delta \sigma.
\end{align*}
Thus, (II) is proven.

\textit{(III)} : For any $f \in \Phi$, by the definition of $\Delta_n$, we find
\begin{equation*}
    E\left\lbrack \frac{1}{n} \sum_{i=1}^n \int (Y_i(t) - \hat{f}(\boldsymbol{X}_i, t))^2 dt\right\rbrack \leq E \left\lbrack \frac{1}{n} \sum_{i=1}^n \int (Y_i(t) - f(\boldsymbol{X}_i, t))^2 dt \right\rbrack + \Delta_n. 
\end{equation*}
Since $\boldsymbol{X}_i$ and $\boldsymbol{X}$ are independently and identically distributed and $f, f^{\circ}$ do not depend on them, it holds that $E[\|f - f^{\circ}\|_n^2] = E[\frac{1}{n} \sum_{i=1}^n \int (f(\boldsymbol{X}_i, t) - f^{\circ}(\boldsymbol{X}_i, t))^2 dt] = E[\int (f(\boldsymbol{X}, t) - f^{\circ}(\boldsymbol{X}, t))^2 dt]$. 
Using $E[\int \xi_i(t) f(\boldsymbol{X}_i, t) dt] = 0$, we have
\begin{equation*}
    \hat{R}_n(\hat{f}, f^{\circ}) \leq E \left\lbrack \int (f(\boldsymbol{X},t) - f^{\circ}(\boldsymbol{X},t))^2 dt\right\rbrack + E \left\lbrack \frac{2}{n} \sum_{i=1}^n \int \xi_i(t) \hat{f}_n(\boldsymbol{X}_i, t) dt\right\rbrack + \Delta_n.
\end{equation*}
From the result of (II), we have
\begin{align*}
    \hat{R}_n(\hat{f}, f^{\circ}) &\leq E \left\lbrack \int (f(\boldsymbol{X},t) - f^{\circ}(\boldsymbol{X},t))^2 dt\right\rbrack  \\
    &+ 2 \sqrt{\frac{\hat{R}_n(\hat{f}_n , f^{\circ}) (2\sigma^2 \log \mathcal{N}_n + 4\sigma^2)}{n}} + 2\delta K + 6\delta \sigma + \Delta_n. 
\end{align*}
Here, we define $a, b, c,$ and $d$ in \Cref{lemma : both inequality} as follows:
\begin{align*}
    & a= \hat{R}_n(\hat{f}_n, f^{\circ}) , \  b = 0, \ c= \sqrt{\frac{2\sigma^2 \log \mathcal{N}_n + 4\sigma^2}{n}} , \\ 
    & \text{and} \ d = E \left\lbrack \int (f(\boldsymbol{X},t) - f^{\circ}(\boldsymbol{X},t))^2 dt\right\rbrack + 2\delta K + 6\delta \sigma + \Delta_n.
\end{align*}
In this case, it is satisfied that $|a - b| \leq 2 \sqrt{a} c + d$, and using $1+\epsilon \leq 2$ and $F \geq 1$, we obtain
\begin{align*}
    \hat{R}_n(\hat{f}, f^{\circ}) &\leq (1 + \epsilon)\left(E\left\lbrack\int (f(\boldsymbol{X}, t) - f^{\circ}(\boldsymbol{X}, t))^2 dt\right\rbrack + 2\delta K + 6\delta \sigma + \Delta_n \right)  \\
    &\hspace{5cm} + \frac{(1+\epsilon)^2}{\epsilon} \frac{2\sigma^2 \log \mathcal{N}_n + 4\sigma^2}{n} \\
    &\leq (1 + \epsilon)\left(E\left\lbrack\int (f(\boldsymbol{X}, t) - f^{\circ}(\boldsymbol{X}, t))^2 dt\right\rbrack + 2\delta K + 6\delta \sigma +  \frac{4\sigma^2 \log \mathcal{N}_n + 8\sigma^2}{n\epsilon} +\Delta_n  \right) \\
    &\leq (1 + \epsilon)\left(E\left\lbrack\int (f(\boldsymbol{X}, t) - f^{\circ}(\boldsymbol{X}, t))^2 dt\right\rbrack + 2\delta K + 6\delta \sigma +  F^2 \frac{4\sigma^2 \log \mathcal{N}_n + 8\sigma^2}{n\epsilon} +\Delta_n  \right).
\end{align*}
Since $f$ is arbitrary, we take the infimum over $f \in \Phi$ on the right hand side, and we obtain 
\begin{align*}
    \hat{R}_n(\hat{f}_n, f^{\circ}) &\leq (1+\epsilon) \left( \inf_{f \in \Phi} E \left\lbrack \int (f(\boldsymbol{X},t) - f^{\circ}(\boldsymbol{X},t))^2 dt\right\rbrack + \right. \\ 
    &\hspace{5cm} \left. 2\delta K + 6\delta \sigma + F^2 \frac{4\sigma^2 \log \mathcal{N}_n + 8\sigma^2}{n \epsilon} + \Delta_n \right).
\end{align*}
This completes the proof.

\begin{lemma}{\textnormal{(\cite{schmidt2020Nonpara})}} \label{lemma : both inequality}
    Let $a, b, c, d > 0$ be such that $|a - b| \leq 2 \sqrt{a} c + d$. Then, for any $\epsilon \in (0,1]$, the following holds:
    \begin{equation*}
        a \leq (1+\epsilon) (b+d) + \frac{(1+\epsilon)^2}{\epsilon} c^2.
    \end{equation*}
\end{lemma}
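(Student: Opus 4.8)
The plan is to derive the claim as a direct consequence of the weighted arithmetic--geometric mean (Young) inequality applied to the cross term $2\sqrt{a}\,c$, with the free weight in that inequality chosen so that the resulting coefficients match the target constants $1+\epsilon$ and $(1+\epsilon)^2/\epsilon$ exactly.

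First I would discard the lower bound contained in the hypothesis $|a-b| \le 2\sqrt{a}\,c + d$ and retain only the one-sided inequality $a - b \le 2\sqrt{a}\,c + d$, equivalently
\[
a \le b + d + 2\sqrt{a}\,c.
\]
The entire (mild) difficulty lies in controlling the term $2\sqrt{a}\,c$ by an expression of the form $(\text{const})\,a + (\text{const})\,c^2$ in such a way that, after rearrangement, the coefficient left on $a$ is strictly positive.

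The key step is to invoke the elementary inequality $2uv \le \alpha u^2 + \alpha^{-1} v^2$, valid for every $\alpha > 0$ and all $u,v \ge 0$, with $u = \sqrt{a}$, $v = c$, and the specific choice $\alpha = \epsilon/(1+\epsilon)$. This produces
\[
2\sqrt{a}\,c \le \frac{\epsilon}{1+\epsilon}\,a + \frac{1+\epsilon}{\epsilon}\,c^2.
\]
Substituting into the displayed bound for $a$ and collecting the $a$-terms leaves on the left the coefficient $1 - \frac{\epsilon}{1+\epsilon} = \frac{1}{1+\epsilon}$, which is strictly positive precisely because $\epsilon > 0$, so no division by a nonpositive quantity can occur.

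Finally I would multiply both sides by $1+\epsilon$ to isolate $a$, obtaining
\[
a \le (1+\epsilon)(b+d) + \frac{(1+\epsilon)^2}{\epsilon}\,c^2,
\]
which is exactly the asserted inequality. I do not anticipate any genuine obstacle: the only point demanding care is the selection of the weight $\alpha = \epsilon/(1+\epsilon)$ rather than an arbitrary one, since it is this particular value that simultaneously yields the factor $1/(1+\epsilon)$ on $a$ and the constant $(1+\epsilon)^2/\epsilon$ on $c^2$. The restriction $\epsilon \in (0,1]$ plays no role in the derivation itself (any $\epsilon > 0$ would suffice), but it is retained because that is the range in which the lemma is applied within the proofs of \Cref{lemma : main} and \Cref{thm1}.
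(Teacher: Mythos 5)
Your proposal is correct and follows essentially the same route as the paper's own proof: both start from the one-sided consequence $a \leq b + d + 2\sqrt{a}\,c$, apply the weighted arithmetic--geometric mean inequality with weight $\epsilon/(1+\epsilon)$ to bound $2\sqrt{a}\,c$ by $\frac{\epsilon}{1+\epsilon}a + \frac{1+\epsilon}{\epsilon}c^2$, rearrange to get the factor $\frac{1}{1+\epsilon}$ on $a$, and multiply through by $1+\epsilon$. Your additional observation that the restriction $\epsilon \in (0,1]$ is not needed for this lemma itself is also accurate.
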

\begin{proof}
By $|a - b|\leq 2\sqrt{a} c + d$, we obtain $a \leq 2\sqrt{a}c + b + d$.
For the upper bound, by the arithmetic-geometric mean inequality, we have
\begin{align*}
    &a \leq 2 \sqrt{a} c + b + d \leq \frac{\epsilon}{1+\epsilon} a + \frac{1+\epsilon}{\epsilon} c^2 + b + d.
\end{align*}
Rearranging the terms for $a$, we have
\begin{align*}
    &\left(1 - \frac{\epsilon}{1+\epsilon} \right) a \leq \frac{1+\epsilon}{\epsilon}c^2 + b+ d.
\end{align*}
By multiplying both sides by $1+\epsilon$, we obtain
\begin{equation*}
    a \leq (1+\epsilon)(b+d) + \frac{(1+\epsilon)^2}{\epsilon} c^2.
\end{equation*} 
\end{proof}

\begin{lemma}{}\label{lemma : schmidt LEMMA C.1} 
    Let $\eta_j, j=1,\ldots, M$, be i.i.d. Sub-Gaussian with mean $0$ and variance proxy $\sigma^2$. Then, it holds that $E[\max_{j=1,\ldots, M} \eta_j^2] \leq 2 \sigma^2 \log M + 4 \sigma^2$.
\end{lemma}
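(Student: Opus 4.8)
The plan is to bound the expectation of the nonnegative random variable $Z := \max_{j=1,\ldots,M} \eta_j^2$ through its tail via the layer-cake identity $E[Z] = \int_0^\infty P(Z \geq u)\,du$, controlling the tail by a Chernoff estimate combined with a union bound. First I would record the sub-Gaussian tail: since each $\eta_j$ has mean zero and variance proxy $\sigma^2$, Markov's inequality applied to $e^{s\eta_j}$ together with $E[e^{s\eta_j}] \leq e^{s^2\sigma^2/2}$ and optimization over $s > 0$ gives $P(\eta_j \geq t) \leq e^{-t^2/(2\sigma^2)}$; applying the same argument to $-\eta_j$ and adding yields $P(|\eta_j| \geq t) \leq 2 e^{-t^2/(2\sigma^2)}$. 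Rewriting in terms of the square, $P(\eta_j^2 \geq u) = P(|\eta_j| \geq \sqrt{u}) \leq 2 e^{-u/(2\sigma^2)}$ for every $u \geq 0$.

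Next I would pass to the maximum. A union bound over $j = 1, \ldots, M$, together with the trivial bound by $1$, gives $P(Z \geq u) \leq \min\{1,\, 2M e^{-u/(2\sigma^2)}\}$. Inserting this into the layer-cake identity and splitting the integral at a threshold $u_0$ to be chosen, I would use the bound $1$ on $[0, u_0]$ and the exponential bound on $[u_0, \infty)$:
\[
E[Z] \leq u_0 + \int_{u_0}^\infty 2M e^{-u/(2\sigma^2)}\,du = u_0 + 4M\sigma^2 e^{-u_0/(2\sigma^2)}.
\]

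Finally I would optimize the split. Choosing $u_0 = 2\sigma^2 \log M$ makes $e^{-u_0/(2\sigma^2)} = 1/M$, so the second term collapses to $4\sigma^2$ and the right-hand side becomes $2\sigma^2 \log M + 4\sigma^2$, exactly the claimed bound. I do not anticipate a genuine obstacle: the only point requiring care is selecting the threshold $u_0$ so that the two contributions balance and the constants match the statement (the degenerate cases $\sigma = 0$ or $M = 1$ are checked trivially). I also note that independence of the $\eta_j$ is not actually needed — only the common sub-Gaussian tail — since the union bound requires no joint structure.
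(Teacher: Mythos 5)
Your proof is correct and follows essentially the same route as the paper: the layer-cake identity for $E[\max_j \eta_j^2]$, a union bound combined with the sub-Gaussian tail $P(\eta_j^2 \geq u) \leq 2e^{-u/(2\sigma^2)}$, and the split at the threshold $u_0 = 2\sigma^2 \log M$ yielding $2\sigma^2\log M + 4\sigma^2$. The only cosmetic differences are that you derive the tail bound from the moment generating function (which the paper takes as standard) and you correctly observe that independence is never used.
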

\begin{proof}
    Let $Z = \max_{j=1, \ldots, M} \eta_j^2$. Since $P(|\eta_j| \geq t) \leq 2 \exp \left(- \frac{t^2}{2 \sigma^2} \right)$ for $t > 0$,
\begin{align*}
    E[Z] &= \int_0^\infty P(Z \geq t) dt = \int_0^T P(Z \geq t) dt + \int_T^\infty P(Z \geq t) dt \\
    &\leq T + \int_T^\infty P(Z \geq t) dt \leq T \leq T + M \int_T^\infty P(\eta_1^2 \geq t) dt \\
    &\leq T + 2M \int_T^\infty P(\eta_1^2 \geq t) dt \leq T + 2M \int_T^\infty e^{- \frac{t}{2 \sigma^2}} dt = T + 4M\sigma^2 e^{- \frac{T}{2 \sigma^2}}.
\end{align*}
By taking $T = 2 \sigma^2 \log M$, we obtain $E[Z] \leq 2 \sigma^2 \log M + 4\sigma^2$.    
\end{proof}

\newpage

\section{Additional infomation in numerical experiments} \label{sec: Appendix C}
\leavevmode

\subsection*{Scenario 1}
\leavevmode

\begin{figure}[h]
    \centering
    \includegraphics[width=\linewidth]{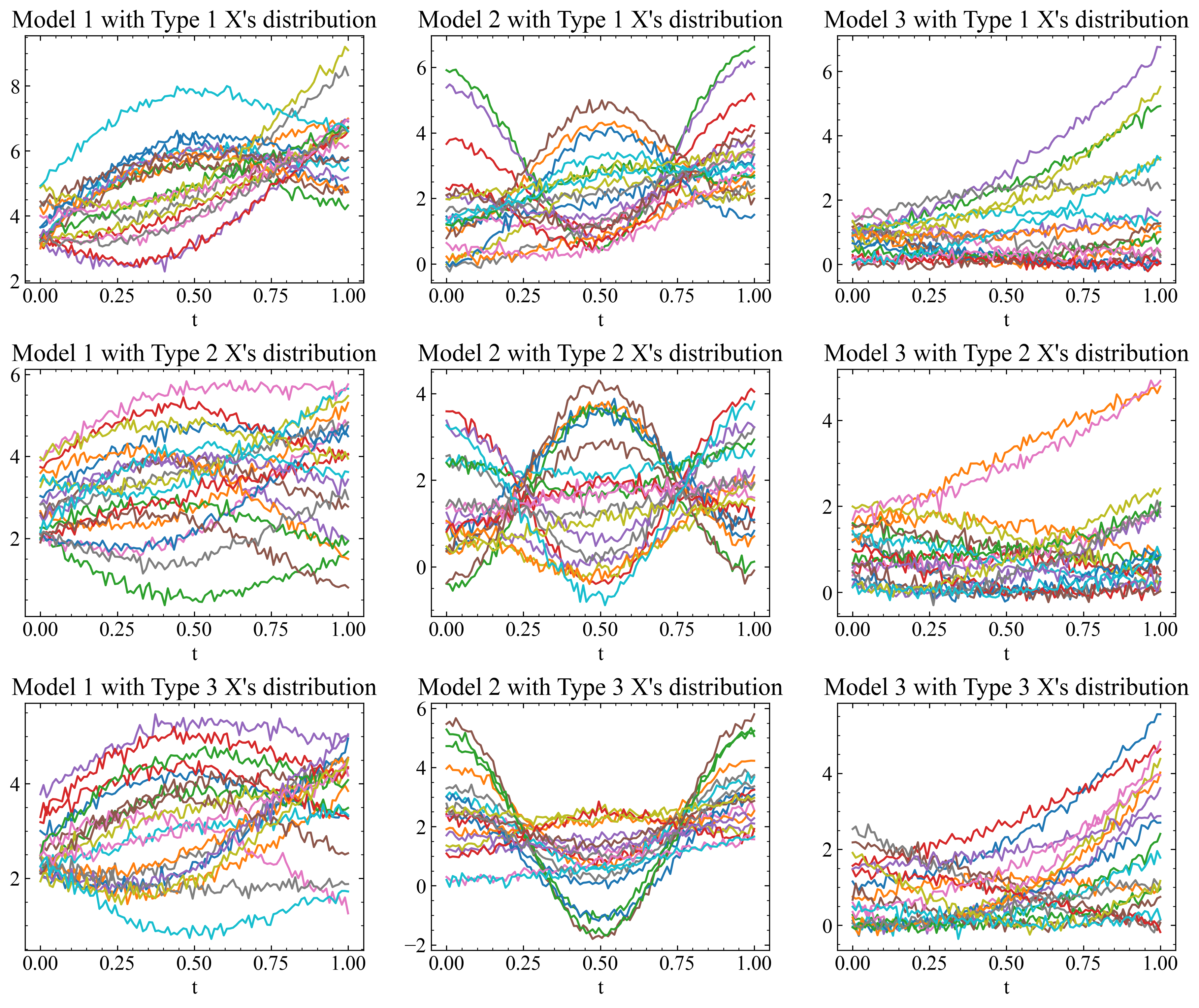}
    \caption{Sample curves of $Y(t)$ for the 9 settings which are nine different combinations of three models and three types of joint distributions of $(X_1, X_2, X_3)$ in Scenario 1.}
    \label{fig:Simulation 1}
\end{figure}

\clearpage

\begin{table}[h]
    \centering
    \caption{Averages (and standard deviations) of the MISPEs from 50 replicates for all settings and methods in Scenario 1 ($N_{\text{train}} = 1000$). 
    For the \textit{FOS-DNN}, we set $W=32$, $L=6$, and $\alpha = 10^{-5}$. For \textit{FUA}, \textit{famm.nl} and \textit{famm.ad}, the parameter settings in Luo and Qi (2023) are used.}
    \label{tab:2}
    \resizebox{\textwidth}{!}{
    \begin{tabular}{|c|c|c|c|c|c|c|}
  \hline
   \textbf{X-type} &  \textbf{Model} &  \textbf{FOS-DNN} &  \textbf{FUA} &  \textbf{famm.nl} &  \textbf{famm.ad} &  \textbf{fosr} \\
  \hline \hline 
   1 &  1 &  0.011 (0.000) &  \textbf{0.010 (0.000)} &  0.011 (0.000) &  \textbf{0.010 (0.000)} &  0.124 (0.004) \\
   &  2 &  0.013 (0.001) &  \textbf{0.011 (0.000)} &  0.013 (0.000) &  0.615 (0.027) &  1.015 (0.024)\\
   &  3 &  \textbf{0.011 (0.000)} & \textbf{0.011 (0.000)} &  \textbf{0.011 (0.000)} &  0.810 (0.031) &  1.018 (0.038)  \\ \hline
   2 &  1 &  0.012 (0.001) &  \textbf{0.011 (0.000)} &  0.016 (0.003) &  \textbf{0.011 (0.001)} &  0.452 (0.019)\\
   &  2 &  0.017 (0.001) &  0.017 (0.007) &  \textbf{0.014 (0.001)} &  0.758 (0.034) &  1.015 (0.026)\\
   &  3 &  \textbf{0.012 (0.000)} & \textbf{0.012 (0.001)} &  \textbf{0.012 (0.001)} &   0.795 (0.030) &  1.009 (0.031)\\
  \hline
   3 &  1 &  0.012 (0.001) & \textbf{0.011 (0.001)} &  0.015 (0.002) &  0.011 (0.001) &  0.379 (0.015) \\
  &  2 &  0.016 (0.001) & \textbf{0.014 (0.001)} &  0.015 (0.001) &  0.471 (0.030)  &  1.016 (0.020) \\
  &  3 &  \textbf{0.012 (0.000)} & 0.013 (0.001) &  0.014 (0.001) &  0.806 (0.031)  &  1.020 (0.032) \\ \hline 
\end{tabular}
}
\end{table}

\clearpage

\begin{table}[t]
    \centering
    \caption{Averages (Standard deviations) of 3-fold Cross-Validation in Model 2 of Scenario 1 ($N_{\text{train}} = 200$).}
    \begin{tabular}{|c|c|c|l||c|c|c|l||c|c|c|l|}
    \hline \hline
        W & L & $\alpha$ & mean (std) & W & L & $\alpha$ & mean (std) & W & L & $\alpha$ & mean (std) \\ \hline \hline
         &  & $10^{-9}$ & 0.311 (0.066) &  &  & $10^{-9}$ & 0.112 (0.036) &  &  & $10^{-9}$ & 0.072 (0.025) \\ 
         &  & $10^{-7}$ & 0.425 (0.097) &  &  & $10^{-7}$ & 0.110 (0.042) &  &  & $10^{-7}$ & 0.066 (0.014) \\ 
         8 & 5 & $10^{-5}$ & 0.393 (0.114) & 16 & 5 & $10^{-5}$ & 0.097 (0.025) & 32 & 5 & $10^{-5}$ & 0.068 (0.021) \\ 
         &  & $10^{-3}$ & 0.234 (0.081) &  &  & $10^{-3}$ & 0.110 (0.040) &  &  & $10^{-3}$ & 0.064 (0.029) \\ 
         &  & $10^{-1}$ & 1.053 (0.096) &  &  & $10^{-1}$ & 1.052 (0.096) &  &  & $10^{-1}$ & 1.052 (0.096) \\ \hdashline
         &  & $10^{-9}$ & 0.388 (0.062) &  &  & $10^{-9}$ & 0.089 (0.030) &  &  & $10^{-9}$ & 0.110 (0.053) \\ 
         &  & $10^{-7}$ & 0.307 (0.216) &  & & $10^{-7}$ & 0.119 (0.032) &  &  & $10^{-7}$ & 0.075 (0.028) \\ 
         8 & 6 & $10^{-5}$ & 0.225 (0.066) & 16 & 6 & $10^{-5}$ & 0.113 (0.051) & 32 & 6 & $10^{-5}$ & 0.092 (0.048) \\
         &  & $10^{-3}$ & 0.180 (0.108) &  &  & $10^{-3}$ & 0.109 (0.028) &  &  & $10^{-3}$ & 0.071 (0.046) \\
         &  & $10^{-1}$ & 1.052 (0.096) &  &  & $10^{-1}$ & 1.052 (0.097) &  &  & $10^{-1}$ & 1.052 (0.096) \\ \hdashline
         &  & $10^{-9}$ & 0.476 (0.356) &  &  & $10^{-9}$ & 0.148 (0.073) &  &  & $10^{-9}$ & 0.087 (0.033) \\ 
         &  & $10^{-7}$ & 0.264 (0.148) &  &  & $10^{-7}$ & 0.152 (0.045) &  &  & $10^{-7}$ & 0.072 (0.023) \\ 
         8& 7 & $10^{-5}$ & 0.249 (0.137) & 16 & 7 & $10^{-5}$ & 0.108 (0.047) & 32 & 7 & $10^{-5}$ & 0.087 (0.015) \\ 
         &  & $10^{-3}$ & 0.206 (0.106) &  &  & $10^{-3}$ & 0.101 (0.025) &  &  & $10^{-3}$ & 0.057 (0.018) \\ 
         &  & $10^{-1}$ & 1.052 (0.096) &  &  & $10^{-1}$ & 1.052 (0.096) &  &  & $10^{-1}$ & 1.053 (0.097) \\ \hline 
    \end{tabular}
\end{table}

\begin{table}[t]
    \centering
    \caption{Averages (Standard deviations) of 3-fold Cross-Validation in Model 2 of Scenario 1 ($N_{\text{train}} = 1000$).}
    \begin{tabular}{|c|c|c|l||c|c|c|l||c|c|c|l|}
    \hline \hline
        W & L & $\alpha$ & mean (std) & W & L & $\alpha$ & mean (std) & W & L & $\alpha$ & mean (std) \\ \hline \hline
 &  & $10^{-9}$ & 0.087 (0.013) &  &  & $10^{-9}$ & 0.029 (0.002) &  &  & $10^{-9}$ & 0.019 (0.001) \\
 &  & $10^{-7}$ & 0.129 (0.031) &  &  & $10^{-7}$ & 0.036 (0.006) &  &  & $10^{-7}$ & 0.019 (0.001) \\
8 & 5 & $10^{-5}$ & 0.086 (0.014) & 16 & 5 & $10^{-5}$ & 0.035 (0.004) & 32 & 5 & $10^{-5}$ & 0.020 (0.002) \\
 &  & $10^{-3}$ & 0.113 (0.015) &  &  & $10^{-3}$ & 0.038 (0.002) &  &  & $10^{-3}$ & 0.024 (0.001) \\
 &  & $10^{-1}$ & 1.087 (0.036) &  &  & $10^{-1}$ & 1.088 (0.036) &  &  & $10^{-1}$ & 1.087 (0.038) \\ \hdashline
 &  & $10^{-9}$ & 0.091 (0.022) &  &  & $10^{-9}$ & 0.028 (0.007) &  &  & $10^{-9}$ & 0.018 (0.000) \\
 &  & $10^{-7}$ & 0.105 (0.028) &  &  & $10^{-7}$ & 0.026 (0.001) &  &  & $10^{-7}$ & 0.018 (0.000) \\
8 & 6 & $10^{-5}$ & 0.089 (0.004) & 16 & 6 & $10^{-5}$ & 0.025 (0.001) & 32 & 6 & $10^{-5}$ & 0.019 (0.002) \\
 &  & $10^{-3}$ & 0.090 (0.018) &  &  & $10^{-3}$ & 0.031 (0.005) &  &  & $10^{-3}$ & 0.022 (0.001) \\
 &  & $10^{-1}$ & 1.087 (0.036) &  &  & $10^{-1}$ & 1.088 (0.037) &  &  & $10^{-1}$ & 1.085 (0.036) \\ \hdashline
 &  & $10^{-9}$ & 0.071 (0.013) &  &  & $10^{-9}$ & 0.024 (0.003) &  &  & $10^{-9}$ & 0.017 (0.001) \\
 &  & $10^{-7}$ & 0.086 (0.027) &  &  & $10^{-7}$ & 0.027 (0.002) &  &  & $10^{-7}$ & 0.019 (0.001) \\
8 & 7 & $10^{-5}$ & 0.097 (0.015) & 16 & 7 & $10^{-5}$ & 0.023 (0.001) & 32 & 7 & $10^{-5}$ & 0.018 (0.001) \\
 &  & $10^{-3}$ & 0.088 (0.015) &  &  & $10^{-3}$ & 0.033 (0.005) &  &  & $10^{-3}$ & 0.020 (0.001) \\
 &  & $10^{-1}$ & 1.087 (0.036) &  &  & $10^{-1}$ & 1.088 (0.035) &  &  & $10^{-1}$ & 1.087 (0.036) \\
    \hline
    \end{tabular}
\end{table}

\clearpage

\subsection*{Scenario 1-A}
\leavevmode

\begin{align*}
    \textbf{Model 1} : \ &f^\circ(X_1,X_2, X_3, X_4, X_5, t) = \\
    & \quad\sin (X_1 + t) + \frac{(X_2 + t)^2}{X_2^2 + 1} + \frac{e^{X_3 t}}{1+e^{X_3}} + \log (1 + X_4^2 + t^2) + X_5(1-t) \cos (2 \pi t), \\
    \textbf{Model 2} : \ & f^\circ(X_1,X_2, X_3, X_4, X_5, t) = \cos(X_1 - 2 X_2 + X_3) t^2 + \sin (X_4 + X_5) e^t ,\\
    \textbf{Model 3} :  \ & f^\circ(X_1,X_2, X_3, X_4, X_5, t) = \frac{X_1 - X_2 t + 2 X_3 t^2}{1 + 2 X_4^2 + 4 X_5^2 \cos^2 (2 \pi t)}.
\end{align*}

\begin{figure}[h]
    \centering
    \includegraphics[width=\linewidth]{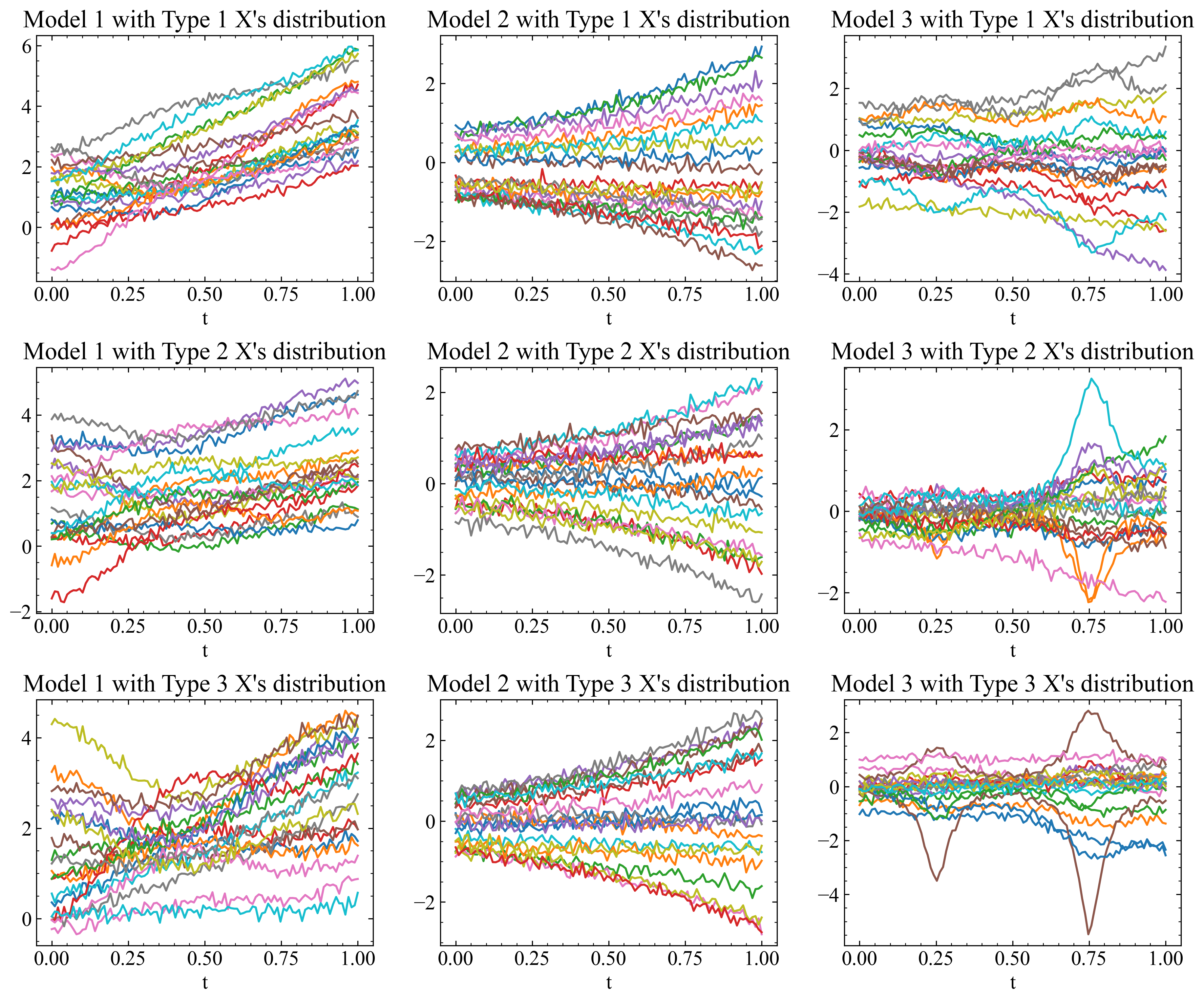}
    \caption{Sample curves of $Y(t)$ for the 9 settings which are nine different combinations of three models and three types of joint distributions of $(X_1, X_2, X_3)$ in Scenario 1-A. }
    \label{fig:Simulation 2}
\end{figure}

\clearpage

\begin{table}[h]
    \centering
    \caption{Averages (and standard deviations) of the MISPEs from 100 replicates for all settings and methods in Scenario 1-A ($N_{\text{train}} = 200$). 
    For the \textit{FOS-DNN}, we set $W=32$, $L=6$, and $\alpha = 10^{-3}$. For \textit{FUA} and \textit{famm.ad}, the parameter settings in \cite{luo2023nonlinear} are used.}
    \label{tab:Simulation 2 200}
\begin{tabular}{|c|c|c|c|c|c|c|}
  \hline
   \textbf{X-type} &  \textbf{Model} &  \textbf{FOS-DNN} &  \textbf{FUA} &  \textbf{famm.ad} &  \textbf{fosr} \\
  \hline \hline 
   1 &  1 &  0.016 (0.001) &  \textbf{0.012 (0.000)} &  \textbf{0.012 (0.000)} &  0.128 (0.005) \\
   &  2 &  0.013 (0.001) &  \textbf{0.011 (0.000)} &  0.090 (0.005) &  0.097 (0.004) \\
   &  3 & 0.058 (0.006) &  \textbf{0.012 (0.001)} &  0.172 (0.019) &  0.143(0.006)  \\ \hline
   2 &  1 &  0.037 (0.006) & \textbf{0.017 (0.002)} &  0.014 (0.006) &  0.406 (0.021)\\
   &  2 &  0.035 (0.008) & \textbf{0.027 (0.025)} &  0.553 (0.067) &  0.570 (0.039) \\
   &  3 &  0.145 (0.016) & \textbf{0.014 (0.002)} &  0.406 (0.039) &   0.318 (0.014)\\
  \hline
   3 &  1 & 0.029 (0.004) &  \textbf{0.016 (0.002)} &  0.013 (0.001) &  0.367 (0.025) \\
  &  2 &  0.029 (0.007) & \textbf{0.026 (0.009)} &  0.425 (0.065) &  0.758 (0.039) \\
  &  3 &  0.124 (0.018) &  \textbf{0.014 (0.002)} &  0.362 (0.048) &  0.321 (0.016)  \\ \hline 
\end{tabular}
\end{table}

\begin{table}[h]
    \centering
    \caption{Averages (and standard deviations) of the MISPEs from 50 replicates for all settings and methods in Scenario 1-A ($N_{\text{train}} = 1000$). 
    For the \textit{FOS-DNN}, we set $W=32$, $L=6$, and $\alpha = 10^{-5}$. For \textit{FUA} and \textit{famm.ad}, the parameter settings in \cite{luo2023nonlinear} are used.}
    \label{tab:Simulation 2 1000}
\begin{tabular}{|c|c|c|c|c|c|c|}
  \hline
   \textbf{X-type} &  \textbf{Model} &  \textbf{FOS-DNN} &  \textbf{FUA} &  \textbf{famm.ad} &  \textbf{fosr} \\
  \hline \hline 
   1 &  1 &  0.011 (0.000) & \textbf{0.010 (0.000)} &  0.012 (0.000) &  0.126 (0.005)  \\
   &  2 &  \textbf{0.010 (0.000)} &  \textbf{0.010 (0.000)} &  0.079 (0.003) &  0.095 (0.004) \\
   &  3 &  0.017 (0.001) &  \textbf{0.011 (0.000)} &  0.150 (0.007) &  0.141 (0.007)   \\ \hline
   2 &  1 &  \textbf{0.013 (0.001)} &  \textbf{0.011 (0.000)} &  0.013 (0.000) &  0.402 (0.017) \\
   &  2 &  \textbf{0.013 (0.002)} & \textbf{0.013 (0.002)} &  0.433 (0.031) &  0.555 (0.046) \\
   &  3 &  0.025 (0.003) & \textbf{0.011 (0.000)} &  0.334 (0.021) &   0.311 (0.019) \\
  \hline
   3 &  1 &  0.013 (0.001) & \textbf{0.011 (0.000)} &  0.013 (0.000) &  0.356 (0.014)  \\
  &  2 &  0.014 (0.002) & \textbf{0.013 (0.002)} &  0.323 (0.021) &  0.736 (0.041)  \\
  &  3 &  0.022 (0.003) & \textbf{0.011 (0.000)} &  0.284 (0.015) &  0.311 (0.014) \\ \hline 
\end{tabular}
\end{table}

\begin{table}[t]
    \centering
    \caption{Averages (Standard deviations) of 3-fold Cross-Validation in Model 2 of Scenario 1-A ($N_{\text{train}} = 200$).}
    \begin{tabular}{|c|c|c|l||c|c|c|l||c|c|c|l|}
    \hline \hline
        W & L & $\alpha$ & mean (std) & W & L & $\alpha$ & mean (std) & W & L & $\alpha$ & mean (std) \\ \hline \hline
 &  & $10^{-9}$ & 0.080 (0.015) &  &  & $10^{-9}$ & 0.140 (0.018) &  &  & $10^{-9}$ & 0.164 (0.017) \\
 &  & $10^{-7}$ & 0.094 (0.014) &  &  & $10^{-7}$ & 0.165 (0.048) &  &  & $10^{-7}$ & 0.171 (0.019) \\
8 & 5 & $10^{-5}$ & 0.133 (0.035) & 16 & 5 & $10^{-5}$ & 0.126 (0.040) & 32 & 5 & $10^{-5}$ & 0.165 (0.011) \\ 
 &  & $10^{-3}$ & 0.051 (0.010) &  &  & $10^{-3}$ & 0.102 (0.040) &  &  & $10^{-3}$ & 0.083 (0.013) \\
 &  & $10^{-1}$ & 1.087 (0.027) &  &  & $10^{-1}$ & 1.086 (0.026) &  &  & $10^{-1}$ & 0.656 (0.331) \\ \hdashline
 &  & $10^{-9}$ & 0.090 (0.024) &  &  & $10^{-9}$ & 0.174 (0.012) &  &  & $10^{-9}$ & 0.124 (0.036) \\
 &  & $10^{-7}$ & 0.094 (0.031) &  &  & $10^{-7}$ & 0.149 (0.035) &  &  & $10^{-7}$ & 0.150 (0.001) \\
8 & 6 & $10^{-5}$ & 0.080 (0.030) & 16 & 6 & $10^{-5}$ & 0.195 (0.066) & 32 & 6 & $10^{-5}$ & 0.144 (0.060) \\
 &  & $10^{-3}$ & 0.050 (0.011) &  &  & $10^{-3}$ & 0.068 (0.012) &  &  & $10^{-3}$ & 0.063 (0.018) \\
 &  & $10^{-1}$ & 1.087 (0.026) &  &  & $10^{-1}$ & 1.087 (0.026) &  &  & $10^{-1}$ & 1.087 (0.027) \\ \hdashline
 &  & $10^{-9}$ & 0.118 (0.003) &  &  & $10^{-9}$ & 0.126 (0.014) &  &  & $10^{-9}$ & 0.096 (0.031) \\
 &  & $10^{-7}$ & 0.139 (0.054) &  &  & $10^{-7}$ & 0.122 (0.043) &  &  & $10^{-7}$ & 0.162 (0.022) \\
8 & 7 & $10^{-5}$ & 0.097 (0.002) & 16 & 7 & $10^{-5}$ & 0.162 (0.060) & 32 & 7 & $10^{-5}$ & 0.122 (0.029) \\
 &  & $10^{-3}$ & 0.041 (0.010) &  &  & $10^{-3}$ & 0.066 (0.015) &  &  & $10^{-3}$ & 0.071 (0.017) \\
 &  & $10^{-1}$ & 1.087 (0.026) &  &  & $10^{-1}$ & 1.087 (0.027) &  &  & $10^{-1}$ & 1.087 (0.027) \\
    \hline
    \end{tabular}
\end{table}

\begin{table}[t]
    \centering
    \caption{Averages (Standard deviations) of 3-fold Cross-Validation in Model 2 of Scenario 1-A ($N_{\text{train}} = 1000$).}
    \begin{tabular}{|c|c|c|l||c|c|c|l||c|c|c|l|}
    \hline \hline
        W & L & $\alpha$ & mean (std) & W & L & $\alpha$ & mean (std) & W & L & $\alpha$ & mean (std) \\ \hline \hline
 &  & $10^{-9}$ & 0.019 (0.001) &  &  & $10^{-9}$ & 0.016 (0.001) &  &  & $10^{-9}$ & 0.014 (0.002) \\
 &  & $10^{-7}$ & 0.022 (0.004) &  &  & $10^{-7}$ & 0.014 (0.001) &  &  & $10^{-7}$ & 0.014 (0.002) \\
8 & 5 & $10^{-5}$ & 0.022 (0.005) & 16 & 5 & $10^{-5}$ & 0.017 (0.005) & 32 & 5 & $10^{-5}$ & 0.013 (0.001) \\
 &  & $10^{-3}$ & 0.025 (0.004) &  &  & $10^{-3}$ & 0.017 (0.003) &  &  & $10^{-3}$ & 0.015 (0.002) \\
 &  & $10^{-1}$ & 0.574 (0.304) &  &  & $10^{-1}$ & 0.985 (0.023) &  &  & $10^{-1}$ & 0.574 (0.304) \\ \hdashline
 &  & $10^{-9}$ & 0.017 (0.003) &  &  & $10^{-9}$ & 0.014 (0.002) &  &  & $10^{-9}$ & 0.014 (0.001) \\
 &  & $10^{-7}$ & 0.019 (0.005) &  &  & $10^{-7}$ & 0.013 (0.001) &  &  & $10^{-7}$ & 0.015 (0.002) \\
8 & 6 & $10^{-5}$ & 0.018 (0.002) & 16 & 6 & $10^{-5}$ & 0.014 (0.000) & 32 & 6 & $10^{-5}$ & 0.014 (0.002) \\
 &  & $10^{-3}$ & 0.023 (0.006) &  &  & $10^{-3}$ & 0.017 (0.004) &  &  & $10^{-3}$ & 0.015 (0.003) \\ 
 &  & $10^{-1}$ & 0.985 (0.022) &  &  & $10^{-1}$ & 0.985 (0.022) &  &  & $10^{-1}$ & 0.985 (0.023) \\ \hdashline
 &  & $10^{-9}$ & 0.018 (0.002) &  &  & $10^{-9}$ & 0.013 (0.001) &  &  & $10^{-9}$ & 0.015 (0.003) \\
 &  & $10^{-7}$ & 0.017 (0.002) &  &  & $10^{-7}$ & 0.014 (0.002) &  &  & $10^{-7}$ & 0.014 (0.002) \\
8 & 7 & $10^{-5}$ & 0.017 (0.002) & 16 & 7 & $10^{-5}$ & 0.013 (0.001) & 32 & 7 & $10^{-5}$ & 0.014 (0.002) \\
 &  & $10^{-3}$ & 0.026 (0.007) &  &  & $10^{-3}$ & 0.016 (0.002) &  &  & $10^{-3}$ & 0.014 (0.001) \\
 &  & $10^{-1}$ & 0.985 (0.022) &  &  & $10^{-1}$ & 0.985 (0.023) &  &  & $10^{-1}$ & 0.985 (0.022) \\
    \hline
    \end{tabular}
\end{table}

\clearpage

\subsection*{Scenario 2}
\leavevmode

\begin{align*}
    &\textbf{Model 1}  : \\
    &f^\circ(X_1,X_2, X_3, X_4, X_5, t) = \\
    & \quad \exp \{-15 \cdot \text{tanh}(|X_1| - 1/2)^2 \cdot (t+1) \} \cdot \sin \left\lbrace 50 \cdot (\text{gaussian}(\text{tanh}(|X_1|)) - 1/2)^2 \right\rbrace \\
    & \quad + \log \left \lbrace 1 + (\text{ReLU}(|X_1|) + \text{tanh}(|X_2|+ |X_3|) + \text{gaussian}(|X_4| + |X_5|))^2 \cdot (t+1) \right \rbrace, \\
    &\textbf{Model 2} : \\
    &f^\circ(X_1,X_2, X_3, X_4, X_5, t) = \\
    & \quad \exp \{-15 \cdot \text{tanh}(|X_1| - 1/2)^2 \cdot (t+1) \} \cdot \sin \left\lbrace 50 \cdot (\text{gaussian}(\text{tanh}(|X_1|)) - 1/2)^2 \right\rbrace, \\
    & \quad + \text{gaussian}\left\lbrace -50 \cdot (\text{tanh}(|X_2| - 1/2) )^2 \cdot (t+1) \right\rbrace \cdot \cos \left\lbrace (\exp (\text{gaussian}(|X_2|)) - 1/2)^2 \right\rbrace \\
    & \quad + \left\lbrace |X_1| + |X_2| \cdot (t+1) \right\rbrace / \left\lbrace 1 + |X_3 + X_4 + X_5| \cdot (t+1)^2 \right\rbrace, \\
    &\textbf{Model 3} : \\
    & f^\circ(X_1,X_2, X_3, X_4, X_5, t) = \\
    & \quad \exp \{-15 \cdot \text{tanh}(|X_1| - 1/2)^2 \cdot (t+1) \} \cdot \sin \left\lbrace 50 \cdot (\text{gaussian}(\text{tanh}(|X_1|)) - 1/2)^2 \right\rbrace \\
    & \quad + \text{gaussian}\left\lbrace -50 \cdot (\text{tanh}(|X_2| - 1/2) )^2 \cdot (t+1) \right\rbrace \cdot \cos \left\lbrace (\exp (\text{gaussian}(|X_2|)) - 1/2)^2 \right\rbrace \\
    & \quad - \text{tanh} \left\lbrace - 80 \cdot (\text{tanh}(|X_3| - 1/2))^2 \cdot (t + 1) \right\rbrace  \cdot \sin \left\lbrace (\text{ReLU}(\exp(|X_3|)) - 1/2)^2 \right\rbrace \\
    & \quad + \exp\left\lbrace -1 /10 \cdot (|X_1|^2 + |X_2|^2 + |X_3|^2) \right\rbrace\cdot \left\lbrace 1 + (|X_4|^2 + |X_5|^2) \cdot (t+1) \right\rbrace,
\end{align*}
where $\text{tanh}(x) = (e^x - e^{-x})/(e^x + e^{-x}), \ \text{gaussian}(x) = e^{- x^2}, \ \text
{ReLU}(x) = \max \{x, 0\} \;\text{for}\; x \in \mathbb{R}$.

\begin{figure}[h]
    \centering
    \includegraphics[width=\textwidth]{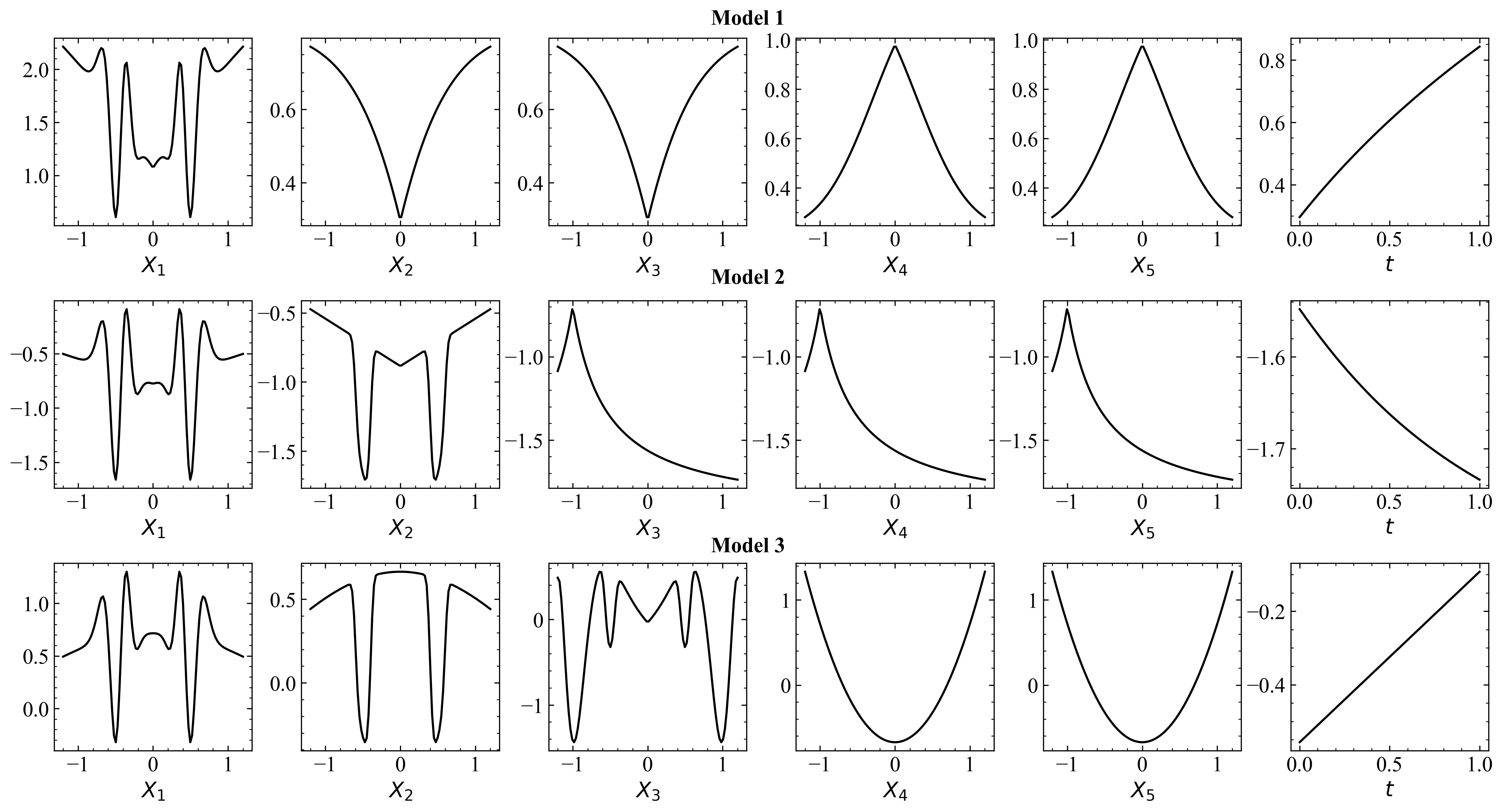}
    \caption{Plots of the functional form of $f(X_1, \ldots, X_5, t)$ for each of Model 1, 2, and 3 in Scenario 2, with $X_1, \ldots, X_5$, and $t$ as the horizontal axis.}
    \label{fig:Simulation function 3}
\end{figure}

\begin{figure}[h]
    \centering
    \includegraphics[width=\linewidth]{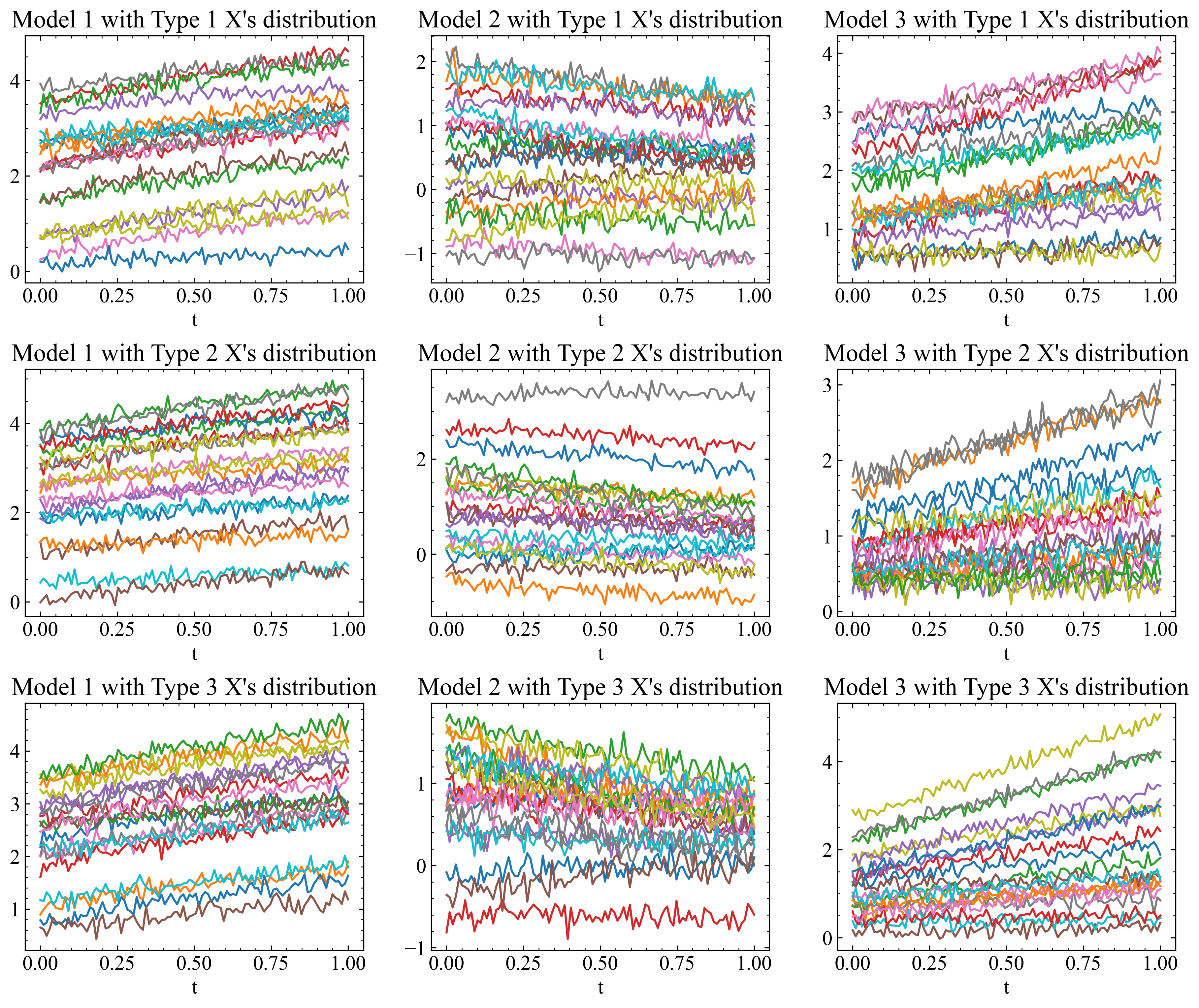}
    \caption{Sample curves of $Y(t)$ for the 9 settings which are nine different combinations of three models and three types of joint distributions of $(X_1, \ldots, X_5)$ in Scenario 2.}
    \label{fig:Simulation 3}
\end{figure}

\clearpage

\clearpage

\begin{table}[h]
    \centering
    \caption{Averages (and standard deviations) of the MISPEs from 50 replicates for all settings and methods in Scenario 2 ($N_{\text{train}} = 1000$).
    For the \textit{FOS-DNN}, we select $W = 32$, $L = 6$, and $\alpha = 10^{-5}$ for Model 1, $W = 32$, $L = 6$, and $\alpha = 10^{-5}$ for Model 2, $W = 32$, $L = 6$, and $\alpha = 10^{-5}$ for Model 3.
    For the \textit{FUA}, \textit{famm.nl}, and \textit{famm.ad}, we increase the number of basis functions by 10 compared to the default settings to address the high spatial inhomogeneity of the data. }
    \label{tab:12}
\begin{tabular}{|c|c|c|c|c|c|c|}
  \hline
   \textbf{X-type}  & \textbf{Model} & \textbf{FOS-DNN} &  \textbf{FUA} &  \textbf{famm.ad}  &  \textbf{fosr} \\
  \hline \hline 
   1  & 1 & 0.958 (0.110)  & \textbf{0.066 (0.088)}  & 0.428 (0.021) & 1.015 (0.022) \\
     & 2 & 0.701 (0.209)  & \textbf{0.088 (0.086)}  & 0.692 (0.029)  &  1.024 (0.031)  \\
     & 3 & 0.866 (0.115) & \textbf{0.193} (0.150) & 0.283 (0.017)  &  1.010 (0.028)  \\ \hline
   2  & 1 & 0.459 (0.029) & \textbf{0.072} (0.076) & 0.270 (0.012)  &  1.015 (0.021) \\
     & 2 & 0.540 (0.064) & \textbf{0.135 (0.128)} &  0.646 (0.031) & 0.991 (0.050) \\
     & 3 & 0.215 (0.022) & 0.186 (0.142)  & \textbf{0.126 (0.010)}  & 1.039 (0.067)  \\ \hline
   3    & 1 & 0.487 (0.020) & \textbf{0.068 (0.066)} & 0.270 (0.017) & 1.014 (0.020)  \\ 
       & 2 & 0.734 (0.087) & \textbf{0.177 (0.155)}  &  0.715 (0.033) & 1.022 (0.044)  \\ 
       & 3 & 0.300 (0.041) & \textbf{0.183 (0.135)} & 0.190 (0.014)  &  1.026 (0.063) \\ \hline 
\end{tabular} 
\end{table}

\clearpage

\begin{table}[t]
    \centering
    \caption{Averages (Standard deviations) of 3-fold Cross-Validation in Model 2 of Scenario 2 ($N_{\text{train}} = 1000$).}
    \label{tab:13}
    \begin{tabular}{|c|c|c|l||c|c|c|l||c|c|c|l|}
    \hline \hline
        W & L & $\alpha$ & mean (std) & W & L & $\alpha$ & mean (std) & W & L & $\alpha$ & mean (std) \\ \hline \hline
 &  & $10^{-9}$ & 1.050 (0.221) &  &  & $10^{-9}$ & 0.798 (0.043) &  &  & $10^{-9}$ & 0.589 (0.006) \\
   &   & $10^{-7}$ & 0.940 (0.257) &    &   & $10^{-7}$ & 0.738 (0.030) &    &   & $10^{-7}$ & 0.608 (0.003) \\
16   &  6 & $10^{-5}$ & 0.860 (0.121) &  32  &  6 & $10^{-5}$ & 0.767 (0.019) &  64  & 6  & $10^{-5}$ & 0.621 (0.045) \\
   &   & $10^{-3}$ & 0.520 (0.063) &    &   & $10^{-3}$ & 0.712 (0.057) &    &   & $10^{-3}$ & 0.555 (0.017) \\
   &   & $10^{-1}$ & 1.002 (0.075) &    &   & $10^{-1}$ & 1.002 (0.075) &    &   & $10^{-1}$ & 1.002 (0.075) \\ \hdashline
 &  & $10^{-9}$ & 0.978 (0.092) &  &  & $10^{-9}$ & 0.799 (0.017) &  &  & $10^{-9}$ & 0.527 (0.023) \\
   &   & $10^{-7}$ & 0.917 (0.020) &    &   & $10^{-7}$ & 0.786 (0.063) &    &   & $10^{-7}$ & 0.553 (0.035) \\
16   &  7 & $10^{-5}$ & 0.969 (0.210) &  32  & 7  & $10^{-5}$ & 0.677 (0.010) &  64  & 7  & $10^{-5}$ & 0.527 (0.029) \\
   &   & $10^{-3}$ & 0.483 (0.013) &    &   & $10^{-3}$ & 0.643 (0.040) &    &   & $10^{-3}$ & 0.543 (0.033) \\
   &   & $10^{-1}$ & 1.002 (0.075) &    &   & $10^{-1}$ & 1.002 (0.075) &    &   & $10^{-1}$ & 1.002 (0.075) \\ \hdashline
 &  & $10^{-9}$ & 0.933 (0.086) &  &  & $10^{-9}$ & 0.686 (0.046) &  &  & $10^{-9}$ & 0.532 (0.045) \\
   &   & $10^{-7}$ & 1.058 (0.231) &    &   & $10^{-7}$ & 0.741 (0.033) &    &   & $10^{-7}$ & 0.532 (0.027) \\
16   &  8 & $10^{-5}$ & 0.857 (0.021) &  32  & 8  & $10^{-5}$ & 0.688 (0.008) &  64  & 8  & $10^{-5}$ & 0.544 (0.027) \\
   &   & $10^{-3}$ & 0.583 (0.063) &    &   & $10^{-3}$ & 0.712 (0.103) &    &   & $10^{-3}$ & 0.538 (0.041) \\
   &   & $10^{-1}$ & 1.002 (0.075) &    &   & $10^{-1}$ & 1.002 (0.075) &    &   & $10^{-1}$ & 1.002 (0.075) \\
    \hline
    \end{tabular}
\end{table}

\begin{table}[t]
    \centering
    \caption{Averages (Standard deviations) of 3-fold Cross-Validation in Model 2 of Scenario 2 ($N_{\text{train}} = 5000$).}
    \label{tab:14}
    \begin{tabular}{|c|c|c|l||c|c|c|l||c|c|c|l|}
    \hline \hline
        W & L & $\alpha$ & mean (std) & W & L & $\alpha$ & mean (std) & W & L & $\alpha$ & mean (std) \\ \hline \hline
 &  & $10^{-9}$ & 0.041 (0.032) &  &  & $10^{-9}$ & 0.279 (0.080) &  &  & $10^{-9}$ & 0.339 (0.042) \\
 &  & $10^{-7}$ & 0.034 (0.018) &  &  & $10^{-7}$ & 0.378 (0.077) &  &  & $10^{-7}$ & 0.339 (0.066) \\
16 & 6 & $10^{-5}$ & 0.021 (0.004) & 32 & 6 & $10^{-5}$ & 0.225 (0.168) & 64 & 6 & $10^{-5}$ & 0.310 (0.022) \\
 &  & $10^{-3}$ & 0.055 (0.035) &  &  & $10^{-3}$ & 0.021 (0.001) &  &  & $10^{-3}$ & 0.020 (0.001) \\
 &  & $10^{-1}$ & 1.009 (0.020) &  &  & $10^{-1}$ & 1.009 (0.019) &  &  & $10^{-1}$ & 1.009 (0.019) \\ \hdashline
 &  & $10^{-9}$ & 0.190 (0.075) &  &  & $10^{-9}$ & 0.439 (0.061) &  &  & $10^{-9}$ & 0.323 (0.039) \\
 &  & $10^{-7}$ & 0.107 (0.065) &  &  & $10^{-7}$ & 0.278 (0.045) &  &  & $10^{-7}$ & 0.358 (0.015) \\
16 & 7 & $10^{-5}$ & 0.014 (0.001) & 32 & 7 & $10^{-5}$ & 0.325 (0.040) & 64 & 7 & $10^{-5}$ & 0.272 (0.049) \\
 &  & $10^{-3}$ & 0.023 (0.005) &  &  & $10^{-3}$ & 0.021 (0.004) &  &  & $10^{-3}$ & 0.018 (0.000) \\
 &  & $10^{-1}$ & 1.009 (0.019) &  &  & $10^{-1}$ & 1.009 (0.019) &  &  & $10^{-1}$ & 1.009 (0.020) \\ \hdashline
 &  & $10^{-9}$ & 0.176 (0.097) &  &  & $10^{-9}$ & 0.425 (0.051) &  &  & $10^{-9}$ & 0.316 (0.020) \\
 &  & $10^{-7}$ & 0.202 (0.025) &  &  & $10^{-7}$ & 0.362 (0.125) &  &  & $10^{-7}$ & 0.358 (0.036) \\
16 & 8 & $10^{-5}$ & 0.102 (0.107) & 32 & 8 & $10^{-5}$ & 0.389 (0.026) & 64 & 8 & $10^{-5}$ & 0.322 (0.021) \\
 &  & $10^{-3}$ & 0.023 (0.004) &  &  & $10^{-3}$ & 0.017 (0.000) &  &  & $10^{-3}$ & 0.018 (0.003) \\
 &  & $10^{-1}$ & 1.009 (0.019) &  &  & $10^{-1}$ & 1.009 (0.020) &  &  & $10^{-1}$ & 1.009 (0.020) \\
    \hline
    \end{tabular}
\end{table}

\clearpage

\subsection*{Scenario 3}
\leavevmode
\begin{align*}
    &\textbf{Model 1} : \\
    &f^\circ(X_1,\ldots, X_{10}, t) = \\
    & \exp \{-15 \cdot \text{tanh}(|X_1| - 1/2)^2 (t + 1)\} \cdot \sin \{50 \cdot (\text{gaussian}(\text{tanh}(|X_1|))- 1/2)^2\} \\
    &+ \left\lbrace\log (1 + (\text{ReLU}(|X_1| + \cdots + |X_4|) + \text{tanh}(|X_5| + \cdots + |X_7|) \right.\\
    & \hspace{4cm} + \left.\text{gaussian}(|X_8| + \cdots + |X_{10}|))^2 \right\rbrace \cdot (t + 1), \\
    &\textbf{Model 2} : \\
    &f^\circ(X_1,\ldots, X_{10}, t) = \\
    & \exp \{- 15 \cdot \text{tanh}(|X_1| - 1/2)^2 \cdot (t+1) \} \cdot \sin \{50 \cdot (\text{gaussian}(\text{tanh}(|X_1|))- 1/2)^2 \} \\
    &+ \text{gaussian}\{-50 \cdot \text{tanh}(|X_2|-1/2)^2 \cdot (t+1)\} \cdot \cos \{ (\exp(\text{gaussian}(X_2)) - 1/2)^2\} \\
    &+ \text{logit} \{-80 \cdot \text{tanh}(|X_3|-1/2)^2 \cdot (t+1)\} \cdot \sin \{5 \cdot (\text{ReLU}(\exp(|X_3|))-1/2)^2\} \\
    & + \left\lbrace |X_1|+ \cdots + |X_3| + (|X_4| + |X_5|) \cdot (t+1) \right\rbrace / \left\lbrace 1 + |X_6 + \cdots + X_{10}| \cdot (t+1)^2 \right\rbrace, \\
    &\textbf{Model 3} : \\
    &f^\circ(X_1,\ldots, X_{10}, t) = \\
    & \exp \{- 15 \cdot \text{tanh}(|X_1| - 1/2)^2 \cdot (t+1) \} \cdot \sin \{50 \cdot (\text{gaussian}(\text{tanh}(|X_1|)- 1/2))^2 \} \\
    &+ \text{gaussian}\{-50 \cdot \text{tanh}(|X_2|-1/2)^2 \cdot (t+1)\} \cdot \cos \{ (\exp(\text{gaussian}(X_2)) - 1/2)^2\} \\
    &+ \text{logit} \{-80 \cdot \text{tanh}(|X_3|-1/2)^2 \cdot (t+1)\} \cdot \sin \{5 \cdot (\text{ReLU}(\exp(|X_3|))-1/2)^2\} \\
    &- \text{gaussian}\{-50 \cdot \text{tanh}(|X_4|-1/2)^2 \cdot (t+1)\} \cdot \cos \{ (\exp(\text{gaussian}(X_4)) - 1/2)^2\} \\
    &- \exp \{- 30 \cdot \text{tanh}(|X_5| - 1/2)^2 \cdot (t+1) \} \cdot \sin \{50 \cdot (\text{gaussian}(\text{tanh}(|X_5|)- 1/2))^2 \} \\
    &+ \exp\left\lbrace -1 /10 \cdot (|X_1|^2 + \cdots + |X_5|^2) \right\rbrace\cdot \left\lbrace 1 + (|X_6|^2 + \cdots + |X_{10}|^2) \cdot (t+1) \right\rbrace.
\end{align*}

\begin{figure}[h]
    \centering
    \includegraphics[width=\textwidth]{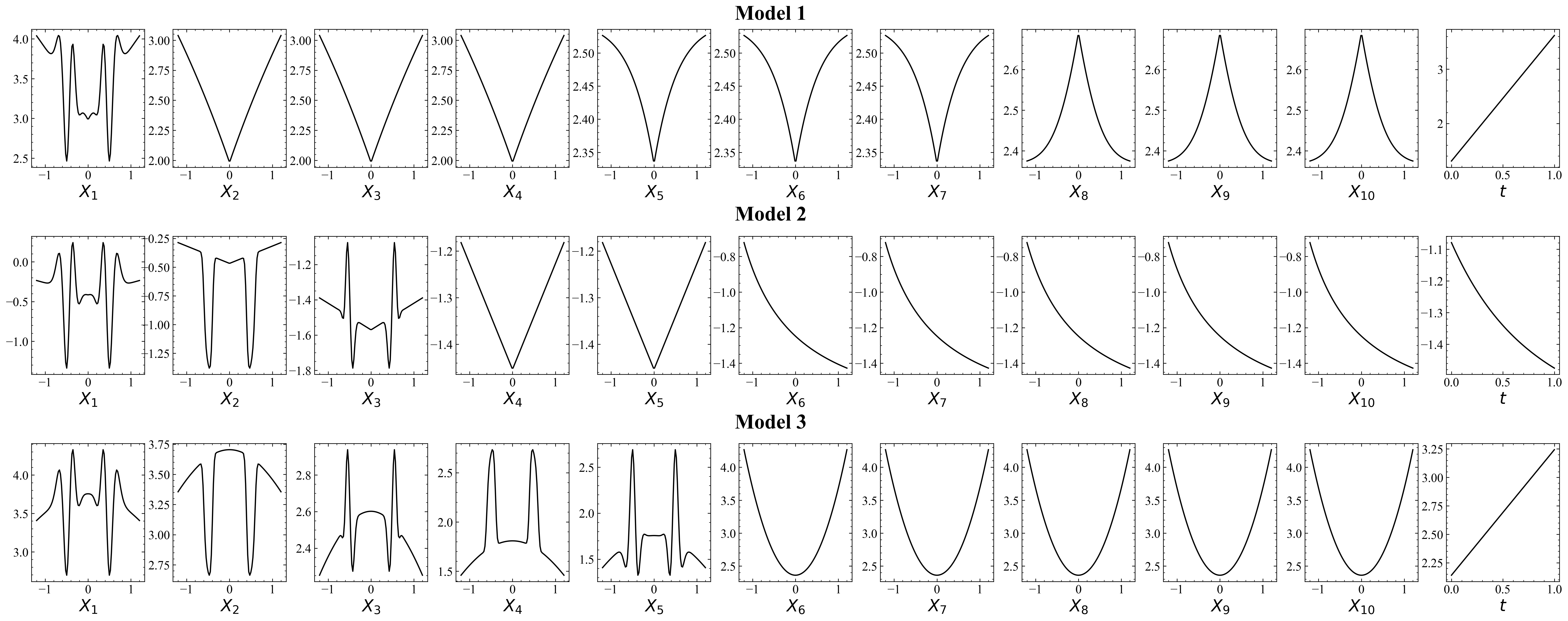}
    \caption{Plots of the functional form of $f(X_1, \ldots, X_{10}, t)$ for each of Model 1, 2, and 3 in Scenario 3, with $X_1, \ldots, X_{10}$, and $t$ as the horizontal axis.}
    \label{fig:Simulation function 4}
\end{figure}

\begin{figure}[h]
    \centering
    \includegraphics[width=\linewidth]{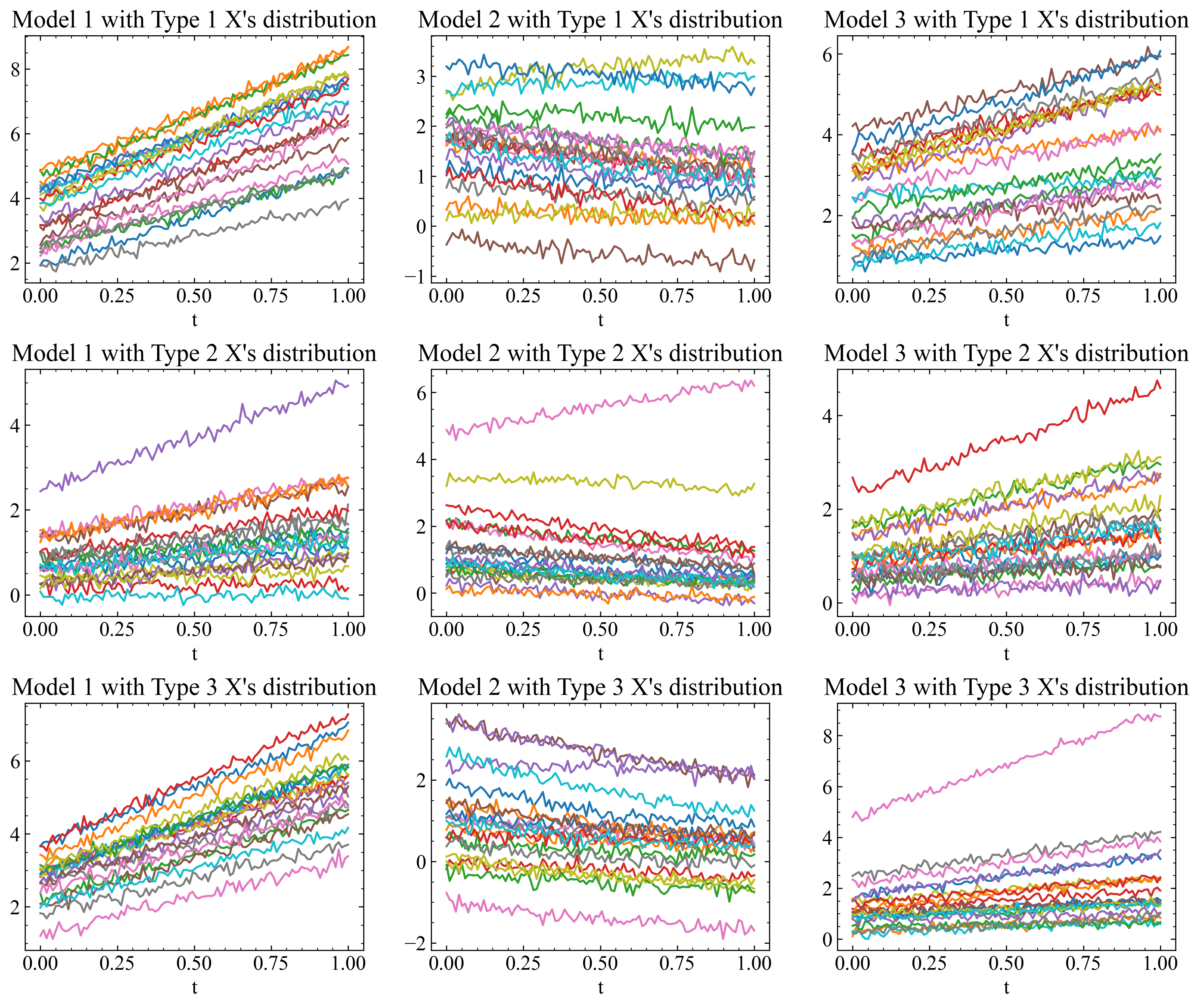}
    \caption{Sample curves of $Y(t)$ for the 9 settings which are nine different combinations of three models and three types of joint distributions of $(X_1, \ldots, X_{10})$ in Scenario 3.}
    \label{fig:Simulation 4}
\end{figure}

\clearpage

\begin{table}[h]
    \centering
    \caption{Averages (and standard deviations) of the MISPEs from 20 replicates for all settings and methods in Scenario 3 ($N_{\text{train}} = 5000$). 
    For the \textit{FOS-DNN}, we select $W = 32$, $L = 7$, and $\alpha = 10^{-3}$ for Model 1, $W = 32$, $L = 7$, and $\alpha = 10^{-3}$ for Model 2, $W = 32$, $L = 7$, and $\alpha = 10^{-3}$ for Model 3. 
    For the \textit{FUA}, \textit{famm.nl}, and \textit{famm.ad}, we increase the number of basis functions by 10 compared to the default settings to address the high spatial inhomogeneity of the data. }
    \label{tab:15}
\begin{tabular}{|c|c|c|c|c|c|c|}
  \hline
   \textbf{X-type}  & \textbf{Model} & \textbf{FOS-DNN} &  \textbf{FUA} &  \textbf{famm.ad}  &  \textbf{fosr} \\
  \hline \hline 
   1  & 1 &  0.174 (0.202)  & \textbf{0.062 (0.003)}  & 0.308 (0.013)  &  1.007 (0.043) \\
     & 2 &  \textbf{0.137 (0.137)}  & 0.191 (0.022)  & 0.789 (0.037)  &  1.027 (0.058)    \\
     & 3 & 0.515 (0.074) &  0.318 (0.024) & \textbf{0.279 (0.015)}  &  1.012 (0.033)  \\ \hline
   2  & 1 &  \textbf{0.093 (0.082)}  &  0.102 (0.041) & 0.249 (0.208) & 1.028 (0.047)   \\
     & 2 &  \textbf{0.105 (0.042)}  &  0.187 (0.011) & 0.778 (0.050) & 1.006 (0.073) \\
     & 3 & \textbf{0.058 (0.004)} &  0.103 (0.009)  & 0.108 (0.009)   &  1.037 (0.062) \\ \hline
   3    & 1  & \textbf{0.032 (0.042)}  & 0.074 (0.016)  & 0.128 (0.007)  & 1.028 (0.038)  \\ 
       & 2 & \textbf{0.268 (0.070)}  & 0.281 (0.031)  & 0.770 (0.040)  & 1.004 (0.052)  \\ 
       & 3 & \textbf{0.068 (0.004)} & 0.089 (0.009) &  0.182 (0.010)  & 1.012 (0.053)  \\ \hline 
\end{tabular} 
\end{table}

\clearpage

\begin{table}[t]
    \centering
    \caption{Averages (Standard deviations) of 3-fold Cross-Validation in Model 2 of Scenario 3 ($N_{\text{train}} = 5000$).}
    \label{tab:16}
    \begin{tabular}{|c|c|c|l||c|c|c|l||c|c|c|l|}
    \hline \hline
        W & L & $\alpha$ & mean (std) & W & L & $\alpha$ & mean (std) & W & L & $\alpha$ & mean (std) \\ \hline \hline
 &  & $10^{-9}$ & 0.232 (0.018) &  &  & $10^{-9}$ & 0.485 (0.012) &  &  & $10^{-9}$ & 0.350 (0.010) \\
 &  & $10^{-7}$ & 0.210 (0.044) &  &  & $10^{-7}$ & 0.552 (0.005) &  &  & $10^{-7}$ & 0.339 (0.002) \\
16 & 6 & $10^{-5}$ & 0.208 (0.037) & 32 & 6 & $10^{-5}$ & 0.469 (0.012) & 64 & 6 & $10^{-5}$ & 0.311 (0.006) \\
 &  & $10^{-3}$ & 0.129 (0.016) &  &  & $10^{-3}$ & 0.214 (0.040) &  &  & $10^{-3}$ & 0.257 (0.025) \\
 &  & $10^{-1}$ & 1.039 (0.039) &  &  & $10^{-1}$ & 1.040 (0.039) &  &  & $10^{-1}$ & 1.040 (0.039) \\ \hdashline
 &  & $10^{-9}$ & 0.289 (0.026) &  &  & $10^{-9}$ & 0.461 (0.014) &  &  & $10^{-9}$ & 0.330 (0.001) \\
 &  & $10^{-7}$ & 0.269 (0.014) &  &  & $10^{-7}$ & 0.481 (0.027) &  &  & $10^{-7}$ & 0.327 (0.007) \\
16 & 7 & $10^{-5}$ & 0.225 (0.046) & 32 & 7 & $10^{-5}$ & 0.445 (0.016) & 64 & 7 & $10^{-5}$ & 0.311 (0.005) \\
 &  & $10^{-3}$ & 0.112 (0.003) &  &  & $10^{-3}$ & 0.206 (0.012) &  &  & $10^{-3}$ & 0.252 (0.020) \\
 &  & $10^{-1}$ & 1.039 (0.038) &  &  & $10^{-1}$ & 1.039 (0.039) &  &  & $10^{-1}$ & 1.039 (0.038) \\ \hdashline
 &  & $10^{-9}$ & 0.243 (0.019) &  &  & $10^{-9}$ & 0.458 (0.007) &  &  & $10^{-9}$ & 0.309 (0.017) \\
 &  & $10^{-7}$ & 0.269 (0.014) &  &  & $10^{-7}$ & 0.445 (0.007) &  &  & $10^{-7}$ & 0.320 (0.011) \\
16 & 8 & $10^{-5}$ & 0.283 (0.028) & 32 & 8 & $10^{-5}$ & 0.466 (0.017) & 64 & 8 & $10^{-5}$ & 0.303 (0.009) \\
 &  & $10^{-3}$ & 0.117 (0.004) &  &  & $10^{-3}$ & 0.262 (0.032) &  &  & $10^{-3}$ & 0.293 (0.016) \\
 &  & $10^{-1}$ & 1.039 (0.039) &  &  & $10^{-1}$ & 1.039 (0.038) &  &  & $10^{-1}$ & 1.039 (0.038) \\
    \hline
    \end{tabular}
\end{table}

\begin{table}[t]
    \centering
    \caption{Averages (Standard deviations) of 3-fold Cross-Validation in Model 2 of Scenario 3 ($N_{\text{train}} = 10000$).}
    \label{tab:17}
    \begin{tabular}{|c|c|c|l||c|c|c|l||c|c|c|l|}
    \hline \hline
        W & L & $\alpha$ & mean (std) & W & L & $\alpha$ & mean (std) & W & L & $\alpha$ & mean (std) \\ \hline \hline
 &  & $10^{-9}$ & 0.139 (0.023) &  &  & $10^{-9}$ & 0.222 (0.059) &  &  & $10^{-9}$ & 0.355 (0.010) \\
 &  & $10^{-7}$ & 0.137 (0.027) &  &  & $10^{-7}$ & 0.271 (0.031) &  &  & $10^{-7}$ & 0.338 (0.030) \\
16 & 6 & $10^{-5}$ & 0.099 (0.058) & 32 & 6 & $10^{-5}$ & 0.268 (0.057) & 64 & 6 & $10^{-5}$ & 0.310 (0.008) \\
 &  & $10^{-3}$ & 0.109 (0.025) &  &  & $10^{-3}$ & 0.087 (0.039) &  &  & $10^{-3}$ & 0.156 (0.011) \\
 &  & $10^{-1}$ & 1.020 (0.055) &  &  & $10^{-1}$ & 1.019 (0.054) &  &  & $10^{-1}$ & 1.019 (0.054) \\ \hdashline
 &  & $10^{-9}$ & 0.106 (0.005) &  &  & $10^{-9}$ & 0.356 (0.015) &  &  & $10^{-9}$ & 0.308 (0.006) \\
 &  & $10^{-7}$ & 0.115 (0.003) &  &  & $10^{-7}$ & 0.349 (0.003) &  &  & $10^{-7}$ & 0.308 (0.008) \\
16 & 7 & $10^{-5}$ & 0.136 (0.032) & 32 & 7 & $10^{-5}$ & 0.294 (0.037) & 64 & 7 & $10^{-5}$ & 0.289 (0.028) \\
 &  & $10^{-3}$ & 0.100 (0.003) &  &  & $10^{-3}$ & 0.068 (0.035) &  &  & $10^{-3}$ & 0.157 (0.011) \\
 &  & $10^{-1}$ & 1.020 (0.055) &  &  & $10^{-1}$ & 1.019 (0.054) &  &  & $10^{-1}$ & 1.020 (0.054) \\ \hdashline
 &  & $10^{-9}$ & 0.117 (0.010) &  &  & $10^{-9}$ & 0.310 (0.055) &  &  & $10^{-9}$ & 0.299 (0.010) \\
 &  & $10^{-7}$ & 0.149 (0.028) &  &  & $10^{-7}$ & 0.337 (0.009) &  &  & $10^{-7}$ & 0.299 (0.017) \\
16 & 8 & $10^{-5}$ & 0.121 (0.003) & 32 & 8 & $10^{-5}$ & 0.308 (0.039) & 64 & 8 & $10^{-5}$ & 0.282 (0.017) \\
 &  & $10^{-3}$ & 0.097 (0.007) &  &  & $10^{-3}$ & 0.027 (0.002) &  &  & $10^{-3}$ & 0.186 (0.001) \\
 &  & $10^{-1}$ & 1.019 (0.054) &  &  & $10^{-1}$ & 1.019 (0.053) &  &  & $10^{-1}$ & 1.064 (0.038) \\
    \hline
    \end{tabular}
\end{table}

\clearpage

\subsection*{GRF data}
\leavevmode
\begin{table}[h]
    \centering
    \caption{Averages (Standard deviations) of prediction errors from 10 replicates for GRF data.}
    \label{tab:18}
    \begin{tabular}{|c|c|c|l||c|c|c|l||c|c|c|l|}
    \hline \hline
        W & L & $\alpha$ & mean (std) & W & L & $\alpha$ & mean (std) \\ \hline \hline
 &  & $10^{-9}$ & 0.026 (0.000) &  &  & $10^{-9}$ & 0.024 (0.000)  \\
32 & 7 & $10^{-7}$ & 0.028 (0.000) & 64 & 7 & $10^{-7}$ & 0.024 (0.000)  \\
 &  & $10^{-5}$ & 0.027 (0.000) &  &  & $10^{-5}$ & 0.023 (0.000) \\ \hdashline
 &  & $10^{-9}$ & 0.027 (0.000) &  &  & $10^{-9}$ & 0.024 (0.000) \\
32 & 8 & $10^{-7}$ & 0.027 (0.000) & 64 & 8 & $10^{-7}$ & 0.023 (0.000)  \\
 &  & $10^{-5}$ & 0.028 (0.000) &  &  & $10^{-5}$ & 0.025 (0.000)  \\ \hdashline
 &  & $10^{-9}$ & 0.026 (0.000) &  &  & $10^{-9}$ & 0.024 (0.000)  \\
32 & 9 & $10^{-7}$ & 0.025 (0.000) & 64 & 9 & $10^{-7}$ & 0.025 (0.000) \\
 &  & $10^{-5}$ & 0.029 (0.000) &  &  & $10^{-5}$ & 0.024 (0.000)  \\
    \hline
    \end{tabular}
\end{table}

\begin{table}[h]
    \centering
    \caption{Averages (Standard deviations) of FUA prediction errors according to the number of basis functions.}
    \label{tab:19}
    \begin{tabular}{|c|c|c|c|c|} \hline \hline
     \# of basis & $10$ & $ 15 $ & $ 20 $ & $ 30 $ \\ \hline \hline
     FUA   & 0.033 (0.001) & 0.030 (0.001) & 0.028 (0.001) & 0.027 (0.001) \\
     \hline
    \end{tabular}
\end{table}

\begin{sidewaystable}[htbp]
\centering
\caption{Summary of predictor variables and their descriptions}
\label{tab:variables_summary}
\resizebox{0.95\textwidth}{!}{
\begin{tabular}{lll}
\hline
\textbf{Variable Name} & \textbf{Description} & \textbf{Value} \\ \hline
\text{Age} & Age (years) & $14.51 \pm 2.20$ \\
\text{Height} & Height (cm) & $165.57 \pm 8.82$ \\
\text{Weight} & Weight (kg) & $56.33 \pm 9.06$ \\
\text{soccer\_use\_foot} & Whether the tested foot is used to kick a soccer ball (Yes=1, No=0, Neutral=0.5)  & 1 : 1224, 0 : 1208, 0.5 : 33 \\
\text{jump\_use\_foot} & Whether the tested foot is used to jump upwards (Yes=1, No=0, Neutral=0.5) & 1 : 1186, 0 : 1213, 0.5 : 66 \\
\text{fall\_use\_foot} & Whether the tested foot is used as a reflex to prevent falling (Yes=1, No=0, Neutral=0.5) & 1 : 1215, 0 : 1207, 0.5 : 43 \\
\text{landing\_use\_foot} & Whether the tested foot feels stable when landing (Yes=1, No=0, Neutral=0.5) & 1 : 1214, 0 : 1168, 0.5 : 83 \\
\text{FL} & Foot length (cm) & 24.64 $\pm$ 1.61 \\
\text{FW} & Foot width (cm) & 9.71 $\pm$ 0.71 \\
\text{Sports\_career} & Sports career (years) & 5.38 $\pm$ 3.07 \\
\text{ACL\_counts} & Number of ACL injuries & 1 : 35, 0 : 2430 \\
\text{Injure\_exp} & Whether the tested foot has been injured (Yes=1, No=0) & 1 : 796, 0 : 1669 \\
\text{Contact} & Whether the injury was caused by contact (Yes=1, No=0) & 1 : 111, 0 : 2354 \\
\text{NonContact} & Whether the injury was caused by non-contact (Yes=1, No=0) & 1 : 185, 0 : 2280 \\
\text{Sprain} & Whether the injury was caused by a sprain (Yes=1, No=0) & 1 : 471, 0 : 1994 \\
\text{Since\_injure\_3m} & Whether the injury occurred within 3 months before the experiment (Yes=1, No=0) & 1 : 152, 0 : 2313 \\
\text{Since\_injure\_6m} & Whether the injury occurred 3-6 months before the experiment (Yes=1, No=0) & 1 : 136, 0 : 2329 \\
\text{Since\_injure\_1y} & Whether the injury occurred 6 months-1 year before the experiment (Yes=1, No=0) & 1 : 275, 0 : 2190 \\
\text{Since\_injure\_2y} & Whether the injury occurred 1-2 years before the experiment (Yes=1, No=0) & 1 : 107, 0 : 2358 \\
\text{Since\_injure\_over2y} & Whether the injury occurred more than 2 years before the experiment (Yes=1, No=0) & 1 : 120, 0 : 2345 \\
\text{Recover\_1w} & Whether recovery occurred within 1 week (Yes=1, No=0) & 1 : 338, 0 : 2127 \\
\text{Recover\_1m} & Whether recovery occurred within 1 week-1 month (Yes=1, No=0) & 1 : 389, 0 : 2076 \\
\text{Recover\_over1m} & Whether recovery took more than 1 month (Yes=1, No=0) & 1 : 54, 0 : 2411 \\
\text{Recover\_3m} & Whether recovery occurred within 1-3 months (Yes=1, No=0) & 1 : 4, 0 : 2461 \\
\text{Recover\_1y} & Whether recovery occurred within 3 months-1 year (Yes=1, No=0) & 1 : 5, 0 : 2460 \\ \hline
\end{tabular}}
\end{sidewaystable}

\newpage

\section{Note on Theorem~3 of \cite{luo2023nonlinear}} \label{sec: Appendix D}

This appendix aims to clarify certain theoretical aspects of the proof of Theorem 3 presented in \cite{luo2023nonlinear}.
In Theorem~3 of \cite{luo2023nonlinear}, 
they claim that the convergence rate of the FUA estimator is the parametric rate $O_p (n^{- 1/2})$.
However, in the context of nonparametric estimation, this convergence rate is unrealistic without any strong assumptions.
Upon careful examination, we identified discrepancies that may affect the conclusions drawn from this proof.
In the proof of Theorem 3 in \cite{luo2023nonlinear}, specifically in Web Appendix A.2, it is claimed that the following expression holds by the classical central limit theorem:
\[
    \frac{1}{n} \sum_{l=1}^n \left\lbrack \|\mathcal{H}_t(\boldsymbol{X}_l) - \hat{\mathcal{H}}_t(\boldsymbol{X}_l)\|_{L^2}^2 - m_{\mathcal{H} - \hat{\mathcal{H}}} \right\rbrack = O_p (n^{- 1/2}),
\]
where $m_{\mathcal{H} - \hat{\mathcal{H}}} = E_{X_{\text{new}}} \left\lbrack \|\mathcal{H}_t (\boldsymbol{X}_{\text{new}}) - \hat{\mathcal{H}}_t (\boldsymbol{X}_{\text{new}}) \|_{L^2}^2 \right\rbrack$. 
However, since the estimator $\hat{\mathcal{H}}$ depends on the training data $\{\boldsymbol{X}_l , Y_l\}_{l=1}^n$, 
the classical central limit theorem cannot be applied. 
Therefore, the convergence rate of Theorem 3 in \cite{luo2023nonlinear} is not achieved without additional strong assumptions.

\end{appendix}

\bibliographystyle{imsart-nameyear}
\bibliography{paper-ref}

\end{document}